\documentclass[10pt]{article}

\usepackage[english]{babel}
\usepackage[utf8]{inputenc}
\usepackage[T1]{fontenc}
\usepackage[letterpaper,top=1in,bottom=1in,left=1in,right=1in]{geometry}
\usepackage{booktabs}
\usepackage{longtable}
\usepackage{float}
\usepackage{amsmath,amssymb,amsthm,mathrsfs}
\usepackage{mathtools}
\usepackage{graphicx}
\usepackage[dvipsnames]{xcolor}
\usepackage{enumitem}
\usepackage{tikz}
\usetikzlibrary{positioning, shapes.geometric, calc}
\usepackage{algorithm}
\usepackage[noend]{algpseudocode}
\usepackage[authoryear,round]{natbib}
\usepackage[
  colorlinks=true,
  linkcolor=Violet,
  citecolor=Violet,
  urlcolor=Violet
]{hyperref}
\hypersetup{
  pdftitle={
    Towards Scaling Reinforcement Learning to Massive Populations:
    Learning Mean-Field Representations
  },
  pdfauthor={
    Aditya Makkar, Benjamin Unger, Jeongyeol Kwon,
    Mathieu Lauri\`ere, Eugene Vinitsky, Yonathan Efroni
  },
  pdfsubject={Offline mean-field reinforcement learning},
  pdfkeywords={
    reinforcement learning, mean-field games,
    offline reinforcement learning, representation learning
  },
  colorlinks=true,
  linkcolor=Violet,
  citecolor=Violet,
  urlcolor=Violet
}
\usepackage{mlmodern}

\newtheorem{theorem}{Theorem}
\newtheorem{corollary}{Corollary}
\newtheorem{lemma}{Lemma}
\newtheorem{proposition}{Proposition}
\theoremstyle{definition}

\newtheorem{remark}{Remark}
\theoremstyle{plain}
\newtheorem{assumption}{Assumption}

\newcommand{\EE}{\mathbb{E}}
\newcommand{\NN}{\mathbb{N}}
\newcommand{\PP}{\mathbb{P}}
\newcommand{\RR}{\mathbb{R}}
\newcommand{\cA}{\mathcal{A}}
\newcommand{\cB}{\mathcal{B}}
\newcommand{\cC}{\mathcal{C}}
\newcommand{\cL}{\mathcal{L}}
\newcommand{\cR}{\mathcal{R}}

\newcommand{\cX}{\mathcal{X}}

\newcommand{\eopt}{\varepsilon_{\mathrm{opt}}}
\newcommand{\fp}{\mathfrak{p}}

\newcommand{\eps}{\varepsilon}

\newcommand{\lr}[1]{\left( #1 \right)}

\providecommand{\coloneq}{\coloneqq}
\DeclareMathOperator{\br}{BR}
\DeclareMathOperator{\gap}{Gap}
\newcommand{\tv}{D_{\mathrm{TV}}}
\newcommand{\kl}{D_{\mathrm{KL}}}
\newcommand{\hel}{D_{\mathrm{H}}}
\DeclareMathOperator*{\argmax}{arg\,max}

\title{Towards Scaling Reinforcement Learning to Massive Populations:\\Learning Mean-Field Representations}
\author{
Aditya Makkar\textsuperscript{1} \quad
Benjamin Unger\textsuperscript{2} \quad
Jeongyeol Kwon\textsuperscript{3}\\
Mathieu Lauri\`ere\textsuperscript{1} \quad
Eugene Vinitsky\textsuperscript{1} \quad
Yonathan Efroni\textsuperscript{4}\\[0.5em]
\small \textsuperscript{1}New York University \quad
\small \textsuperscript{2}ETH Zurich \quad
\small \textsuperscript{3}Meta Platforms Inc. \quad
\small \textsuperscript{4}Tel Aviv University
}
\date{}

\begin{document}
\maketitle

\begin{abstract}
Modern multi-agent systems are increasingly deployed at scale over large populations of agents in settings such as ad-auctions, traffic routing, and recommendation systems.
The dominant approach in such settings is to optimize each agent's policy independently, treating the other agents as part of a fixed single-agent environment rather than modeling the population dynamics.
In many large-population systems, the dynamics depend on an aggregate summary of the population rather than the identity of any individual. Mean-field RL exploits such structure, providing a principled framework that models each agent's environment as an explicit function of the population distribution. However, in large state-action spaces or high-dimensional control problems, modeling the population distribution is itself intractable. \emph{How can we design a scalable framework for high-dimensional control problems with large populations?}
This work explores this question from the perspective of representation learning. We introduce a mean-field RL framework in which the rewards and transition dynamics depend on the population only through an unknown low-dimensional aggregate statistic. We then study this framework in the offline setting and design a provable approach that learns a near-optimal policy by learning a low-dimensional representation.
Motivated by real-life supply-chain optimization problems, we design a one-step routing game to test the hypothesis that learning a low-dimensional population representation improves reward prediction and Nash gap estimation relative to baselines that don't exploit this structure.
We show that under a fixed neural-network parameter count and optimization budget, learning a low-dimensional population representation improves reward prediction and the equilibrium quality of the resulting policies.
\end{abstract}

\section{Introduction}
\label{sec:intro}

Many modern decision systems involve large populations of strategic agents, from traffic routing and power-grid markets to recommendation and resource-allocation platforms.
Directly modeling all agents jointly is statistically and computationally intractable, since the joint state-action space grows exponentially with the population size.
Mean-field games (MFGs; \citealp{huang2006large,lasry2007mean,carmona2018probabilistic}) provide a tractable approximation for such large-population multi-agent systems by replacing the interactions among finitely many agents with the interaction of a representative agent with a limiting population law, called the mean-field distribution.
This paradigm has also led to learning-based approaches for large population games, ranging from adaptive control and approximate dynamic programming to reinforcement learning (RL) methods \citep{kizilkale2013meanfield,yin2014learningMFG,yang2018meanfieldmarl,subramanian2019reinforcement}; see \citet{lauriere2022learning} for a survey, as well as the references therein.

Mean-field models make large-population games tractable by replacing the joint configuration of all agents with the mean-field distribution.
However, this distribution lives in a \((|\mathcal X||\mathcal A|-1)\)-dimensional simplex, where $|\mathcal X|$ and $|\mathcal A|$ are the cardinalities of the state and action spaces, respectively.
Consequently, representing this distribution and learning how rewards and transitions vary over it becomes statistically and computationally prohibitive in large state-action spaces.
The key observation motivating this work is that the rewards and transitions often depend on the mean-field distribution only through a much lower-dimensional statistic (see Figure~\ref{fig:representation_illustration}).
This raises the central question: \textit{How can we exploit such structure to design efficient learning algorithms that scale to high-dimensional decision problems with large numbers of agents?}

In this work, we study this question through the lens of representation learning in MFGs. We formalize a framework for learning low-dimensional representations and focus on the offline setting, where the representation must be learned from logged representative-agent trajectories together with finite-population snapshots. The offline equilibrium problem presents a coupled statistical challenge. Logged data identify rewards and transition dynamics only along behavior-induced population flows. However, a learned policy induces a different population flow, while unilateral deviations generate different representative-agent occupancy measures. Consequently, the learner must construct representations and models that generalize across both population flows and deviations, and then plan in a way that guarantees low exploitability under partial coverage.

Our analysis is closest to \citet{zhang2023offline}, who study offline equilibrium learning in Markov games.
Their algorithm constructs interval estimates of policy values by evaluating possible responses to a policy optimistically and evaluating the policy itself pessimistically, then minimizing the resulting upper confidence bound on the equilibrium gap.
Their guarantee is adaptive because, instead of requiring coverage of every unilateral deviation from the policy being evaluated, it requires coverage of that policy itself and, for each player, of an approximate response whose coverage is traded off against its best-response suboptimality.
We use the same interval-certification idea for mean-field RL, but coverage must now transfer across representative-agent occupancies, population flows, and learned representation paths; because the learner sees only finite samples from each logged population rather than the exact population law, the guarantee also includes an error term for estimating the mean-field representation.

\begin{figure}[t]
\centering
\begin{tikzpicture}[
  x=1cm,
  y=1cm,
  font=\scriptsize,
  panel/.style={draw=black!22, fill=black!1, rounded corners=3pt,
    line width=0.55pt},
  netnode/.style={circle, draw=RoyalBlue!70!black, fill=white,
    minimum size=3.8mm, inner sep=0pt, line width=0.5pt},
  route node/.style={netnode, draw=Orange!90!black, fill=Orange!82,
    line width=0.7pt},
  mass edge/.style={draw=RoyalBlue!62, line cap=round},
  focal edge/.style={->, >=stealth, draw=Orange!92!black,
    line width=1.35pt, line cap=round},
  flow arrow/.style={->, >=stealth, draw=black!65, line width=1.0pt},
  down arrow/.style={->, >=stealth, draw=black!88, line width=1.35pt},
  concept box/.style={draw=black!20, fill=white, rounded corners=2pt,
    text width=2.55cm, inner sep=4pt, align=center}
]
\path[use as bounding box] (0,0) rectangle (15.9,5.15);
\path[panel, fill=ForestGreen!3] (0.05,0.10) rectangle (3.65,5.05);
\path[panel, fill=ForestGreen!3] (4.20,0.10) rectangle (11.85,5.05);
\path[panel, fill=ForestGreen!3] (12.45,0.10) rectangle (15.85,5.05);

\node[font=\bfseries\small, text=black!82] at (1.85,4.70)
  {Full population law \(\nu\)};
\node[text=black!65] at (1.85,4.30)
  {\(|\cX| \times |\cA| = 25\times16\) entries};
\foreach \col in {1,...,16}
  \foreach \row in {1,...,25} {
    \pgfmathtruncatemacro{\originidx}{floor((\row-1)/5)+1}
    \pgfmathtruncatemacro{\destidx}{mod(\row-1,5)+1}
    \pgfmathtruncatemacro{\uidx}{floor((\col-1)/4)+1}
    \pgfmathtruncatemacro{\vidx}{mod(\col-1,4)+1}
    \pgfmathtruncatemacro{\oubias}{mod(
      17*\originidx+31*\uidx+11*\originidx*\uidx
      +7*\originidx*\originidx+5*\uidx*\uidx,23)}
    \pgfmathtruncatemacro{\uvbias}{mod(
      13*\uidx+29*\vidx+17*\uidx*\vidx
      +3*\uidx*\uidx+11*\vidx*\vidx,23)}
    \pgfmathtruncatemacro{\vdbias}{mod(
      19*\vidx+23*\destidx+7*\vidx*\destidx
      +5*\vidx*\vidx+13*\destidx*\destidx,23)}
    \pgfmathtruncatemacro{\cellnoise}{mod(
      31*\row+17*\col+7*\row*\col
      +11*\row*\row+13*\col*\col,17)}
    \pgfmathtruncatemacro{\tone}{6+floor(
      (4*\oubias+4*\uvbias+4*\vdbias+\cellnoise)/5)}
    \fill[RoyalBlue!\tone]
      ({1.05+0.10*(\col-1)},{1.45+0.09*(\row-1)})
      rectangle ++(0.10,0.09);
  }
\draw[black!35, line width=0.4pt] (1.05,1.45) rectangle (2.65,3.70);
\foreach \k in {1,...,15}
  \draw[white, opacity=0.88, line width=0.12pt]
    ({1.05+0.10*\k},1.45) -- ({1.05+0.10*\k},3.70);
\foreach \k in {1,...,24}
  \draw[white, opacity=0.88, line width=0.12pt]
    (1.05,{1.45+0.09*\k}) -- (2.65,{1.45+0.09*\k});
\node[text=black!70] at (1.85,1.12) {\(16\) routes};
\node[text=black!70, rotate=90] at (0.73,2.575) {\(25\) OD pairs};
\node[draw=RoyalBlue!30, fill=white, rounded corners=2pt,
  inner xsep=6pt, inner ysep=3pt, text=RoyalBlue!75!black, align=center]
  at (1.85,0.56) {darker \(=\) more\\population mass};

\draw[flow arrow] (3.68,2.55) -- (4.17,2.55);

\node[font=\bfseries\small, text=black!82] at (7.90,4.72)
  {Edge-mass representation: 56 coordinates};
\node[font=\small, text=black!72] at (7.90,4.32)
  {\(m^\star=\EE_{(X,A)\sim\nu}[f^\star(X,A)]\)};
\node[text=black!62] at (4.67,3.90) {Origins};
\node[text=black!62] at (6.55,3.90) {\(U\)};
\node[text=black!62] at (9.05,3.90) {\(V\)};
\node[text=black!62] at (11.38,3.90) {Destinations};

\foreach \i/\yy in {1/1.23,2/1.80,3/2.37,4/2.94,5/3.51}
  \coordinate (Oc\i) at (4.67,\yy);
\foreach \u/\yy in {1/1.42,2/2.10,3/2.78,4/3.46}
  \coordinate (Uc\u) at (6.55,\yy);
\foreach \v/\yy in {1/1.42,2/2.10,3/2.78,4/3.46}
  \coordinate (Vc\v) at (9.05,\yy);
\foreach \j/\yy in {1/1.23,2/1.80,3/2.37,4/2.94,5/3.51}
  \coordinate (Dc\j) at (11.38,\yy);

\foreach \i/\u/\w in {
    1/1/0.18, 1/2/0.44, 1/3/0.55, 1/4/0.64,
    2/1/0.24, 2/2/0.24, 2/3/0.60, 2/4/0.43,
    3/1/0.62, 3/2/0.36, 3/3/0.44, 3/4/0.52,
    4/1/0.29, 4/2/0.28, 4/3/0.59, 4/4/0.41,
    5/1/0.25, 5/2/0.48, 5/3/0.56, 5/4/0.59}
  \draw[mass edge, line width=\w pt] (Oc\i) -- (Uc\u);
\foreach \u/\v/\w in {
    1/1/0.51, 1/2/0.65, 1/3/0.30, 1/4/0.55,
    2/1/0.97, 2/2/0.31, 2/3/0.41, 2/4/0.52,
    3/1/0.97, 3/2/0.79, 3/3/0.75, 3/4/0.65,
    4/1/0.55, 4/2/0.84, 4/3/0.62, 4/4/1.00}
  \draw[mass edge, line width=\w pt] (Uc\u) -- (Vc\v);
\foreach \v/\j/\w in {
    1/1/0.68, 1/2/0.69, 1/3/0.26, 1/4/0.37, 1/5/0.58,
    2/1/0.59, 2/2/0.26, 2/3/0.47, 2/4/0.27, 2/5/0.59,
    3/1/0.59, 3/2/0.39, 3/3/0.27, 3/4/0.20, 3/5/0.20,
    4/1/0.52, 4/2/0.47, 4/3/0.50, 4/4/0.58, 4/5/0.24}
  \draw[mass edge, line width=\w pt] (Vc\v) -- (Dc\j);

\draw[focal edge] (Oc3) -- (Uc2);
\draw[focal edge] (Uc2) -- (Vc3);
\draw[focal edge] (Vc3) -- (Dc4);
\foreach \i in {1,...,5} \node[netnode] at (Oc\i) {};
\foreach \u in {1,...,4} \node[netnode] at (Uc\u) {};
\foreach \v in {1,...,4} \node[netnode] at (Vc\v) {};
\foreach \j in {1,...,5} \node[netnode] at (Dc\j) {};
\node[route node] at (Oc3) {};
\node[route node] at (Uc2) {};
\node[route node] at (Vc3) {};
\node[route node] at (Dc4) {};
\node[fill=ForestGreen!3, inner sep=1pt, text=black!62] at (5.61,0.82)
  {\(5\times4=20\) edges};
\node[fill=ForestGreen!3, inner sep=1pt, text=black!62] at (7.80,0.82)
  {\(4\times4=16\) edges};
\node[fill=ForestGreen!3, inner sep=1pt, text=black!62] at (10.215,0.82)
  {\(4\times5=20\) edges};
\draw[mass edge, line width=0.90pt] (4.45,0.36) -- (5.15,0.36);
\node[anchor=west, text=black!70] at (5.27,0.36)
  {width \(=\) population mass};
\draw[Orange!92!black, line width=1.35pt, line cap=round]
  (8.65,0.36) -- (9.35,0.36);
\node[anchor=west, text=black!70] at (9.47,0.36)
  {chosen route};

\draw[flow arrow] (11.88,2.55) -- (12.42,2.55);
\node[font=\bfseries\small, text=black!82] at (14.15,4.70)
  {Congestion reward};
\node[concept box, fill=RoyalBlue!7] (massbox) at (14.15,3.90)
  {\textbf{Edge mass}\\\(m_e\): population mass on edge \(e\)};
\node[concept box, fill=RoyalBlue!7] (latencybox) at (14.15,2.30)
  {\textbf{Edge latency}\\\(\ell_e(m_e)\): latency on edge \(e\),\\determined by \(m_e\)};
\node[concept box, fill=RoyalBlue!7] (rewardbox) at (14.15,0.75)
  {\textbf{Route reward}\\higher total latency\\\(\Longrightarrow\) lower reward};
\draw[down arrow, shorten <=1.5pt, shorten >=1.5pt]
  (massbox.south) -- (latencybox.north);
\draw[down arrow, shorten <=1.5pt, shorten >=1.5pt]
  (latencybox.south) -- (rewardbox.north);
\end{tikzpicture}
\caption{
Illustration of the one-step routing game and its mean-field representation.
There are 5 origins, 5 destinations and two intermediate layers containing 4 nodes each.
The state fixes an origin--destination pair, while the action selects one node in each intermediate layer, producing a three-edge route (orange).
Therefore, \(|\cX|\times|\cA| = 25 \times 16 = 400.\)
But since there are only 20+16+20 = 56 directed edges, the mean-field representation dimension can be 56.
Here, \(f^\star(X,A) \in \{0,1\}^{56}\) records which network edges the corresponding route uses, and is unknown to the learner.
Blue edge widths are proportional to masses aggregated from the illustrated population law.
The population mass \(m_e\) on edge \(e\) determines its latency \(\ell_e(m_e)\), and higher total route latency gives lower reward.}
\label{fig:representation_illustration}
\end{figure}

\subsection{Contributions}
Our approach to the problem discussed above is model-based: it learns a representation-dependent reward and transition model and uses an optimistic--pessimistic interval objective to evaluate candidate equilibria and plausible deviations. Our main contributions are three-fold:
\begin{enumerate}
  \item We introduce a \textbf{representation-learning framework for MFGs} in which rewards and transitions depend on the population law only through an unknown low-dimensional statistic. Unlike standard single-agent representation learning, where the representation usually encodes the agent's own state or state-action pair, the representation here summarizes the surrounding population and changes with the population policy. This statistic must be learned from representative-agent trajectories and finite population samples.
  \item We develop a \textbf{model-based algorithm} for the \textbf{offline setting} that learns the reward model, the transition model, and importantly the mean-field representations. We prove a finite-sample Nash-gap guarantee under an adaptive coverage condition.
  \item We evaluate the representation-learning mechanism in a \textbf{one-step congestion-routing game} motivated by fulfillment-routing problems arising in supply-chain optimization (see Figure~\ref{fig:representation_illustration}). Under a fixed neural-network parameter count and optimization budget, we compare our model with baselines that either do not fully exploit the population structure or have privileged access to the true mean field. Our model outperforms the former and approaches the latter in the large-data regime.
\end{enumerate}

\subsection{Notation}
For a finite set \(\cX\), write \(\Delta(\cX)\) for the set of probability distributions on $\cX$.
For \(p,q\in\Delta(\cX)\), write
\[
\tv(p,q) \coloneq \frac{1}{2}\sum_{x\in\cX}|p(x)-q(x)|,
\quad
\hel^2(p,q) \coloneq \frac12\sum_{x\in\cX}\bigl(\sqrt{p(x)}-\sqrt{q(x)}\bigr)^2,
\quad
\kl(p\|q) \coloneq \sum_{x\in\cX}p(x)\log\frac{p(x)}{q(x)},
\]
with the conventions \(0\log(0/b)=0\) for \(b \ge 0\), and \(a\log(a/0)=+\infty\) for \(a>0\).
We consider finite state and action spaces \(\cX\) and \(\cA\), a finite horizon \(H\), and an initial state distribution \(\mu_0\in\Delta(\mathcal X)\).
A (non-stationary Markov) policy \(\pi\) is a sequence \(\pi=(\pi_h)_{h=0}^{H-1}\) with \(\pi_h \colon \cX \to \Delta(\cA)\).
The set of all such policies is denoted by \(\Pi\).
For a state law \(\mu \in \Delta(\cX)\) and a decision rule \(\alpha \colon \cX \to \Delta(\cA)\), define the induced state-action law \(\mu \otimes \alpha \in \Delta(\cX \times \cA)\) as \((\mu\otimes\alpha)(x,a) \coloneq \mu(x) \alpha(a\mid x)\).

\section{Mean-Field Games with Low-Dimensional Representations}
\label{sec:mfg-repr-setting}

In finite-population Markov games, the state of the environment includes the joint configuration of all agents, which is exponentially large in the population size.
Mean-field games replace this joint configuration by the limiting population law.
In other words, a representative agent interacts with the population only through a distribution on \(\cX\times\cA\), rather than through the identities and states of all other agents.
However, as discussed in the introduction, the full mean-field distribution can still be too large to model or learn when \(|\cX||\cA|\) is large.
Fortunately, many large-population systems have an additional structure where the rewards and transitions depend on the mean-field distribution only through a low-dimensional aggregate statistic, such as the density of agents in a small number of regions.

In this section, we first rewrite the standard representative-agent control problem using this low-dimensional representation.
Second, we define the corresponding value functions, best-response, and Nash-gap objects and show that the induced game admits a Nash equilibrium under mild continuity assumptions.
Third, we note that while the one-step reward and transition at stage \(h\) depend on the current representation \(m_h\), the value functions are indexed by the representation suffix \(m_{h:H-1}\).

\subsection{Mean-Field Game as a Representative-Agent Control Problem}
Mean-field games describe the limiting strategic interaction faced by a representative agent in a large symmetric population.
In this limit, an individual agent is negligible: changing one agent's policy does not change the population flow.
The population enters a representative agent's reward and transition only through a population law.

More concretely, there is a stage-wise reward function $\bar r^\star_h$ and a transition kernel $\bar T_h^\star$, dictating the rewards and transitions for each representative agent, with the following form: for all \(h \in \{0, \ldots, H-1\}\),
\[
\bar r_h^\star \colon \cX \times \cA \times \Delta(\cX \times \cA) \to [0, 1], \qquad
\bar T_h^\star(\cdot \mid x_h, a_h, \nu_h) \in \Delta(\cX) \; \forall x_h \in \cX,\, a_h \in \cA,\, \nu_h \in \Delta(\cX \times \cA).
\]
They are functions of the population state-action distribution \(\nu_h\).

When the population uses policy \(\eta \in \Pi\), its state and state-action laws are defined recursively by
\[
\mu_0^{\star,\eta}=\mu_0, \qquad
\nu_h^{\star,\eta} = \mu_h^{\star,\eta}\otimes \eta_h, \qquad
\mu_{h+1}^{\star,\eta}(x') = \sum_{x\in\cX} \sum_{a\in\cA} \nu_h^{\star,\eta}(x,a) \bar T_h^\star(x'\mid x,a,\nu_h^{\star,\eta}).
\]
If a representative agent instead plays \(\pi\in\Pi\) against a population playing \(\eta\), the population flow \((\nu_h^{\star,\eta})_{h=0}^{H-1}\) remains fixed since a single agent's deviation has negligible impact on the population flow.
Thus, the state-action process \((X_0, A_0, \ldots, X_{H-1}, A_{H-1}, X_H)\) for the representative agent is
\begin{equation}
\label{eq:representative-agent-process}
X_0\sim\mu_0, \qquad
A_h\sim\pi_h(\cdot\mid X_h), \qquad
X_{h+1} \sim \bar T_h^\star(\cdot\mid X_h,A_h,\nu_h^{\star,\eta}), \qquad h \in \{0, \ldots, H-1\}.
\end{equation}
In words, against any fixed population flow, the representative agent faces a single-agent control problem.
The game-theoretic part enters through the requirement that \(\eta\) be optimal against the population flow it itself induces.

\subsection{Low-Dimensional Mean-field representations}
The full law \(\nu_h^{\star,\eta}\) lies in a simplex of dimension \(|\cX||\cA|-1\).
We assume that the game depends on this law only through a low-dimensional mean-field representation.
Specifically, there exists \(d \in \NN\) such that, for each \(h \in \{0, \ldots, H-1\}\), there are a feature map, a reward map and a transition kernel
\[
f_h^\star \colon \cX \times \cA \to \RR^d, \qquad
r_h^\star \colon \cX \times \cA \times \RR^d \to [0, 1], \qquad
T_h^\star(\cdot \mid x, a, m) \in \Delta(\cX) \; \forall x \in \cX,\, a \in \cA,\, m \in \RR^d,
\]
such that, for every \(x_h \in \cX,\, a_h \in \cA,\, \nu_h\in\Delta(\cX\times\cA)\),
\[
\bar r_h^\star(x_h,a_h,\nu_h) = r_h^\star\!\lr{x_h,a_h,m_h^\star}, \qquad
\bar T_h^\star(\cdot\mid x_h,a_h,\nu_h) = T_h^\star\!\lr{\cdot\mid x_h,a_h,m_h^\star},
\]
where \[m_h^\star \coloneq \mathbb{E}_{(x,a)\sim\nu_h}\left[f_h^\star(x,a)\right]=\sum_{x\in\cX} \sum_{a\in\cA} \nu_h(x,a)f_h^\star(x,a) \in \RR^d.\]
Thus, rewards and transitions depend on a potentially lower-dimensional representation instead of the full population distribution.
Ideally, \(d \ll |\cX||\cA|-1\).

The induced mean-field representation under population policy \(\eta\) is
\[m_h^{\star,\eta} \coloneq \sum_{x\in\cX} \sum_{a\in\cA} \nu_h^{\star,\eta}(x,a) f_h^\star(x,a).\]
Under this structure, the population dynamics induced by \(\eta\) can be written as follows
\[
    \mu_0^{\star,\eta} = \mu_0, \qquad
    \nu_h^{\star,\eta} = \mu_h^{\star,\eta}\otimes\eta_h, \qquad
    \mu_{h+1}^{\star,\eta}(x') = \sum_{x,a} \nu_h^{\star,\eta}(x,a) T_h^\star(x'\mid x,a,m_h^{\star,\eta}).
\]
For a representative agent playing \(\pi\) against \(\eta\), let \(\PP_\star^{\pi;\eta}\) denote the path law defined as in \eqref{eq:representative-agent-process}, with the full-law transition \(\bar T_h^\star(\cdot\mid X_h,A_h,\nu_h^{\star,\eta})\) replaced by its representation form \(T_h^\star(\cdot\mid X_h,A_h,m_h^{\star,\eta})\).
Let \(\EE_\star^{\pi;\eta}\) denote the expectation operator under this probability measure.

The value of \(\pi\) against \(\eta\) is defined as
\[
    J(\pi;\eta) \coloneq \mathbb E_\star^{\pi;\eta} \left[\sum_{h=0}^{H-1} r_h^\star(X_h,A_h,m_h^{\star,\eta})\right].
\]
We write \(\mu_h^{\star,\pi;\eta}\) for the representative agent's state law under \(\mathbb P_\star^{\pi;\eta}\), and \(\nu_h^{\star,\pi;\eta} \coloneq \mu_h^{\star,\pi;\eta}\otimes\pi_h\) for the representative agent's state-action law.

The classical full-law mean-field game is recovered by taking \(d=|\mathcal X||\mathcal A|\) and letting \(f_h^\star(x,a)\) be the standard basis vector indexed by \((x,a)\).
Then \(m_h^{\star,\eta} = \nu_h^{\star,\eta}\).

\subsection{Value Functions Depend on Future Representations}

Fix a representative-agent policy \(\pi\in\Pi\).
A population policy \(\eta\in\Pi\) induces the representation path \(m_{0:H-1}^{\star,\eta} = \bigl(m_0^{\star,\eta},\ldots,m_{H-1}^{\star,\eta}\bigr)\).
Because a unilateral deviation does not affect the population flow, this path remains fixed from the representative agent's perspective.
Once the path is fixed, the representative agent therefore faces a finite-horizon MDP whose reward and transition kernel at stage \(h\) are \(r_h^\star(\cdot,\cdot,m_h^{\star,\eta})\) and \(T_h^\star(\cdot\mid\cdot,\cdot,m_h^{\star,\eta})\).

To make the dependence on the representation path explicit, fix an arbitrary suffix \(z_{h:H-1}=(z_h,\ldots,z_{H-1})\in(\RR^d)^{H-h}\).
For a fixed policy \(\pi\), define the value functions by backward recursion.
Set \(v_H^{\star,\pi}(x,\varnothing)\coloneq 0\), and, for \(h=H-1,\ldots,0\), define
\begin{equation}
\label{eq:suffix-value-recursion}
\begin{aligned}
q_h^{\star,\pi}(x,a,z_{h:H-1}) &\coloneq r_h^\star(x,a,z_h) + \sum_{x'\in\cX} T_h^\star(x'\mid x,a,z_h) v_{h+1}^{\star,\pi}(x',z_{h+1:H-1}),\\
v_h^{\star,\pi}(x,z_{h:H-1}) &\coloneq \sum_{a\in\cA} \pi_h(a\mid x) q_h^{\star,\pi}(x,a,z_{h:H-1}).
\end{aligned}
\end{equation}
Here \(z_{H:H-1}\) denotes the empty suffix.

Thus, when the population uses \(\eta\), the representative agent's action and state values are obtained by evaluating these functions along the induced representation suffix to get \(q_h^{\star,\pi}\left(x,a,m_{h:H-1}^{\star,\eta}\right)\) and \(v_h^{\star,\pi}\left(x,m_{h:H-1}^{\star,\eta}\right)\).
In particular, \(J(\pi;\eta) = \sum_{x\in\cX} \mu_0(x) v_0^{\star,\pi} \left(x,m_{0:H-1}^{\star,\eta}\right)\).

The form of the value functions makes clear that even though the current representation \(z_h\) determines the immediate reward and next-state distribution, the value functions depend on the future suffix \(z_{h+1:H-1}\).
Consequently, changing future representations can change \(q_h^{\star,\pi}\) while leaving \((x,a,z_h)\) unchanged.

\begin{proposition}[The current representation need not determine the action value]
\label{prop:one-step-representation-not-sufficient}
There exists a horizon-two mean-field model satisfying the representation structure above, together with policies \(\eta,\tilde\eta,\pi\in\Pi\), a state \(x\in\cX\), and an action \(a\in\cA\), such that \(m_0^{\star,\eta}=m_0^{\star,\tilde\eta}\), but \(q_0^{\star,\pi}\left(x,a,m_{0:1}^{\star,\eta}\right) \neq q_0^{\star,\pi}\left(x,a,m_{0:1}^{\star,\tilde\eta}\right)\).
Consequently, action value function cannot in general be represented uniformly over population policies as functions of \((x,a,m_h^{\star,\eta})\) alone.
\end{proposition}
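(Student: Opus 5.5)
The plan is an explicit counterexample with $H=2$ that is as small as possible. I take $\cX=\{x_0,x_1\}$, $\cA=\{a_0,a_1\}$, $d=1$, and $\mu_0=\delta_{x_0}$. The transitions will not depend on the representation: $T_0^\star(\cdot\mid x,a,m)=\delta_{x_1}$ if $a=a_1$ and $\delta_{x_0}$ if $a=a_0$. The representation will be non-degenerate only at stage $1$: set $f_0^\star\equiv 0$ and $f_1^\star(x,a)=\mathbf 1\{x=x_1\}$, so $m_1^{\star,\eta}=\mu_1^{\star,\eta}(x_1)$. For rewards take $r_0^\star\equiv 0$ and $r_1^\star(x,a,m)=\mathrm{clip}_{[0,1]}(m)$, which is simply $m$ on the relevant range. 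These maps depend on $\nu_h$ only through $m_h^\star$, so the representation structure of the section holds trivially.

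Next I choose the policies. Let $\eta_0(\cdot\mid x_0)=\delta_{a_0}$ and $\tilde\eta_0(\cdot\mid x_0)=\delta_{a_1}$; stage-$1$ rules and $\pi$ can be anything, say always $a_0$. Because $f_0^\star\equiv0$, we get $m_0^{\star,\eta}=m_0^{\star,\tilde\eta}=0$. The induced stage-$1$ laws are $\mu_1^{\star,\eta}=\delta_{x_0}$ and $\mu_1^{\star,\tilde\eta}=\delta_{x_1}$, so $m_1^{\star,\eta}=0$ and $m_1^{\star,\tilde\eta}=1$.

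Then I evaluate the recursion \eqref{eq:suffix-value-recursion} at $(x,a)=(x_0,a_0)$. Since $v_2\equiv0$, we have $v_1^{\star,\pi}(x',z_1)=z_1$ for $z_1\in[0,1]$. Hence
$$q_0^{\star,\pi}(x_0,a_0,z_{0:1})=0+v_1^{\star,\pi}(x_0,z_1)=z_1.$$
Along the two paths this gives $0$ and $1$, which are distinct, even though the stage-$0$ representations coincide.

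The ``consequently'' clause follows directly. Suppose some function $g_0$ satisfied $q_0^{\star,\pi}(x,a,m_{0:1}^{\star,\eta'})=g_0(x,a,m_0^{\star,\eta'})$ for all population policies $\eta'$. Applying this to $\eta$ and $\tilde\eta$ would force the two values above to be equal, which is a contradiction.

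The only delicate point is making sure the construction is a legitimate instance of the setup. Rewards must lie in $[0,1]$, which is why I clip. The feature map must be allowed to be degenerate (zero) at $h=0$. If one prefers a nondegenerate $f_0^\star$, the fallback is to use the same indicator at stage $0$; the two policies still share $m_0=0$ because $\mu_0=\delta_{x_0}$ puts no mass on $x_1$.
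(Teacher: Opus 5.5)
Your construction is correct and is essentially the paper's own counterexample: horizon two, zero stage-$0$ features and rewards, a representation-independent stage-$0$ transition determined by the action, and a stage-$1$ reward equal to the clipped representation. The two population policies differ only in their stage-$0$ action, so $m_0$ agrees while $m_1$ differs, and the differences from the paper are cosmetic: you use two states and pay $\mathrm{clip}_{[0,1]}(m)$ at every state, whereas the paper uses three states and pays $\sigma(m)$ only at $u$.
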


The proof uses the fact that two population policies may agree on the current representation but induce different representations at later stages, where the agent receives different rewards or faces different transitions; see App.~\ref{app:one-step-proof}.
This is why the Bellman residuals in Section~\ref{sec:theory} evaluate value functions along complete representation suffixes.

\subsection{Equilibrium and Well-Posedness}
Finally, we study the notion of equilibrium in the setting introduced above, and show it is well posed.
For a population policy \(\eta\), define the best-response set and the exploitability, or Nash gap, by
\[
\br(\eta) \coloneq \argmax_{\pi\in\Pi} J(\pi;\eta),\quad
\gap(\eta) \coloneq \sup_{\pi\in\Pi} J(\pi;\eta)-J(\eta;\eta).
\]
A policy \(\eta^\star\in\Pi\) is a Nash equilibrium if \(\eta^\star\in\br(\eta^\star)\), or equivalently, \(\gap(\eta^\star)=0\).
Let \(\Pi_\star\coloneq\{\eta\in\Pi:\gap(\eta)=0\}\) denote the set of Nash equilibria.
This set is non-empty:

\begin{proposition}[Existence of Nash equilibria]
\label{prop:existence-nash}
Suppose \(\mathcal X\) and \(\mathcal A\) are finite, each feature map \(f_h^\star\) is bounded, and for every \(h,x,a\), the maps \(m\mapsto r_h^\star(x,a,m)\) and \(m\mapsto T_h^\star(\cdot\mid x,a,m)\) are continuous.
Then there exists a policy \(\eta^\star\in\Pi\) such that \(\gap(\eta^\star)=0\).
\end{proposition}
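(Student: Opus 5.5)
We will obtain an equilibrium as a fixed point of a set-valued best-response map, using Kakutani's fixed-point theorem on the compact convex policy set. Identify \(\Pi\) with the product \(\prod_{h=0}^{H-1}\prod_{x\in\cX}\Delta(\cA)\), a nonempty compact convex subset of a finite-dimensional Euclidean space. The first step is to show that, for a fixed \(\eta\), the maps \(\eta\mapsto\mu_h^{\star,\eta}\), \(\eta\mapsto\nu_h^{\star,\eta}\) and \(\eta\mapsto m_h^{\star,\eta}\) are continuous. We argue by induction on \(h\). The base case holds because \(\mu_0\) is fixed, so \(\nu_0^{\star,\eta}=\mu_0\otimes\eta_0\) is continuous (indeed bilinear). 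For the inductive step, \(m_h^{\star,\eta}=\sum_{x,a}\nu_h^{\star,\eta}(x,a)f_h^\star(x,a)\) is a finite linear combination with bounded coefficients, and hence continuous. Then \(\mu_{h+1}^{\star,\eta}(x')=\sum_{x,a}\nu_h^{\star,\eta}(x,a)T_h^\star(x'\mid x,a,m_h^{\star,\eta})\) is continuous as a finite sum of products of continuous functions, using the assumed continuity of \(m\mapsto T_h^\star\).

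Next we show that \(J(\pi;\eta)\) is jointly continuous in \((\pi,\eta)\). We use the backward recursion \eqref{eq:suffix-value-recursion} evaluated on the suffix \(m^{\star,\eta}_{h:H-1}\). By backward induction, \((\pi,z_{h:H-1})\mapsto q_h^{\star,\pi}(x,a,z_{h:H-1})\) and \(v_h^{\star,\pi}\) are continuous, since they are built from \(r_h^\star,T_h^\star\) (continuous in \(z_h\)) and \(\pi_h\) by finite sums and products. Composing with the continuous map \(\eta\mapsto m^{\star,\eta}_{0:H-1}\) and averaging against \(\mu_0\) then gives joint continuity of \(J(\pi;\eta)=\sum_x\mu_0(x)v_0^{\star,\pi}(x,m_{0:H-1}^{\star,\eta})\).

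Given joint continuity and compactness, Berge's maximum theorem shows that \(\br(\eta)\) is nonempty and compact, and that \(\eta\mapsto\br(\eta)\) has a closed graph (it is upper hemicontinuous). The remaining requirement for Kakutani's theorem is that \(\br(\eta)\) be convex. This is the main obstacle, because \(\pi\mapsto J(\pi;\eta)\) is not concave in the behavioral-policy parametrization: it is multilinear across stages, and a mixture of two optimal Markov policies taken stage-wise need not be optimal.

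To handle this, we do not use \(\br\) directly. Instead we use the set of policies that are \emph{optimal at every reachable state} of the fixed-flow MDP. Concretely, fix \(\eta\) and let \(Q_h^{\eta}(x,a)\) be the optimal action-value function of the finite-horizon MDP with rewards \(r_h^\star(\cdot,\cdot,m_h^{\star,\eta})\) and kernels \(T_h^\star(\cdot\mid\cdot,\cdot,m_h^{\star,\eta})\). Define
\[
\Phi(\eta)\coloneq\Bigl\{\pi\in\Pi:\ \operatorname{supp}\pi_h(\cdot\mid x)\subseteq\argmax_{a}Q_h^{\eta}(x,a)\ \ \forall h,x\Bigr\}.
\]
We then verify three properties. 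First, \(\Phi(\eta)\) is a nonempty product of faces of simplices, hence convex and compact. Second, \(\Phi(\eta)\subseteq\br(\eta)\), by the dynamic programming principle. Third, \(\Phi\) has a closed graph. The closed-graph property follows because \(Q_h^{\eta}\) is continuous in \(\eta\): it is obtained by backward induction, with max of continuous functions, from the continuity established in the first step. Hence if \(\eta^n\to\eta\), \(\pi^n\to\pi\) with \(\pi^n\in\Phi(\eta^n)\), then any \(a\) with \(\pi_h(a\mid x)>0\) has \(\pi^n_h(a\mid x)>0\) eventually. Passing to the limit in \(Q_h^{\eta^n}(x,a)\ge Q_h^{\eta^n}(x,a')\) gives \(a\in\argmax Q_h^\eta(x,\cdot)\).

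Kakutani's theorem then yields \(\eta^\star\in\Phi(\eta^\star)\subseteq\br(\eta^\star)\), that is, \(\gap(\eta^\star)=0\).
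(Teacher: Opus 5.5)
Your proof is correct, and it takes a genuinely different route from the paper.

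The paper does not apply Kakutani on \(\Pi\). It applies it on representation paths \(\mathcal K=\prod_h\operatorname{conv}\{f_h^\star(x,a)\}\):
\begin{itemize}
\item for a frozen path \(m\), it takes the optimal occupancy flows \(\operatorname{OptOcc}(m)\), which maximize a linear functional over the flow polytope and are therefore automatically convex;
\item it obtains the closed graph by characterizing optimality through a zero Bellman-residual (telescoping) identity;
\item after the fixed point, it must rebuild a Markov policy \(\eta^\star\) from \(\psi^\star\) by conditioning, and check by forward induction that \(\eta^\star\) regenerates \(\psi^\star\), hence \(m^\star\).
\end{itemize}
You identify the same obstacle: \(\br(\eta)\) is not convex in the behavioral parametrization, which is exactly why the paper passes to occupancy measures. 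You resolve it differently, by shrinking to the everywhere-greedy set \(\Phi(\eta)\), a product of simplex faces, so the fixed point is directly a policy.

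Both arguments need continuity of the optimal \(Q\)-functions in the frozen path. Your extra cost is forward continuity of \(\eta\mapsto m^{\star,\eta}\), which the paper never needs because it freezes \(m\). In exchange, your version is shorter and more elementary for finite \(\cX,\cA\): no LP and no policy reconstruction. The paper's version keeps the fixed-point problem in the \(Hd\)-dimensional representation space, in line with its theme. It also uses the occupancy-measure device that is standard for extending such existence results beyond finite state spaces.

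Two minor points.
\begin{itemize}
\item Your prose says ``optimal at every reachable state,'' but your definition quantifies over all \((h,x)\). That choice is essential: greediness only at states reachable under \(\pi\) itself characterizes \(\br(\eta)\), and would lose convexity.
\item The Berge step and the joint continuity of \(J\) are never used. Only continuity of \(\eta\mapsto Q_h^\eta\) enters the closed-graph argument.
\end{itemize}
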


The proof is deferred to App.~\ref{app:existence-proof}.
It uses the standard occupancy-measure formulation and a Kakutani fixed-point argument applied to the mean-field representation path.

We note here that while our focus is on the concept of Nash equilibrium, a similar approach could be used for social optimum (also known as mean field control) where the objective is \(\max_{\eta \in \Pi}\; J(\eta;\, \eta)\).

\section{Offline Mean-Field RL Framework}
\label{sec:offline}

Section~\ref{sec:mfg-repr-setting} formalized a class of mean-field games in which rewards and transitions depend on a low-dimensional statistic rather than the full population law. In this section, we define an offline learning setting in which the mean-field representation is unknown but belongs to a function class a learner can access.

We introduce the model class, the offline observation model, and the empirical criterion used to learn a model.
The offline data contains independent trajectory data and population samples for each stage.
The main statistical and planning guarantees are stated in Section~\ref{sec:theory}.

\subsection{Candidate mean-field models}
A candidate model is a tuple \(\theta=(f^\theta,r^\theta,T^\theta)\in\Theta\), where \(f_h^\theta\colon\cX\times\cA\to\RR^d\) is a feature map, \(r_h^\theta\colon\cX\times\cA\times\RR^d\to[0,1]\) is a reward model, and \(T_h^\theta(\cdot\mid x,a,m)\in\Delta(\cX)\) is a transition model.
We assume that \(\Theta\) is finite.
For a population policy \(\eta\), the model \(\theta\) induces the population flow \((\mu_h^{\theta, \eta}, \nu_h^{\theta, \eta}, m_h^{\theta, \eta})_{h=0}^{H-1}\) just like in the true model case discussed above with \(\star\) replaced with \(\theta\).
For future reference, define
\[\bar m_h^{\theta,\eta} \coloneq \sum_{x \in \cX} \sum_{a \in \cA} \nu_h^{\star,\eta}(x,a)f_h^\theta(x,a).\]
Against population policy \(\eta\), a representative agent playing \(\pi\) under model \(\theta\) follows the path law \(\PP_\theta^{\pi;\eta}\) defined as in \eqref{eq:representative-agent-process}, with transition \(T_h^\theta(\cdot\mid X_h,A_h,m_h^{\theta,\eta})\).
The corresponding candidate value is
\[
J_\theta(\pi;\eta) \coloneq
\EE_\theta^{\pi;\eta}
\left[
\sum_{h=0}^{H-1} r_h^\theta(X_h,A_h,m_h^{\theta,\eta})
\right].
\]

We work with the following assumptions.
\begin{assumption}[Realizability]
\label{ass:realizability}
The true model belongs to the candidate class: \(\theta^\star=(f^\star,r^\star,T^\star)\in\Theta\).
\end{assumption}

\begin{assumption}[Boundedness]
\label{ass:boundedness}
There are constants \(B_f<\infty\) and \(\tau\in(0,1]\).
Let \(\cB_f\coloneq\{m\in\RR^d:\|m\|_2\le B_f\}\).
For every \(\theta\in\Theta\), \(h\in\{0,\ldots,H-1\}\), \((x,a)\in\cX\times\cA\), \(x'\in\cX\), and \(m\in\cB_f\),
\[
\|f_h^\theta(x,a)\|_2\le B_f,
\qquad
T_h^\theta(x'\mid x,a,m)\in\{0\}\cup[\tau,1].
\]
\end{assumption}

\begin{assumption}[Continuity]
\label{ass:continuity}
There are constants \(L_{r,h}<\infty\) and \(L_{T,h}<\infty\) such that, for every \(\theta\in\Theta\), \(h\in\{0,\ldots,H-1\}\), \((x,a)\in\cX\times\cA\), and \(m,m'\in\cB_f\),
\[
|r_h^\theta(x,a,m)-r_h^\theta(x,a,m')|
\le L_{r,h}\|m-m'\|_2,
\]
and
\[
\tv\!\left(
T_h^\theta(\cdot\mid x,a,m),
T_h^\theta(\cdot\mid x,a,m')
\right)
\le
L_{T,h}\|m-m'\|_2.
\]
\end{assumption}

\subsection{Offline data}
We observe \(N \in \NN\) independent logged episodes with each sample containing a pair of the representative player's trajectory and \(K \in \NN\) independent samples of population snapshots over time: \[\{(X_{n,h},A_{n,h},R_{n,h},X_{n,h+1},(X_{n,h}^{(k)},A_{n,h}^{(k)})_{k=1}^K)_{h=0}^{H-1}\}_{n=1}^N.\]
Here, for each $n$, a behavior population policy \(\rho_n\) is drawn independently from a distribution \(\fp\) over \(\Pi\), and conditional on \(\rho_n\), the representative trajectory \((X_{n,0}, A_{n,0}, \ldots, X_{n,H-1}, A_{n,H-1}, X_{n,H})\) is drawn from \(\PP_\star^{\rho_n;\rho_n}\), and the observed reward is \(R_{n,h}=r_h^\star(X_{n,h},A_{n,h},m_h^{\star,\rho_n})\).
Conditional on \(\rho_n\), the entire collection of population snapshots is independent of the representative trajectory.
At each time-step \(h\), the snapshot consists of \(K\) samples \((X_{n,h}^{(k)},A_{n,h}^{(k)})\stackrel{\mathrm{i.i.d.}}{\sim} \nu_h^{\star,\rho_n},\).

The sample can be used to compute the empirical mean-field representation for a candidate model \(\theta\): \[\widehat m_{n,h}^{\theta} \coloneq \frac{1}{K}\sum_{k=1}^K f_h^\theta(X_{n,h}^{(k)},A_{n,h}^{(k)}).\]
Conditionally on \(\rho_n\), it satisfies \(\EE[\widehat m_{n,h}^{\theta}\mid \rho_n] = \bar m_h^{\theta,\rho_n}\).
Note that \(\widehat m_{n,h}^{\theta}\), \(\bar m_h^{\theta,\rho_n}\), and \(m_h^{\theta,\rho_n}\) play different roles.
The empirical risk can use only \(\widehat m_{n,h}^{\theta}\), the population prediction loss is naturally expressed at \(\bar m_h^{\theta,\rho_n}\), and planning under model \(\theta\) uses the representation \(m_h^{\theta,\eta}\).

\subsection{Learning algorithm}
Algorithm~\ref{alg:offline-representation-learning} summarizes the offline learning procedure. For each model \(\theta\), we use the finite-population samples to construct empirical representation \(\widehat m^\theta\) and fit the reward and transition components using the empirical risk in \eqref{eq:empirical-risk}. We then retain the near-minimizers in the confidence set in \eqref{eq:confidence-set}. Planning is performed over this confidence set: for each candidate population policy, the algorithm evaluates deviations under all plausible models, forms optimistic deviation values and a pessimistic self-play value, and scores the policy by the interval Nash-gap objective in \eqref{eq:gap-estimate}. The output is an \(\eopt\)-approximate minimizer of this objective; when the true model is contained in the confidence set, the interval objective upper bounds the true Nash gap.

\paragraph{Transition support.}
The support condition in Assumption~\ref{ass:boundedness} allows zero transition probabilities and includes deterministic kernels.
Together with total-variation continuity, it implies that, for fixed \(\theta,h,x,a\), the support of \(T_h^\theta(\cdot\mid x,a,m)\) is independent of \(m\in\cB_f\).
In particular, when the kernel is deterministic, its successor may depend on \(h,x,a\) but not on \(m\).
This support stability ensures that evaluating the realizable model at a finite-population estimate of the representation cannot assign zero probability to a transition possible at the exact representation.
These assumptions do not require distinct candidate models to share a transition support.
We therefore measure transition prediction error using squared Hellinger distance, which remains finite under support mismatch.

\begin{algorithm}[t]
\caption{Offline mean-field representation learning}
\label{alg:offline-representation-learning}
\begin{algorithmic}[1]

\Statex \textbf{Input:} Offline data \(\{(X_{n,h},A_{n,h},R_{n,h},X_{n,h+1},(X_{n,h}^{(k)},A_{n,h}^{(k)})_{k=1}^K)_{h=0}^{H-1}\}_{n=1}^N\), model class \(\Theta\), tolerance \(\beta \ge 0\), optimization tolerance \(\eopt \ge 0\).

\Statex \textbf{Output:} Policy \(\widehat\eta\in\Pi\) satisfying \(\widehat{\gap}_\beta(\widehat\eta) \le \inf_{\eta\in\Pi}\widehat{\gap}_\beta(\eta)+\eopt\).

\Statex Write \(Y_{n,h}\coloneq(X_{n,h},A_{n,h})\), \(Y_{n,h}^{(k)}\coloneq(X_{n,h}^{(k)},A_{n,h}^{(k)})\).

\For{each \(\theta=(f^\theta,r^\theta,T^\theta)\in\Theta\)}
  \State Compute \(\widehat m_{n,h}^{\theta} = K^{-1}\sum_{k=1}^K f_h^\theta(Y_{n,h}^{(k)})\) and the empirical risk
  \begin{equation}
      \label{eq:empirical-risk}
      \widehat{\cR}(\theta)
      \coloneq
      \frac{1}{N} \sum_{n=1}^N \sum_{h=0}^{H-1}
      \left\{
      -\log T_h^\theta\!\lr{X_{n,h+1}\mid Y_{n,h},\widehat m_{n,h}^{\theta}}
      +
      \lr{R_{n,h}-r_h^\theta(Y_{n,h},\widehat m_{n,h}^{\theta})}^2
      \right\}.
  \end{equation}
  \Statex We use the convention \(-\log 0=+\infty\).
\EndFor
\State Create the confidence set
\begin{equation}
\label{eq:confidence-set}
    \cC(\beta)\coloneq\{\theta\in\Theta:\widehat{\cR}(\theta)\le\inf_{\theta'\in\Theta}\widehat{\cR}(\theta')+\beta\}
\end{equation}
\For{each candidate population policy \(\eta\in\Pi\)}
  \State Evaluate \(J_\theta(\pi;\eta)\) for all \(\theta\in\cC(\beta)\) and \(\pi\in\Pi\), and set
  \[
  \overline J_\beta(\pi;\eta)\coloneq\sup_{\theta\in\cC(\beta)}J_\theta(\pi;\eta), \quad \underline J_\beta(\pi;\eta)\coloneq\inf_{\theta\in\cC(\beta)}J_\theta(\pi;\eta), \text{ and}
  \]
  \begin{equation}
      \label{eq:gap-estimate}
      \widehat{\gap}_\beta(\eta)\coloneq\sup_{\pi\in\Pi}\overline J_\beta(\pi;\eta)-\underline J_\beta(\eta;\eta).
  \end{equation}
\EndFor
\State \Return any policy satisfying the output condition.
\end{algorithmic}
\end{algorithm}

\section{Provable Representation Learning in Mean-Field RL}
\label{sec:theory}

We now state the main guarantee for the offline procedure from Section~\ref{sec:offline}.
The likelihood-based empirical criterion yields a confidence set with small population prediction loss along the logged mean-field representation paths.
This loss controls logged Bellman residuals, and the coefficient \(\kappa\) transfers those residuals to value errors for a target population policy.
Optimistic deviation values and pessimistic self-play values then turn these value bounds into a Nash-gap guarantee.

\subsection{Model confidence from data}
\label{subsec:finite-confidence}
We begin with the population prediction error controlled by the empirical criterion. For a generic logged episode, draw \(\rho\sim\fp\), and conditionally on \(\rho\), write \(Y_h=(X_h,A_h)\).

The learner fits models by negative log-likelihood, while squared Hellinger distance is used to measure population prediction error in the analysis.
To understand this choice, note that a candidate model may fit every observed transition while assigning zero probability to a rare but possible transition that did not appear in the dataset.
Its forward-KL error is then infinite, while the Hellinger error remains finite and scales with the probability mass of the discrepancy.
In the analysis, the likelihood criterion controls Hellinger prediction error, and Hellinger in turn controls the total-variation error needed in the Bellman analysis.
This support issue is present even if the mean-field representation is known exactly.
To this end, define
\begin{equation}
\label{eq:population-prediction-loss}
\cL(\theta)\coloneq
\sum_{h=0}^{H-1}\EE_{\rho\sim\fp}\EE_{Y\sim\nu_h^{\star,\rho}}\bigg[2\hel^2\!\left(
T_h^\star(\cdot\mid Y,m_h^{\star,\rho}),
T_h^\theta(\cdot\mid Y,\bar m_h^{\theta,\rho})
\right)+\left(
r_h^\star(Y,m_h^{\star,\rho})
-r_h^\theta(Y,\bar m_h^{\theta,\rho})\right)^2\bigg].
\end{equation}
Under Assumption~\ref{ass:realizability}, \(\cL(\theta^\star)=0\). For \(s\ge0\), define
\[
\Theta_s\coloneq\{\theta\in\Theta:\cL(\theta)\le s\}.
\]

\begin{theorem}[Finite-class confidence event, informal constants]
\label{thm:finite-class-confidence}
Suppose Assumptions~\ref{ass:realizability}--\ref{ass:continuity} hold and \(\Theta\) is finite. There are choices of \(\beta\) and \(s_{N,K}\) with
\[
\beta,\ s_{N,K}=O\!\left(
\frac{H\bigl(1+\log(1/\tau)\bigr)\log(|\Theta|/\delta)}{N}
+
\frac{B_f^2}{K}\sum_{h=0}^{H-1}
\left[\frac{L_{T,h}^2}{\tau}+L_{r,h}^2\right]\right)
\]
such that, with probability at least \(1-\delta\), \(\theta^\star\in\cC(\beta)\) and \(\cC(\beta)\subseteq\Theta_{s_{N,K}}.\)
\end{theorem}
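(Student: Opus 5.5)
The proof follows the standard MLE and least-squares argument over a finite class, with one extra step. The empirical risk $\widehat{\cR}$ evaluates each model at the empirical representation $\widehat m^\theta_{n,h}$, whereas the loss $\cL$ evaluates it at $\bar m^{\theta,\rho}_h$. We therefore split every per-sample excess loss into two pieces. The first is a \emph{statistical} piece: the loss at $\bar m^{\theta,\rho_n}_h$, which is an i.i.d.\ function of the episode given $\rho_n$. The second is a \emph{representation-perturbation} piece, controlled by Lipschitz continuity (Assumption~\ref{ass:continuity}) and by $\EE\|\widehat m^\theta_{n,h}-\bar m^{\theta,\rho_n}_h\|_2^2\le B_f^2/K$. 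This variance bound holds because $\widehat m^\theta_{n,h}$ averages $K$ i.i.d.\ vectors of norm at most $B_f$. It gives the $B_f^2 L^2/K$ terms.

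\emph{Step 1 (realizable model is near-optimal).} We show $\widehat{\cR}(\theta^\star)\le \inf_{\theta'}\widehat{\cR}(\theta')+\beta$.

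For the reward part, write $R_{n,h}-r^{\theta'}_h(Y,\widehat m^{\theta'})$ as the difference $r^\star_h(Y,m^{\star})-r^{\theta'}_h(Y,\widehat m^{\theta'})$. The term $\theta^\star$ itself contributes at most $L_{r,h}^2\|\widehat m^\star-m^\star\|^2$, whose expectation is $O(L_{r,h}^2B_f^2/K)$; a Bernstein/Markov bound handles its empirical average. Since observations are noiseless, competitors have nonnegative reward loss, so no concentration is needed on that side.

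For the log-likelihood part, we use the Chernoff-type inequality standard for MLE. For each fixed $\theta'$,
\[
\EE\exp\Bigl(\tfrac12\textstyle\sum_{n,h}\log\tfrac{T^{\theta'}_h(X_{n,h+1}\mid Y,\widehat m^{\theta'})}{T^{\star}_h(X_{n,h+1}\mid Y,m^\star)}\Bigr)\le \prod_{n,h}\bigl(1-\hel^2(\cdot)\bigr)\le 1 .
\]
Taking a union bound over $\Theta$ gives
\[
\sum_{n,h}\log\frac{T^{\theta'}}{T^\star}\le 2\log(|\Theta|/\delta)-2\sum_{n,h}\hel^2 .
\]
This applies conditionally on the snapshots, using that the snapshots are independent of the trajectory given $\rho_n$ and that the comparison is against the true kernel at $m^\star$.

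It remains to compare $-\log T^\star(\cdot\mid Y,\widehat m^\star)$ with $-\log T^\star(\cdot\mid Y,m^\star)$. Assumption~\ref{ass:boundedness} plus TV-continuity make the support independent of $m$, and the probabilities on that support are at least $\tau$. Hence $|\log a-\log b|\le |a-b|/\tau$, and a second-order expansion of the expected log-ratio ($\chi^2$-type bound) yields $O(L_{T,h}^2B_f^2/(\tau K))$ in expectation. Getting this $1/\tau$ rather than $1/\tau^2$ is the delicate point. I would bound the KL between $T^\star(\cdot\mid Y,m^\star)$ and $T^\star(\cdot\mid Y,\widehat m^\star)$ by $\chi^2\le \|\Delta\|_1^2/\tau$, and handle the first-order term in $\widehat m-m^\star$ either by its conditional mean being small or by absorbing it through a Bernstein bound using the $\log(1/\tau)$ range.

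\emph{Step 2 (confidence set has small loss).} Take $\theta\in\cC(\beta)$. Then $\widehat{\cR}(\theta)\le\widehat{\cR}(\theta^\star)+\beta$. Combining this with the exponential inequality of Step 1 gives
\[
\tfrac{2}{N}\sum_{n,h}\hel^2\bigl(T^\star_h(\cdot\mid Y,m^\star),T^\theta_h(\cdot\mid Y,\widehat m^\theta)\bigr)+\text{reward sq.\ error}\lesssim \beta+\frac{\log(|\Theta|/\delta)}{N}+\text{perturbation}.
\]
We then pass from $\widehat m^\theta$ to $\bar m^{\theta,\rho_n}$ via the triangle inequality for $\hel$ and $\hel^2\le\tv$-type bounds: $\hel^2(T(\cdot\mid\widehat m),T(\cdot\mid\bar m))\le \chi^2/2\le L_{T,h}^2\|\Delta m\|^2/\tau$, and similarly for rewards with $L_{r,h}^2$. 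Finally, empirical averages of the bounded quantities $\hel^2\le1$ and squared errors $\le1$ are converted into the expectations defining $\cL(\theta)$ by Bernstein (or a one-sided Freedman) inequality with a union bound over $\Theta$. The $H(1+\log(1/\tau))$ factor comes from the range of the per-episode log-likelihood ratio, restricted to the support where it is bounded by $\log(1/\tau)$, summed over $H$ steps.

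I expect the main obstacle to be the coupling between the empirical representation and the likelihood. The empirical risk is not an i.i.d.\ sum of the losses defining $\cL$, and the log-loss is unbounded outside a common support. Support stability (Assumption~\ref{ass:boundedness}) is exactly what keeps $\theta^\star$ evaluated at $\widehat m^\star$ finite with $1/\tau$-controlled perturbation. Competitors with mismatched support are handled by the Hellinger exponential inequality, which never needs their log-loss bounded.
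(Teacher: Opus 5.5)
Your architecture is the paper's: the Hellinger supermartingale for the likelihood ratio against the true kernel at $m_h^{\star,\rho_n}$, with snapshots revealed first; the bound $\kl\le\chi^2\le 4\tv^2/\tau$ for the expected excess log-loss of $\theta^\star$ at $\widehat m^{\theta^\star}_{n,h}$; support stability with the Hellinger triangle inequality for the plug-in error; and Bernstein to reach $\cL$.

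The gap is exactly the step you call delicate. You need a high-probability upper bound on the realizable model's empirical excess score
\[
\frac1N\sum_{n=1}^N\sum_{h=0}^{H-1}\Bigl[\log\frac{P_{n,h}(X_{n,h+1})}{Q_{n,h}(X_{n,h+1})}+e_{n,h}^2\Bigr],
\qquad P_{n,h}=T_h^\star(\cdot\mid Y_{n,h},m_h^{\star,\rho_n}),\quad Q_{n,h}=T_h^\star(\cdot\mid Y_{n,h},\widehat m_{n,h}^{\theta^\star}),
\]
where $e_{n,h}=R_{n,h}-r_h^\star(Y_{n,h},\widehat m^{\theta^\star}_{n,h})$. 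The required rate is $b_K+H(1+\log(1/\tau))\log(|\Theta|/\delta)/N$, with $b_K=\frac{B_f^2}{K}\sum_h(L_{T,h}^2/\tau+L_{r,h}^2)$.

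Knowing that the conditional mean is a KL of order $L_{T,h}^2B_f^2/(\tau K)$ says nothing about fluctuations. Bernstein with range $H(1+\log(1/\tau))$ gives this rate only if the second moment is of order range times mean, i.e.\ $H(1+\log(1/\tau))\,b_K$. Because the log-ratio takes both signs, the nonnegative-variable bound ``second moment $\le$ range $\times$ mean'' is unavailable. The naive substitute $M\,\EE|\log(P/Q)|$ is only of order $\tv\sim K^{-1/2}$.

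Neither of your suggested routes supplies the needed bound. The first-order term $(Q-P)/P$ has range of order $1/\tau$, so Bernstein on it puts $1/\tau$ rather than $\log(1/\tau)$ in front of $\log(|\Theta|/\delta)/N$. The pointwise bound $|\log a-\log b|\le|a-b|/\tau$ gives a second moment of order $\tv^2/\tau^2$. After Young's inequality this yields either $1/\tau^2$ on the $K^{-1}$ term or $1/\tau$ on the $N^{-1}$ term, never both rates in the statement.

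The missing ingredient is the log-ratio second-moment inequality the paper uses. If $p,q$ share support and $|\log(p/q)|\le M$, then $\EE_{X\sim p}[\log^2(p(X)/q(X))]\le 16(1+M)\kl(p\|q)$. The proof is an elementary comparison of $x\log^2x$ with $x\log x-x+1$ for $x=p/q\in[e^{-M},e^M]$, using $\kl(p\|q)=\sum_y q(y)(x\log x-x+1)$. With $M=\log(1/\tau)$, this gives a per-episode second moment of order $H(1+M)b_K$. Bernstein followed by Young then yields the score bound with exactly the stated $\tau$-dependence.

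Two smaller points:
\begin{enumerate}
\item Markov's inequality for the reward perturbation would give $1/\delta$ rather than $\log(1/\delta)$. Use Bernstein with $e_{n,h}^2\le1$ instead.
\item In Step~2, it is cleaner to follow the paper's order. First concentrate the loss evaluated at $\widehat m^\theta_{n,h}$ uniformly over $\Theta$ (lower-tail Bernstein, since it lies in $[0,3H]$ per episode). Then pass to $\bar m_h^{\theta,\rho}$ inside the expectation, where $\EE\|\widehat m-\bar m\|_2^2\le B_f^2/K$ suffices. Doing the triangle inequality at the empirical level for a data-dependent $\theta$ needs an extra uniform upper-tail bound on the snapshot errors, which works only after capping each term at $1$.
\end{enumerate}
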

The \(N^{-1}\) term comes from the likelihood-based argument, while the \(K^{-1}\) term comes from evaluating candidate models at a finite-population estimate of the representation. If \(B_f,\tau^{-1},L_{T,h},L_{r,h}\) are uniformly bounded in \(h\), then
\[
s_{N,K}=O\!\left(
\frac{H\log(|\Theta|/\delta)}{N}+\frac{H}{K}\right).
\]
The exact statement and proof are given in Theorem~\ref{thm:app-finite-class-confidence} in App.~\ref{app:finite-confidence}.

\subsection{Value transfer through Bellman residuals}
\label{subsec:value-transfer}

The confidence event above controls prediction loss only at logged population policies \(\rho\sim\fp\).
Planning, however, evaluates new population policies \(\eta\) and deviations \(\pi\).
We transfer this confidence through Bellman residuals of the candidate model-induced value functions.

For \(\theta\in\Theta\) and \(\pi\in\Pi\), define suffix-indexed candidate value functions \(q_h^{\theta,\pi}\) and \(v_h^{\theta,\pi}\) by the terminal condition \(v_H^{\theta,\pi}(x,\varnothing)=0\) and the recursion in \eqref{eq:suffix-value-recursion} with \(\star\) replaced by \(\theta\).
For a true current representation \(m\in\cB_f\) and a candidate suffix \(z_{h:H-1}\), define the scalar Bellman residual \(\mathfrak b_h^{\theta,\pi}(x,a;m,z_{h:H-1})\) by
\[
\mathfrak b_h^{\theta,\pi}(x,a;m,z_{h:H-1}) \coloneqq
 r_h^\theta(x,a,z_h)-r_h^\star(x,a,m)+\sum_{x'\in\cX}
\bigl(T_h^\theta(x'\mid x,a,z_h)-T_h^\star(x'\mid x,a,m)\bigr)
 v_{h+1}^{\theta,\pi}(x',z_{h+1:H-1}).
\]
This is the Bellman residual of the candidate suffix value under the true one-step Bellman operator.

For a finite positive measure \(\mathsf M\) on tuples \((h,m,z_{h:H-1},x,a)\), write
\[
\|\mathfrak b^{\theta,\pi}\|_{2,\mathsf M}^2\coloneq\int
\mathfrak b_h^{\theta,\pi}(x,a;m,z_{h:H-1})^2
\,\mathrm d \mathsf M(h,m,z_{h:H-1},x,a).
\]
For each \(\theta\), define the logged and target lifted measures by
\[
\mathsf D_{\mathrm{log}}^\theta\coloneq
\sum_{h=0}^{H-1}\EE_{\rho\sim\fp}\left[
\delta_{(h,m_h^{\star,\rho},\bar m_{h:H-1}^{\theta,\rho})}
\otimes \nu_h^{\star,\rho}\right], \qquad
\mathsf D_{\mathrm{tar}}^{\theta,\pi;\eta}\coloneq\sum_{h=0}^{H-1}
\delta_{(h,m_h^{\star,\eta},m_{h:H-1}^{\theta,\eta})}
\otimes \nu_h^{\star,\pi;\eta}.
\]
The target measure weights residuals by the representative-agent occupancy under \(\pi\) against the population flow of \(\eta\), while the logged measure weights residuals by behavior population flows.

Define the Bellman-transfer concentrability coefficient
\[
\kappa(\pi;\eta;s)\coloneq\sup_{\theta\in\Theta_s}\frac{
\left(\displaystyle\int \mathfrak b^{\theta,\pi}\,\mathrm d\mathsf D_{\mathrm{tar}}^{\theta,\pi;\eta}\right)^2
}{\|\mathfrak b^{\theta,\pi}\|_{2,\mathsf D_{\mathrm{log}}^\theta}^2}
=\sup_{\theta\in\Theta_s}\frac{
\left(J_\theta(\pi;\eta)-J(\pi;\eta)\right)^2
}{\|\mathfrak b^{\theta,\pi}\|_{2,\mathsf D_{\mathrm{log}}^\theta}^2},
\]
with the conventions \(0/0=0\) and positive-over-zero equal to \(+\infty\). The proof of the last equality is in App.~\ref{app:bellman-value-transfer}. A finite coefficient means that every model in \(\Theta_s\) whose Bellman residual is small on logged behavior population flows must have small signed Bellman residual along the target evaluation of \(\pi\) against~\(\eta\).

Logged and target population policies may induce different Dirac masses in the population and representation coordinates, so the corresponding lifted measures can be mutually singular even when their state-action marginals overlap and the candidate models generalize correctly between the two contexts.

The suffix \(m_{h:H-1}^{\theta,\eta}\) in \(\mathsf D_{\mathrm{tar}}^{\theta,\pi;\eta}\) is essential. Proposition~\ref{prop:one-step-representation-not-sufficient} shows that a current one-step representation need not determine continuation values, so the Bellman residual must evaluate candidate continuation values along the future representation path. The logged denominator uses \(\bar m_{h:H-1}^{\theta,\rho}\), because the offline samples identify candidate features on the true logged population flow.

\begin{remark}[An \(L^2\) Bellman coverage coefficient]

Another natural Bellman-error coverage ratio is
\[
\mathscr{C}(\pi;\eta;s)\coloneq\sup_{\theta\in\Theta_s}\frac{
\|\mathfrak b^{\theta,\pi}\|_{2,\mathsf D_{\mathrm{tar}}^{\theta,\pi;\eta}}^2
}{\|\mathfrak b^{\theta,\pi}\|_{2,\mathsf D_{\mathrm{log}}^\theta}^2},
\]
with the same conventions.
This is similar to the Bellman-error coverage ratios used in offline Markov games \citep{zhang2023offline}.
Since \(\mathsf D_{\mathrm{tar}}^{\theta,\pi;\eta}\) has total mass \(H\), Cauchy-Schwarz inequality gives \(\kappa(\pi;\eta;s)\le H \mathscr{C}(\pi;\eta; s)\).
Thus \(\mathscr{C}\) can replace \(\kappa\) in all bounds at the cost of an additional factor \(\sqrt H\) in the value-transfer radius.

The coefficient \(\kappa\) measures only the transfer needed for the signed value error, whereas the coefficient \(\mathscr{C}\) controls the stronger target \(L^2\) Bellman-residual error.
\end{remark}

Let \(a_H\coloneq 1+(H-1)^2\).
App.~\ref{app:bellman-value-transfer} proves the bound \(\|\mathfrak b^{\theta,\pi}\|_{2,\mathsf D_{\mathrm{log}}^\theta}^2 \le a_H\cL(\theta),\) for all \(\theta\in\Theta,\;\pi\in\Pi\).
For \(s>0\), define \(R_s(\pi;\eta) \coloneq \sqrt{a_Hs\,\kappa(\pi;\eta;s)}.\)
Then, whenever \(\theta\in\Theta_s\), \(|J_\theta(\pi;\eta)-J(\pi;\eta)| \le R_s(\pi;\eta).\)
The appendix uses this radius to control the one-sided interval errors of the optimistic--pessimistic planner.

\subsection{Nash-gap guarantee}
\label{subsec:nash-gap-guarantee}

We now combine the confidence event with the optimistic--pessimistic planner.
For a population policy \(\eta\), define the optimistic response sub-optimality of \(\tilde\pi\) by \(\operatorname{subopt}^{+}_{\beta,\eta}(\tilde\pi) \coloneq \sup_{\pi\in\Pi}\overline J_\beta(\pi;\eta)-\overline J_\beta(\tilde\pi;\eta).\)
This term is zero when \(\tilde\pi\) is an optimistic best response to \(\eta\), and it is small when \(\tilde\pi\) is a good optimistic approximate response.

\begin{theorem}[Adaptive Nash-gap bound]
\label{thm:adaptive-nash-gap}
Assume that, for some \(s>0\), the event \(\theta^\star\in\cC(\beta)\subseteq\Theta_s\) holds.
Let \(\widehat\eta\) be an \(\eopt\)-approximate minimizer of \(\widehat{\gap}_\beta\).
Then, for every \(\eta\in\Pi\),
\[
\gap(\widehat\eta)\le\eopt+\gap(\eta)+R_s(\eta;\eta)
+\inf_{\tilde\pi\in\Pi}
\left\{R_s(\tilde\pi;\eta)+\operatorname{subopt}^{+}_{\beta,\eta}(\tilde\pi)\right\}.
\]
\end{theorem}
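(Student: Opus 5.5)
The plan is to chain three inequalities on the event \(\theta^\star\in\cC(\beta)\subseteq\Theta_s\). First, the interval objective upper bounds the true Nash gap. Second, \(\widehat\eta\) nearly minimizes that objective, so \(\widehat{\gap}_\beta(\widehat\eta)\le \widehat{\gap}_\beta(\eta)+\eopt\) for any comparator \(\eta\). Third, \(\widehat{\gap}_\beta(\eta)\) is at most the true gap of \(\eta\) plus the one-sided interval widths, and those widths are controlled by the radius \(R_s\) from Section~\ref{subsec:value-transfer}.

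\textbf{Step 1 (validity).} Since \(\theta^\star\in\cC(\beta)\), we have \(\overline J_\beta(\pi;\widehat\eta)\ge J_{\theta^\star}(\pi;\widehat\eta)=J(\pi;\widehat\eta)\) for every \(\pi\), and \(\underline J_\beta(\widehat\eta;\widehat\eta)\le J(\widehat\eta;\widehat\eta)\). Taking the supremum over \(\pi\) gives \(\gap(\widehat\eta)\le\widehat{\gap}_\beta(\widehat\eta)\le \widehat{\gap}_\beta(\eta)+\eopt\).

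\textbf{Step 2 (pessimistic term).} Every \(\theta\in\cC(\beta)\) lies in \(\Theta_s\). The transfer bound \(|J_\theta(\pi;\eta)-J(\pi;\eta)|\le R_s(\pi;\eta)\) then gives \(\underline J_\beta(\eta;\eta)\ge J(\eta;\eta)-R_s(\eta;\eta)\). This bound comes from combining \(\|\mathfrak b^{\theta,\pi}\|^2_{2,\mathsf D_{\log}^\theta}\le a_H\cL(\theta)\le a_H s\) with the definition of \(\kappa\).

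\textbf{Step 3 (optimistic term, adaptive response).} Fix an arbitrary \(\tilde\pi\). By the definition of \(\operatorname{subopt}^+\),
\[
\sup_{\pi}\overline J_\beta(\pi;\eta)=\overline J_\beta(\tilde\pi;\eta)+\operatorname{subopt}^{+}_{\beta,\eta}(\tilde\pi).
\]
The same transfer bound gives \(\overline J_\beta(\tilde\pi;\eta)\le J(\tilde\pi;\eta)+R_s(\tilde\pi;\eta)\), and \(J(\tilde\pi;\eta)\le\sup_\pi J(\pi;\eta)\). Therefore
\[
\widehat{\gap}_\beta(\eta)\le \sup_\pi J(\pi;\eta)-J(\eta;\eta)+R_s(\eta;\eta)+R_s(\tilde\pi;\eta)+\operatorname{subopt}^{+}_{\beta,\eta}(\tilde\pi).
\]
The first two terms combine to \(\gap(\eta)\). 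Taking the infimum over \(\tilde\pi\) and substituting into Step 1 yields the claim.

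The main care point is that the radius bound is used only for the specific pairs \((\eta;\eta)\) and \((\tilde\pi;\eta)\), never uniformly over all deviations. This is exactly what makes the guarantee adaptive. We must avoid bounding \(\sup_\pi\overline J_\beta(\pi;\eta)\) by \(\sup_\pi\{J(\pi;\eta)+R_s(\pi;\eta)\}\), and instead route the optimistic supremum through \(\operatorname{subopt}^+\). One should also check the conventions for infinite \(\kappa\): if \(R_s=+\infty\), the bound is trivial. The transfer inequality itself holds even when the denominator is zero, because \(\kappa<\infty\) then forces a zero value error.
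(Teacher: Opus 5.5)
Your proposal is correct and follows essentially the same route as the paper. The paper first proves a deterministic oracle inequality: interval validity plus approximate optimality, then routing \(\sup_\pi\overline J_{\cC}(\pi;\eta)\) through \(\operatorname{subopt}^{+}_{\cC,\eta}(\tilde\pi)\) for a fixed \(\tilde\pi\). It then bounds the one-sided interval errors \(r^{\pm}_{\cC}\) by \(R_s\) via the Bellman-residual identity and the logged-residual bound, exactly as in your Steps 1--3, and your handling of the \(\kappa=+\infty\) and zero-denominator conventions matches the paper's argument.
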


The proof is given in App.~\ref{app:nash-gap-proof}.
This theorem is the main offline equilibrium statement.
It says that learning succeeds when two things are true for some low-gap policy \(\eta\): the data can evaluate its self-play value, and the data can evaluate at least one good approximate response to it.
It does not require coverage of every possible unilateral deviation.

A simpler corollary is obtained by replacing the adaptive response term by a worst-case Bellman-transfer coefficient, \(\kappa_{\mathrm{uni}}(\eta;s) \coloneq \max\left\{\kappa(\eta;\eta;s),\sup_{\pi\in\Pi}\kappa(\pi;\eta;s)\right\}.\)

\begin{corollary}[Equilibrium rate]
\label{cor:equilibrium-rate}
Under the assumptions of Theorem~\ref{thm:finite-class-confidence}, with probability at least \(1-\delta\), every \(\eopt\)-approximate minimizer \(\widehat\eta\) of \(\widehat{\gap}_\beta\) satisfies
\[
\gap(\widehat\eta) \le \eopt + \inf_{\eta\in\Pi}\left\{
\gap(\eta) + \sqrt{a_Hs_{N,K}}\left(
\sqrt{\kappa(\eta;\eta;s_{N,K})}
+ \sup_{\pi\in\Pi}\sqrt{\kappa(\pi;\eta;s_{N,K})}
\right)\right\}.
\]
In particular, for any equilibrium \(\eta_\star\), the bound is at most \(\eopt+2\sqrt{a_Hs_{N,K}}\sqrt{\kappa_{\mathrm{uni}}(\eta_\star;s_{N,K})}\).
\end{corollary}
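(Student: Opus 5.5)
The plan is to combine Theorem~\ref{thm:finite-class-confidence} with Theorem~\ref{thm:adaptive-nash-gap}, taking \(s=s_{N,K}\), and then bound the adaptive response term crudely.

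First, invoke Theorem~\ref{thm:finite-class-confidence} with the stated choice of \(\beta\). On an event of probability at least \(1-\delta\) we have \(\theta^\star\in\cC(\beta)\subseteq\Theta_{s_{N,K}}\), which is exactly the hypothesis of Theorem~\ref{thm:adaptive-nash-gap} with \(s=s_{N,K}\). If \(s_{N,K}=0\) we may replace it by any \(s>0\) and let \(s\downarrow 0\); alternatively we note that the \(O(\cdot)\) expression is positive, so \(s_{N,K}>0\) can be assumed. On this event, Theorem~\ref{thm:adaptive-nash-gap} gives, for every \(\eta\in\Pi\) and simultaneously for every \(\eopt\)-approximate minimizer \(\widehat\eta\),
\[
\gap(\widehat\eta)\le\eopt+\gap(\eta)+R_s(\eta;\eta)+\inf_{\tilde\pi\in\Pi}\{R_s(\tilde\pi;\eta)+\operatorname{subopt}^{+}_{\beta,\eta}(\tilde\pi)\}.
\]

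Second, upper bound the infimum by choosing \(\tilde\pi\) to be an optimistic best response, so that \(\operatorname{subopt}^+_{\beta,\eta}(\tilde\pi)=0\). Such a maximizer exists because \(\cC(\beta)\) is finite: each \(J_\theta(\cdot;\eta)\) is continuous on the compact set \(\Pi\), and so is their finite supremum. If one prefers to avoid any existence argument, take an \(\epsilon\)-maximizer and let \(\epsilon\to0\). Then
\[
\inf_{\tilde\pi}\{\cdots\}\le R_s(\tilde\pi;\eta)\le\sup_{\pi\in\Pi}R_s(\pi;\eta)=\sqrt{a_Hs}\,\sup_{\pi}\sqrt{\kappa(\pi;\eta;s)}.
\]
Substituting \(R_s(\eta;\eta)=\sqrt{a_Hs\,\kappa(\eta;\eta;s)}\) and taking the infimum over \(\eta\) yields the displayed bound. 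Infinite \(\kappa\) values make the bound trivially true.

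Third, for the particular case, take \(\eta=\eta_\star\in\Pi_\star\), which is nonempty by Proposition~\ref{prop:existence-nash}. Then \(\gap(\eta_\star)=0\), and each of \(\kappa(\eta_\star;\eta_\star;s)\) and \(\sup_\pi\kappa(\pi;\eta_\star;s)\) is at most \(\kappa_{\mathrm{uni}}(\eta_\star;s)\). This gives \(\eopt+2\sqrt{a_Hs_{N,K}\kappa_{\mathrm{uni}}(\eta_\star;s_{N,K})}\).

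The only delicate point is the existence or approximation of the optimistic best response. The \(\epsilon\)-maximizer argument sidesteps it, so the proof is essentially bookkeeping.
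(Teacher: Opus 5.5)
Your proposal is correct and follows the paper's proof essentially step by step. On the confidence event of Theorem~\ref{thm:finite-class-confidence} you apply Theorem~\ref{thm:adaptive-nash-gap} with \(s=s_{N,K}\), which is strictly positive by its explicit form. You then kill the response-suboptimality term with near-maximizers of \(\overline J_\beta(\cdot;\eta)\), as the paper does with its sequence \(\tilde\pi_j\); your exact optimistic best response, obtained from compactness of \(\Pi\) and continuity of the finitely many \(J_\theta(\cdot;\eta)\), also works. Finally you specialize to \(\eta_\star\) using \(\gap(\eta_\star)=0\) and \(\kappa\le\kappa_{\mathrm{uni}}\).
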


The proof is given in App.~\ref{app:nash-gap-proof}.
When \(B_f,\tau^{-1},L_{T,h},L_{r,h}\) are bounded uniformly in \(h\), the last display gives the rate
\[
\gap(\widehat\eta) \le \eopt + O\!\left(
\inf_{\eta_\star\in\Pi_\star}\sqrt{\kappa_{\mathrm{uni}}(\eta_\star;s_{N,K})}
\left[\sqrt{\frac{a_HH\log(|\Theta|/\delta)}{N}}
+ \sqrt{\frac{a_HH}{K}}\right]\right).
\]
The \(N^{-1/2}\) term is the usual offline model-selection error from observing \(N\) logged representative trajectories.
The \(K^{-1/2}\) term is specific to mean-field representation learning: each logged episode contains only a finite population sample, so the learner observes \(\widehat m_h^\theta\) rather than the population representation \(\bar m_h^{\theta,\rho}\).
Even with infinitely many episodes, finite \(K\) creates a plug-in error in the representation contexts used by the reward and transition models.
The confidence and Nash-gap bounds have no explicit dependence on the cardinality of the state space.

\section{Experiments}
\label{sec:experiments}

We design and study a routing game motivated by large-scale fulfillment and supply-chain optimization problems in the real world.
Consider a directed network of origin nodes, destination nodes, and some intermediate nodes.
Each agent is assigned an origin-destination pair and has to choose intermediate nodes to route its flow.
There may be multiple agents assigned to an origin-destination pair.
Depending on the number of agents choosing to route their flow through a particular edge in this graph, each agent's flow on that edge suffers a latency.
The reward of an agent is a function of the negative travel cost of its route under the congestion created by the entire population.
This game is naturally modeled as a one-step mean-field game since the rewards of an agent depend on the mean-field distribution as opposed to the behavior of each individual agent.

The routing game mentioned above is a one-step game, i.e., \(H=1\).
Therefore, the agent must learn only the unknown reward function, which in turn depends on the unknown low-dimensional mean-field representation, and not the transition dynamics.
Despite this, the game is rich enough to test our hypothesis that learning a low-dimensional mean-field representation is advantageous.
Algorithm~\ref{alg:offline-representation-learning} is not computationally tractable, so we isolate its central, implementable representation-learning component.
From offline rows containing a focal state--action pair, a finite \(K\)-sample population snapshot, and a scalar reward label, we fit neural reward models that differ only in how they encode the population.
We then solve the one-step game induced by each fitted model and evaluate both population-weighted reward prediction error and the exact true Nash gap of the resulting policy.

\subsection{The routing game environment}
\label{subsec:exp-routing-game}

We make the routing game environment more precise (see Figure~\ref{fig:representation_illustration}).
The directed network has five origins \(O_1,\ldots,O_5\), five destinations \(D_1,\ldots,D_5\), and two intermediate layers \(U_1,\ldots,U_4\) and \(V_1,\ldots,V_4\).
A state is an origin--destination pair, \(x=(i,j)\in\{1,\ldots,5\}^2\), making \(|\cX|=25\).
An action selects one node in each intermediate layer, \(a=(u,v)\in\{1,\ldots,4\}^2\), making \(|\cA|=16\).

The state--action pair \((x,a)=((i,j),(u,v))\) specifies the route \(O_i\rightarrow U_u\rightarrow V_v\rightarrow D_j\).
There are \(5\cdot4+4\cdot4+4\cdot5=56\) directed edges, and every route uses exactly three of them.
Let \(f^\star\colon\cX\times\cA\to\{0,1\}^{56}\) be the true feature map, where \(f_e^\star(x,a)=1\) if the route specified by \((x,a)\) uses edge \(e\), and \(f_e^\star(x,a)=0\) otherwise.
For a population state--action law \(\nu\in\Delta(\cX\times\cA)\), the true mean-field representation is \(m^\star\coloneq\EE_{(X,A)\sim\nu}[f^\star(X,A)]\).
Its \(e\)-th coordinate is the population mass routed through edge \(e\).
Thus the full population law has 400 coordinates, whereas the mean-field representation has only 56.

Each edge \(e\) has a minimum latency \(\tau_e>0\), capacity \(c_e>0\), and positive congestion coefficients \(\alpha_e,\beta_e\).
Its latency at load \(z\in[0,1]\) is
\begin{equation}
\label{eq:exp-routing-latency}
\ell_e(z)
=
\tau_e + \alpha_e\frac{z}{c_e} + \beta_e\left(\frac{z}{c_e}\right)^2.
\end{equation}
With a small fixed route offset \(\delta_{x,a}\), define
\begin{equation}
\label{eq:exp-routing-reward}
C(x,a,m)
=
\delta_{x,a}+\sum_{e=1}^{56}f_e^\star(x,a)\ell_e(m_e),
\qquad
r^\star(x,a,m)
=
1-\frac{C(x,a,m)}{C_{\max}},
\end{equation}
where \(C_{\max}=\max_{x,a,m}C(x,a,m)\), with the maximum taken over \(x\in\cX\), \(a\in\cA\), and mean-field representations \(m=\EE_{(X,A)\sim\nu}[f^\star(X,A)]\) induced by \(\nu\in\Delta(\cX\times\cA)\), so that \(r^\star \in [0,1]\).
The edge parameters and route offsets are asymmetric and fixed.
Consequently, a route that is attractive under one population can be poor under another.
In particular, the uniform routing policy is not automatically close to equilibrium, as it might first seem plausible.

\paragraph{Offline data generation.}
Suppressing the sole stage index \(h=0\), the data follow the observation model of Section~\ref{sec:offline}, except that the demand law also varies across rows.
For row \(n\in\{1,\ldots,N\}\), a context specifies an origin--destination demand law \(\mu_n\) and a behavior population policy \(\rho_n\), which induce \(\nu_n(x,a)=\mu_n(x)\rho_n(a\mid x)\).
Write \(m_n^\star\coloneq\EE_{\nu_n}[f^\star(X,A)]\) for the corresponding true representation.
Conditional on that context, we draw a representative-agent state--action pair \((X_n,A_n)\sim\nu_n\) and, independently, a population snapshot \((X_n^{(k)},A_n^{(k)})_{k=1}^K\) with \((X_n^{(k)},A_n^{(k)})\stackrel{\mathrm{i.i.d.}}{\sim}\nu_n\).
The observed reward is \(R_n=r^\star(X_n,A_n,m_n^\star)\), so the finite snapshot affects the model input but not the reward label.
Thus \(N\) counts logged representative-agent observations, whereas \(K\) controls the accuracy of each row's empirical population input without adding noise to \(R_n\).
The context generator and the precise data set construction are given in Appendix~\ref{app:exp-routing-data}.

\subsection{The reward models}
\label{subsec:exp-routing-models}

For \((x,a)=((i,j),(u,v))\), let \(z(x,a)\in\{0,1\}^{26}\) concatenate three one-hot encodings: 5 coordinates for the origin \(i\), 5 for the destination \(j\), and 16 for the action \((u,v)\); so \(z(x,a)\) has exactly three entries equal to one with the rest being zero.
For notational consistency with Section~\ref{sec:offline}, write \(f^\theta(x,a)\) for the output of the learned population encoder applied to \(z(x,a)\).
The mean of \(z\) reveals the origin and destination marginals and the loads on the 16 middle edges \(U_u\to V_v\).
It does not reveal the joint frequencies of origins and selected \(U\)-nodes, or of selected \(V\)-nodes and destinations, which determine the 40 outer-edge loads.
This makes the raw mean useful but structurally incomplete.

We compare six reward models that encode different amounts and structures of population information.
All six models share the same focal encoder, reward head, optimizer, data, and checkpoint-selection rule.

\begin{enumerate}[label=(\roman*)]
\item \emph{Single-agent.}  The model receives \((x,a)\) but no population input.

\item \emph{Monolithic raw mean.}  The model averages the raw encodings of the \(K\) population samples and then applies a nonlinear population network, \(g^\theta\!\left(\frac1K\sum_{k=1}^K z(X_n^{(k)},A_n^{(k)})\right)\).

\item \emph{Learned mean field.}  For row \(n\), the model encodes each population sample before averaging,
\[
\widehat m_n^\theta
\coloneq
\frac1K\sum_{k=1}^K
f^\theta(X_n^{(k)},A_n^{(k)}).
\]
Figure~\ref{fig:learned-mean-field-architecture} depicts the learned mean-field reward architecture: a shared encoder is applied to each population sample before mean pooling, and the pooled representation is concatenated with a separate embedding of the focal state--action pair.
Because \(z(x,a)\) uniquely identifies the route, a suitable encoder can satisfy \(f^\theta(x,a)=f^\star(x,a)\). For such \(\theta\), \(\widehat m_n^\theta\) is an empirical estimate of \(m_n^\star\).

\item \emph{Finite-\(K\) oracle.}  For each population sample, the model is given the output of the true feature map \(f^\star\), but it is trained from the empirical representation \(\frac1K\sum_{k=1}^K f^\star(X_n^{(k)},A_n^{(k)})\).
It removes representation-learning error while retaining finite-snapshot noise.

\item \emph{Infinite-population oracle.}  For each training row \(n\), the model is given the analytic representation \(m_n^\star\).
It removes both representation learning and finite-\(K\) input noise, but it still learns the reward function from finitely many offline rows.

\item \emph{Full population law.} For row \(n\), let \(\widehat\nu_n\in\Delta(\cX\times\cA)\) be the empirical population law, with
\[
\widehat\nu_n(x,a)
=
\frac1K\sum_{k=1}^K
\mathbf 1\{(X_n^{(k)},A_n^{(k)})=(x,a)\}.
\]
That is, \(\widehat\nu_n\) is an empirical histogram over the 400 state--action categories.
This model feeds \(\widehat\nu_n\) to a parameter-matched population encoder.
It is therefore a learned full-information baseline.
\end{enumerate}

\begin{figure}[H]
\centering
\begin{tikzpicture}[
  x=1cm,
  y=1cm,
  font=\footnotesize,
  input box/.style={draw=RoyalBlue!65!black, fill=RoyalBlue!5,
    rounded corners=2pt, line width=0.6pt, minimum width=2.35cm,
    minimum height=0.54cm, align=center, inner sep=3pt},
  population encoder/.style={draw=Orange!88!black, fill=Orange!9,
    rounded corners=2pt, line width=0.7pt, minimum width=1.40cm,
    minimum height=0.54cm, align=center, inner sep=3pt},
  operation/.style={draw=black!50, fill=black!2, rounded corners=2pt,
    line width=0.7pt, minimum width=1.60cm, minimum height=0.86cm,
    align=center, inner sep=3pt},
  focal encoder/.style={draw=ForestGreen!70!black, fill=ForestGreen!7,
    rounded corners=2pt, line width=0.7pt, text width=2.65cm,
    minimum height=0.78cm, align=center, inner sep=3pt},
  embedding/.style={draw=ForestGreen!65!black, fill=ForestGreen!6,
    rounded corners=2pt, line width=0.65pt, minimum width=2.08cm,
    minimum height=0.61cm, align=center, inner sep=3pt},
  merge box/.style={draw=black!52, fill=black!2, rounded corners=2pt,
    line width=0.65pt, text width=1.82cm, minimum height=0.90cm,
    align=center, inner sep=3pt},
  head box/.style={draw=RoyalBlue!70!black, fill=RoyalBlue!6,
    rounded corners=2pt, line width=0.75pt, text width=2.25cm,
    minimum height=0.96cm, align=center, inner sep=3pt},
  prediction/.style={draw=BrickRed!80!black, fill=BrickRed!6,
    rounded corners=2pt, line width=0.8pt, text width=2.15cm,
    minimum height=0.86cm, align=center, inner sep=3pt},
  flow/.style={->, >=stealth, draw=black!72, line width=0.85pt,
    line cap=round, line join=round, shorten >=1.4pt},
  stage flow/.style={->, >=stealth, draw=black!72, line width=0.90pt,
    line cap=round, line join=round, shorten <=0.8pt, shorten >=0.8pt},
  bus/.style={draw=black!58, line width=0.75pt, line cap=round}
]
\path[use as bounding box] (0,0) rectangle (16.00,5.58);

\node[font=\bfseries, text=black!82] at (1.30,5.30)
  {Population samples};
\node[input box] (pop1) at (1.30,4.30)
  {\((X_n^{(1)},A_n^{(1)})\)};
\node[input box] (pop2) at (1.30,3.40)
  {\((X_n^{(2)},A_n^{(2)})\)};
\node at (1.30,2.72) {\(\vdots\)};
\node[input box] (popK) at (1.30,2.05)
  {\((X_n^{(K)},A_n^{(K)})\)};

\node[inner xsep=4pt, text=Orange!78!black,
  font=\bfseries\scriptsize] at (4.33,4.82)
  {shared population encoder};
\node[population encoder] (enc1) at (4.33,4.30) {\(f^\theta\)};
\node[population encoder] (enc2) at (4.33,3.40) {\(f^\theta\)};
\node at (4.33,2.72) {\(\vdots\)};
\node[population encoder] (encK) at (4.33,2.05) {\(f^\theta\)};
\node[align=center, text=Orange!72!black, font=\scriptsize]
  at (4.33,1.40)
  {\(26\to64\to64\to56\) (hidden SiLU)};

\draw[flow] (pop1.east) -- (enc1.west);
\draw[flow] (pop2.east) -- (enc2.west);
\draw[flow] (popK.east) -- (encK.west);
\draw[bus] (enc1.east) -- (5.35,4.30);
\draw[bus] (enc2.east) -- (5.35,3.40);
\draw[bus] (encK.east) -- (5.35,2.05);
\draw[bus] (5.35,2.05) -- (5.35,4.30);
\coordinate (population-bus) at (5.35,3.18);

\node[operation, right=0.50cm of population-bus] (pool)
  {\textbf{mean pool}\\[-1pt]
   \(\displaystyle\frac1K\sum_{k=1}^{K}\)};
\draw[stage flow] (population-bus) -- (pool.west);

\node[embedding, right=0.50cm of pool] (meanemb)
  {\(\widehat m_n^\theta\in\RR^{56}\)};
\draw[stage flow] (pool.east) -- (meanemb.west);

\node[font=\bfseries, text=black!82] at (1.30,1.05)
  {Focal pair};
\node[input box, minimum width=2.15cm] (focal) at (1.30,0.34)
  {\((X_n,A_n)\)};
\node[focal encoder] (focenc) at (4.33,0.34)
  {\textbf{focal encoder}\\[-1pt]
   \(26\to64\to64\to32\) (hidden SiLU)};
\node[embedding] (focemb) at (7.32,0.34)
  {\(e_{\mathrm{foc},n}\in\RR^{32}\)};
\draw[flow] (focal.east) -- (focenc.west);
\draw[flow] (focenc.east) -- (focemb.west);

\node[merge box, right=0.50cm of meanemb] (merge)
  {\textbf{concatenate}\\[-1pt]
   \(32+56=88\)};
\draw[stage flow] (meanemb.east) -- (merge.west);
\draw[flow, rounded corners=3pt]
  (focemb.east) -| (merge.south);

\node[head box, right=0.50cm of merge] (head)
  {\textbf{reward head}\\[-1pt]
   \scalebox{0.86}{\(88\to128\to64\to1\)}\\[-1pt]
   {\scriptsize hidden SiLU; final sigmoid}};
\draw[stage flow] (merge.east) -- (head.west);

\node[prediction, below=0.50cm of head] (pred)
  {\textbf{predicted reward}\\[-1pt]
   \(\widehat R_n\in[0,1]\)};
\draw[flow] (head.south) -- (pred.north);
\end{tikzpicture}
\caption{Architecture of the learned mean-field reward model. Each population sample is mapped by the shared encoder \(f^\theta\), and the \(K\) encoded vectors are averaged to form \(\widehat m_n^\theta\). In parallel, a separate focal encoder embeds the focal state--action pair. The focal and population embeddings are concatenated and passed through the reward head to predict the scalar reward.}
\label{fig:learned-mean-field-architecture}
\end{figure}

\subsection{Evaluation metrics}
\label{subsec:exp-routing-evaluation}

Final evaluation removes avoidable population-sampling noise.
For each evaluation context \(c\), let \(\nu_c\) be its induced population law and let \(m_c^\star\coloneq\sum_{x,a}\nu_c(x,a)f^\star(x,a)\) be its true representation.
Every model receives the exact conditional population input appropriate to its architecture: the learned model receives \(\sum_{x,a}\nu_c(x,a)f^\theta(x,a)\), the monolithic model receives \(g^\theta(\sum_{x,a}\nu_c(x,a)z(x,a))\), the two oracles receive \(m_c^\star\), the full-law model receives \(\nu_c\), and the single-agent model receives no population input.
We then enumerate all 400 state--action categories exactly.
Writing \(r_c^\star(x,a)=r^\star(x,a,m_c^\star)\), the primary reward metric is
\begin{equation}
\label{eq:exp-routing-rmse}
\operatorname{RMSE}_{\mathrm{pop}}
=
\left[
\EE_c\sum_{x,a}\nu_c(x,a)
\bigl(\widehat r_c(x,a)-r_c^\star(x,a)\bigr)^2
\right]^{1/2}.
\end{equation}
For each context, the sum over all 400 state--action pairs is evaluated exactly. The expectation over contexts is approximated using fixed randomized quasi--Monte Carlo context samples shared across models.
Further details are given in Appendix~\ref{app:exp-routing-evaluation}.

For policy evaluation, fix a target demand law \(\mu\) and define \(m^{\star,\eta} \coloneq \sum_{x,a}\mu(x) \eta(a\mid x)f^\star(x,a)\).
The exact one-step Nash gap is
\begin{equation}
\label{eq:exp-routing-gap}
\gap_\mu(\eta)
=
\sum_x\mu(x)
\left[
\max_a r^\star(x,a,m^{\star,\eta})
-
\sum_a\eta(a\mid x)r^\star(x,a,m^{\star,\eta})
\right].
\end{equation}
This is the expected gain from a unilateral best response while the population flow is held fixed.
Multiplying by \(C_{\max}\) gives the equivalent mean excess route cost.

For each fitted reward model, we use entropic mirror descent, entropic Mirror--Prox, and Adam optimization of a smoothed Nash-gap objective to generate candidate policies for the one-step game obtained by replacing the true reward function with the fitted model.
We select the candidate with the smallest Nash residual under the fitted reward model and use the true reward only afterward to evaluate that policy via~\eqref{eq:exp-routing-gap}.
We also report the uniform policy and an exact-reward control obtained by applying the optimization methods described above directly to the true reward.
The exact-reward control attains an exact true Nash gap of \(9.58\times10^{-6}\).
The optimization methods and control details are given in Appendix~\ref{app:exp-routing-planning}.
All primary curves report means and 95\% Student-\(t\) half-widths over the five paired training seeds after the metric for each seed is averaged over the shared quasi--Monte Carlo context samples.

\subsection{Results}
\label{subsec:exp-results}

\begin{figure}[t]
\centering
\includegraphics[width=\linewidth]{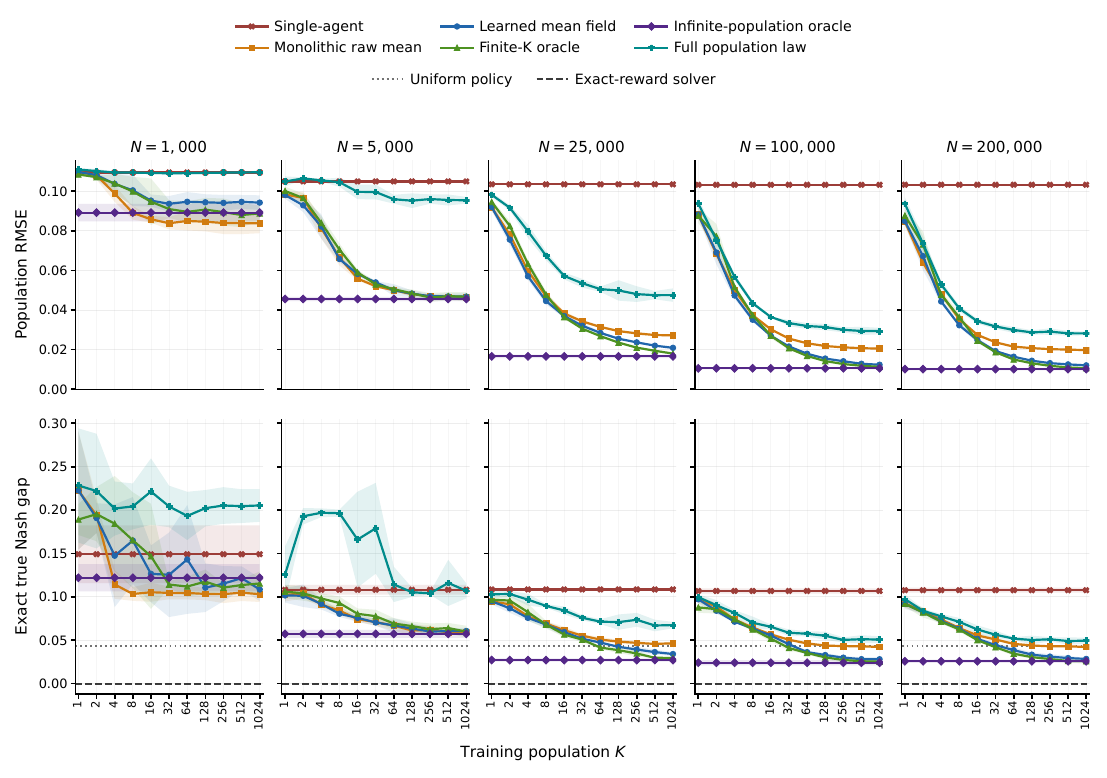}
\caption{Population reward error (upper row) and exact true Nash gap of policies obtained by solving the fitted games (lower row), as functions of the number \(N\) of offline rows and the number \(K\) of population observations in each row.
Lower is better.
For each fitted reward model, we select the policy that appears closest to equilibrium according to that model and then evaluate its Nash gap using the true reward.
The horizontal lines in the lower row show the Nash gaps of the uniform policy and the exact-reward control.
Lines connect observed means only; intervals are 95\% Student-\(t\) intervals over five paired training seeds.}
\label{fig:exp-routing-landscapes}
\label{fig:exp-routing-rmse-landscape}
\label{fig:exp-routing-gap-landscape}
\end{figure}

Figure~\ref{fig:exp-routing-landscapes} shows both statistical resources over the same \(N\)-by-\(K\) grid.
The single-agent and infinite-population-oracle curves are constant in \(K\) by construction.
At \(K=1\), the operation-order advantage of encode-then-average disappears because there is only one population sample to pool. The data exhibit precisely this sanity check.
As \(K\) grows, the models separate according to the information they can recover, and the finite-\(K\) oracle approaches the infinite-population oracle.

Increasing \(N\) exposes the same hierarchy.
At the high-resource corner \((N,K)=(200000,1024)\), the five models forming the population-information hierarchy are ordered, up to the expected convergence of the two oracles, by both population RMSE and Nash gap as
\[
\text{infinite-population oracle}
\approx
\text{finite-\(K\) oracle}
<
\text{learned mean field}
<
\text{monolithic raw mean}
<
\text{single-agent}.
\]

The full-population-law baseline is especially informative.
It contains every population coordinate, but must learn from a noisy 400-dimensional histogram without the edge-load inductive bias.
On average, it performs worse than the learned mean-field representation model.
This comparison shows that access to the full law is not by itself an effective finite-data representation under a matched parameter and optimization budget.
The full law is sufficient in principle, and a different architecture or larger training budget could change its performance.

Appendix~\ref{app:experiments} gives the exact evaluation, planning controls, and statistical conventions.

\section{Related Work}
Our work is most related to the following lines of research.

\paragraph{Mean-field RL.} Finite-state and discrete-time mean-field games provide a natural bridge to reinforcement learning \citep{gomes2010discrete,saldi2018markov}.
Existing mean-field RL algorithms adapt tools from model-free RL and mirror-descent, often with neural network function approximation, to mean-field equilibrium computation \citep{yang2018meanfieldmarl,guo2019learningMFG,lauriere2022scalable}.
Other work establishes statistical guarantees for mean-field RL under prescribed mean-field embeddings or model classes \citep{wang2020breaking,pasztor2023efficient,huang2024modelbasedmfg,huang2024statisticalmfrl}.
Our setting differs by learning from offline data and treating the low-dimensional population representation itself as unknown.

\paragraph{Offline reinforcement learning.} In single-agent offline RL, distribution shift motivates pessimistic or conservative principles, including conservative value estimation, Bellman-consistent pessimism, and model-based pessimism under partial coverage \citep{jin2021pessimism,xie2021bellman,uehara2022pessimisticmodel,jiang2025offline}.
In offline Markov games, equilibrium learning requires controlling both the candidate policy value and the values of unilateral deviations or best responses, leading to unilateral, strategy-wise, and other game-specific coverage notions \citep{cui2022when,cui2022strategywise,zhang2023offline,zhan2024exploiting}.
Complementary offline mean-field precedents are \citep{chen2021pessimism}, which studies offline mean-field RL under invariance structure, and \citep{brunnbauer2025scalable}, which proposes scalable offline mirror descent for MFGs.
However, representation learning for the mean-field is not their focus.

\paragraph{Coverage and distribution shift in offline mean-field learning.}
Prior work on coverage in mean-field RL mainly considers algorithms that can collect new data during learning, either through online interaction or through access to a simulator. \citet{anahtarci2023learningmfg} assume a full-support state-sampling distribution and a behavior policy whose action probabilities are bounded away from zero. \citet{xie2021learningwhileplaying} bound density ratios between equilibrium and optimal-response visitation laws. \citet{mao2022cloudmfg} use bounds that also involve the current policy's visitation law. Thus, these assumptions do not address learning from a fixed offline data set. In single-agent offline RL, coverage of an optimal or comparator policy can be sufficient \citep{jin2021pessimism,uehara2022pessimisticmodel}. In offline Markov games, this coverage is generally insufficient because the Nash gap also depends on unilateral deviations. This fact motivates unilateral coverage \citep{cui2022when,cui2022strategywise}. \citet{zhang2023offline} weaken this requirement by trading the coverage of an approximate response against its response suboptimality. Our setting has an additional source of distribution shift. A change in the population policy changes the population flow and the full representation path. A unilateral deviation changes only the representative agent's occupancy and leaves the population path fixed. We control both shifts with a Bellman-transfer condition. We require this condition only for a low-gap population policy and one approximate response.

\paragraph{Representation learning.} Representation learning is central in low-rank and latent-structure RL, where compact features can make exploration, planning, and offline policy optimization statistically tractable \citep{misra2020kinematic,agarwal2020flambe,uehara2022representation,zhang2023relex}. Related ideas also appear in low-rank Markov games \citep{ni2023representationgames}.
Our model-learning component is closest in spirit to representation learning for low-rank MDPs. In mean-field RL, however, the representation is tied to the population policy being evaluated: changing the policy changes the population flow and therefore the sequence of representations over time. Any unilateral deviation must then be evaluated while holding this policy-induced sequence fixed.

\section{Conclusion and Future Work}
\label{sec:conclusion}
We studied offline mean-field RL when the population affects rewards and transitions only through an unknown low-dimensional representation.
We formalized this representation-learning problem, showed that the resulting equilibrium objective is well defined, and proposed a model-based algorithm.
Under realizability and adaptive Bellman-transfer coverage, we proved a finite-sample Nash-gap guarantee.
In the one-step routing benchmark, learned mean-field representations improve both population reward prediction and the equilibrium quality of policies obtained by planning in the fitted game.
The learned model performs better than a structurally incomplete raw mean, approaches representation oracles as resources grow, and outperforms a parameter-matched model of the complete population law in the high-resource regime.

Several directions remain open, which we have not addressed.
First, lower bound analysis is needed to identify the weakest coverage conditions under which offline equilibrium learning is possible.
Second, our algorithm is statistically motivated but computationally inefficient; designing scalable algorithms with provable comparable guarantees remains  an important problem.
Third, an online version of mean-field representation learning would be valuable.
Such an extension would need new exploration criteria over population flows and unilateral deviations.

\section*{Acknowledgments}
YE is partially supported by the Israeli Science Foundation (ISF) grant no 4032/25. Aditya Makkar is supported by NYU startup funds. 

\bibliographystyle{plainnat}
\bibliography{arxiv}

@article{huang2006large,
  title={Large Population Stochastic Dynamic Games: Closed-Loop McKean--Vlasov Systems and the Nash Certainty Equivalence Principle},
  author={Huang, Minyi and Malham{\'e}, Roland P. and Caines, Peter E.},
  journal={Communications in Information and Systems},
  volume={6},
  number={3},
  pages={221--252},
  year={2006},
  doi={10.4310/CIS.2006.v6.n3.a5}
}

@article{lasry2007mean,
  title={Mean Field Games},
  author={Lasry, Jean-Michel and Lions, Pierre-Louis},
  journal={Japanese Journal of Mathematics},
  volume={2},
  number={1},
  pages={229--260},
  year={2007},
  doi={10.1007/s11537-007-0657-8}
}

@book{carmona2018probabilistic,
  title={Probabilistic Theory of Mean Field Games with Applications I: Mean Field FBSDEs, Control, and Games},
  author={Carmona, Ren{\'e} and Delarue, Fran{\c c}ois},
  series={Probability Theory and Stochastic Modelling},
  volume={83},
  publisher={Springer},
  year={2018},
  doi={10.1007/978-3-319-58920-6}
}

@article{gomes2010discrete,
  title={Discrete Time, Finite State Space Mean Field Games},
  author={Gomes, Diogo A. and Mohr, Joana and Souza, Rafael Rig{\~a}o},
  journal={Journal de Math{\'e}matiques Pures et Appliqu{\'e}es},
  volume={93},
  number={3},
  pages={308--328},
  year={2010},
  doi={10.1016/j.matpur.2009.10.010}
}

@article{saldi2018markov,
  title={Markov--Nash Equilibria in Mean-Field Games with Discounted Cost},
  author={Saldi, Naci and Ba{\c s}ar, Tamer and Raginsky, Maxim},
  journal={SIAM Journal on Control and Optimization},
  volume={56},
  number={6},
  pages={4256--4287},
  year={2018},
  doi={10.1137/17M1112583}
}

@article{lauriere2022learning,
  title={Learning in Mean Field Games: A Survey},
  author={Lauri{\`e}re, Mathieu and Perrin, Sarah and P{\'e}rolat, Julien and Girgin, Sertan and Muller, Paul and {\'E}lie, Romuald and Geist, Matthieu and Pietquin, Olivier},
  journal={arXiv preprint arXiv:2205.12944},
  year={2022},
  eprint={2205.12944},
  archivePrefix={arXiv},
  primaryClass={cs.LG}
}

@article{guo2019learningMFG,
  title={Learning Mean-Field Games},
  author={Guo, Xin and Hu, Anran and Xu, Renyuan and Zhang, Junzi},
  journal={Advances in Neural Information Processing Systems},
  volume={32},
  pages={4966--4976},
  year={2019}
}

@article{kizilkale2013meanfield,
  title={Mean Field Stochastic Adaptive Control},
  author={Kizilkale, Arman C. and Caines, Peter E.},
  journal={IEEE Transactions on Automatic Control},
  volume={58},
  number={4},
  pages={905--920},
  year={2013},
  doi={10.1109/TAC.2012.2228032}
}

@article{yin2014learningMFG,
  title={Learning in Mean-Field Games},
  author={Yin, Huibing and Mehta, Prashant G. and Meyn, Sean P. and Shanbhag, Uday V.},
  journal={IEEE Transactions on Automatic Control},
  volume={59},
  number={3},
  pages={629--644},
  year={2014},
  doi={10.1109/TAC.2013.2287733}
}

@inproceedings{yang2018meanfieldmarl,
  title={Mean Field Multi-Agent Reinforcement Learning},
  author={Yang, Yaodong and Luo, Rui and Li, Minne and Zhou, Ming and Zhang, Weinan and Wang, Jun},
  booktitle={Proceedings of the 35th International Conference on Machine Learning},
  series={Proceedings of Machine Learning Research},
  volume={80},
  pages={5571--5580},
  publisher={PMLR},
  year={2018}
}

@inproceedings{lauriere2022scalable,
  title={Scalable Deep Reinforcement Learning Algorithms for Mean Field Games},
  author={Lauri{\`e}re, Mathieu and Perrin, Sarah and Girgin, Sertan and Muller, Paul and Jain, Ayush and Cabannes, Th{\'e}ophile and Piliouras, Georgios and P{\'e}rolat, Julien and {\'E}lie, Romuald and Pietquin, Olivier and Geist, Matthieu},
  booktitle={Proceedings of the 39th International Conference on Machine Learning},
  series={Proceedings of Machine Learning Research},
  volume={162},
  pages={12078--12095},
  publisher={PMLR},
  year={2022}
}

@inproceedings{jin2021pessimism,
  title={Is Pessimism Provably Efficient for Offline {RL}?},
  author={Jin, Ying and Yang, Zhuoran and Wang, Zhaoran},
  booktitle={Proceedings of the 38th International Conference on Machine Learning},
  series={Proceedings of Machine Learning Research},
  volume={139},
  pages={5084--5096},
  publisher={PMLR},
  year={2021}
}

@inproceedings{xie2021bellman,
  title={Bellman-Consistent Pessimism for Offline Reinforcement Learning},
  author={Xie, Tengyang and Cheng, Ching-An and Jiang, Nan and Mineiro, Paul and Agarwal, Alekh},
  booktitle={Advances in Neural Information Processing Systems},
  volume={34},
  pages={6683--6694},
  year={2021}
}

@article{jiang2025offline,
  title={Offline Reinforcement Learning in Large State Spaces: Algorithms and Guarantees},
  author={Jiang, Nan and Xie, Tengyang},
  journal={Statistical Science},
  volume={40},
  number={4},
  pages={570--596},
  year={2025},
  doi={10.1214/25-STS1000},
  url={https://doi.org/10.1214/25-STS1000}
}

@inproceedings{cui2022when,
  title={When Are Offline Two-Player Zero-Sum Markov Games Solvable?},
  author={Cui, Qiwen and Du, Simon S.},
  booktitle={Advances in Neural Information Processing Systems},
  volume={35},
  pages={25779--25791},
  year={2022}
}

@inproceedings{cui2022strategywise,
  title={Provably Efficient Offline Multi-Agent Reinforcement Learning via Strategy-Wise Bonus},
  author={Cui, Qiwen and Du, Simon S.},
  booktitle={Advances in Neural Information Processing Systems},
  volume={35},
  pages={11739--11751},
  year={2022}
}

@inproceedings{zhang2023offline,
  title={Offline Learning in {M}arkov Games with General Function Approximation},
  author={Zhang, Yuheng and Bai, Yu and Jiang, Nan},
  booktitle={Proceedings of the 40th International Conference on Machine Learning},
  series={Proceedings of Machine Learning Research},
  volume={202},
  pages={40804--40829},
  publisher={PMLR},
  year={2023}
}

@inproceedings{chen2021pessimism,
  title={Pessimism Meets Invariance: Provably Efficient Offline Mean-Field Multi-Agent RL},
  author={Chen, Minshuo and Li, Yan and Wang, Ethan and Yang, Zhuoran and Wang, Zhaoran and Zhao, Tuo},
  booktitle={Advances in Neural Information Processing Systems},
  volume={34},
  pages={17913--17926},
  year={2021}
}

@inproceedings{brunnbauer2025scalable,
  title={Scalable Offline Reinforcement Learning for Mean Field Games},
  author={Brunnbauer, Axel and Lemmel, Julian and Babaiee, Zahra and Neubauer, Sophie A. and Grosu, Radu},
  booktitle={Proceedings of the 24th International Conference on Autonomous Agents and Multiagent Systems},
  pages={408--417},
  publisher={International Foundation for Autonomous Agents and Multiagent Systems},
  year={2025},
  doi={10.5555/3709347.3743555}
}

@inproceedings{misra2020kinematic,
  title={Kinematic State Abstraction and Provably Efficient Rich-Observation Reinforcement Learning},
  author={Misra, Dipendra and Henaff, Mikael and Krishnamurthy, Akshay and Langford, John},
  booktitle={Proceedings of the 37th International Conference on Machine Learning},
  series={Proceedings of Machine Learning Research},
  volume={119},
  pages={6961--6971},
  publisher={PMLR},
  year={2020}
}

@inproceedings{agarwal2020flambe,
  title={{FLAMBE}: Structural Complexity and Representation Learning of Low Rank {MDP}s},
  author={Agarwal, Alekh and Kakade, Sham M. and Krishnamurthy, Akshay and Sun, Wen},
  booktitle={Advances in Neural Information Processing Systems},
  volume={33},
  pages={20095--20107},
  year={2020}
}

@inproceedings{uehara2022representation,
  title={Representation Learning for Online and Offline RL in Low-Rank {MDP}s},
  author={Uehara, Masatoshi and Zhang, Xuezhou and Sun, Wen},
  booktitle={International Conference on Learning Representations},
  year={2022}
}

@inproceedings{zhang2023relex,
  title={Provably Efficient Representation Selection in Low-Rank Markov Decision Processes: From Online to Offline RL},
  author={Zhang, Weitong and He, Jiafan and Zhou, Dongruo and Gu, Quanquan and Zhang, Amy},
  booktitle={Proceedings of the Thirty-Ninth Conference on Uncertainty in Artificial Intelligence},
  series={Proceedings of Machine Learning Research},
  volume={216},
  pages={2488--2497},
  publisher={PMLR},
  year={2023}
}

@inproceedings{zhan2024exploiting,
  title={Exploiting Structure in Offline Multi-Agent {RL}: The Benefits of Low Interaction Rank},
  author={Zhan, Wenhao and Fujimoto, Scott and Zhu, Zheqing and Lee, Jason D. and Jiang, Daniel R. and Efroni, Yonathan},
  booktitle={International Conference on Learning Representations},
  year={2025}
}

@inproceedings{huang2024statisticalmfrl,
  title={On the Statistical Efficiency of Mean-Field Reinforcement Learning with General Function Approximation},
  author={Huang, Jiawei and Yardim, Batuhan and He, Niao},
  booktitle={Proceedings of The 27th International Conference on Artificial Intelligence and Statistics},
  series={Proceedings of Machine Learning Research},
  volume={238},
  pages={289--297},
  publisher={PMLR},
  year={2024}
}

@inproceedings{huang2024modelbasedmfg,
  title={Model-Based {RL} for Mean-Field Games is not Statistically Harder than Single-Agent {RL}},
  author={Huang, Jiawei and He, Niao and Krause, Andreas},
  booktitle={Proceedings of the 41st International Conference on Machine Learning},
  series={Proceedings of Machine Learning Research},
  volume={235},
  pages={19816--19870},
  publisher={PMLR},
  year={2024}
}

@article{pasztor2023efficient,
  title={Efficient Model-Based Multi-Agent Mean-Field Reinforcement Learning},
  author={P{\'a}sztor, Barna and Krause, Andreas and Bogunovic, Ilija},
  journal={Transactions on Machine Learning Research},
  year={2023}
}

@inproceedings{wang2020breaking,
  title={Breaking the Curse of Many Agents: Provable Mean Embedding Q-Iteration for Mean-Field Reinforcement Learning},
  author={Wang, Lingxiao and Yang, Zhuoran and Wang, Zhaoran},
  booktitle={Proceedings of the 37th International Conference on Machine Learning},
  series={Proceedings of Machine Learning Research},
  volume={119},
  pages={10092--10103},
  publisher={PMLR},
  year={2020}
}

@inproceedings{subramanian2019reinforcement,
  title={Reinforcement Learning in Stationary Mean-field Games},
  author={Subramanian, Jayakumar and Mahajan, Aditya},
  booktitle={Proceedings of the 18th International Conference on Autonomous Agents and MultiAgent Systems},
  pages={251--259},
  publisher={International Foundation for Autonomous Agents and Multiagent Systems},
  year={2019}
}

@inproceedings{xie2021learningwhileplaying,
  title={Learning While Playing in Mean-Field Games: Convergence and Optimality},
  author={Xie, Qiaomin and Yang, Zhuoran and Wang, Zhaoran and Minca, Andreea},
  booktitle={Proceedings of the 38th International Conference on Machine Learning},
  series={Proceedings of Machine Learning Research},
  volume={139},
  pages={11436--11447},
  publisher={PMLR},
  year={2021}
}

@article{anahtarci2023learningmfg,
  title={Learning Mean-Field Games with Discounted and Average Costs},
  author={Anahtarci, Berkay and Kariksiz, Can Deha and Saldi, Naci},
  journal={Journal of Machine Learning Research},
  volume={24},
  number={17},
  pages={1--59},
  year={2023},
  url={https://www.jmlr.org/papers/v24/21-0505.html}
}

@inproceedings{mao2022cloudmfg,
  title={A Mean-Field Game Approach to Cloud Resource Management with Function Approximation},
  author={Mao, Weichao and Qiu, Haoran and Wang, Chen and Franke, Hubertus and Kalbarczyk, Zbigniew and Iyer, Ravishankar K. and Ba{\c{s}}ar, Tamer},
  booktitle={Advances in Neural Information Processing Systems},
  volume={35},
  pages={36243--36258},
  year={2022},
  url={https://proceedings.neurips.cc/paper_files/paper/2022/hash/eb3c8135137c8a60425a0320869ad87e-Abstract-Conference.html}
}

@inproceedings{uehara2022pessimisticmodel,
  title={Pessimistic Model-Based Offline Reinforcement Learning under Partial Coverage},
  author={Uehara, Masatoshi and Sun, Wen},
  booktitle={International Conference on Learning Representations},
  year={2022}
}

@inproceedings{ni2023representationgames,
  title={Representation Learning for Low-rank General-sum Markov Games},
  author={Ni, Chengzhuo and Song, Yuda and Zhang, Xuezhou and Ding, Zihan and Jin, Chi and Wang, Mengdi},
  booktitle={International Conference on Learning Representations},
  year={2023}
}

\appendix

\newpage

\section*{Notation Table}

\paragraph{Conventions.}
A superscript \(\star\) denotes the true model, whereas \(\theta=(f^\theta,r^\theta,T^\theta)\in\Theta\) denotes a candidate model.
In \((\pi;\eta)\), \(\pi\) is the representative-agent policy and \(\eta\) is the population policy; \(\rho\sim\fp\) denotes a logged behavior population policy.
For a sequence \(u=(u_0,\ldots,u_{H-1})\), write \(u_{h:H-1}=(u_h,\ldots,u_{H-1})\).

\begingroup
\small
\renewcommand{\arraystretch}{1.14}
\setlength{\tabcolsep}{4pt}
\begin{center}
\begin{tabular}{@{}p{0.30\linewidth}p{0.66\linewidth}@{}}
\toprule
Symbol & Meaning \\
\midrule
\multicolumn{2}{@{}l}{\textit{Flows and representations}} \\
\midrule
\(\mu_h^{\star,\eta},\nu_h^{\star,\eta}\)
& True population state and state-action laws induced by~\(\eta\), with
\(\nu_h^{\star,\eta}=\mu_h^{\star,\eta}\otimes\eta_h\). \\
\(\mu_h^{\star,\pi;\eta},\nu_h^{\star,\pi;\eta}\)
& State and state-action laws of a representative agent using \(\pi\) against
the fixed population flow induced by \(\eta\). \\
\(m_h^{\star,\eta}\)
& True representation
\(\sum_{x,a}\nu_h^{\star,\eta}(x,a)f_h^\star(x,a)\). \\
\(m_h^{\theta,\eta}\)
& Candidate model's self-consistent representation under \(\eta\), used for
planning in model \(\theta\). \\
\(\bar m_h^{\theta,\eta}\)
& Candidate feature average on the true population flow,
\(\sum_{x,a}\nu_h^{\star,\eta}(x,a)f_h^\theta(x,a)\), used in the population
prediction loss. \\
\(\widehat m_{n,h}^{\theta}\)
& Empirical candidate representation formed from the \(K\) population samples
in episode \(n\) at stage \(h\). \\
\(\PP_\star^{\pi;\eta},\PP_\theta^{\pi;\eta};
  J(\pi;\eta),J_\theta(\pi;\eta)\)
& True and candidate representative-agent path laws and their corresponding
finite-horizon values. \\
\midrule
\multicolumn{2}{@{}l}{\textit{Data and confidence}} \\
\midrule

\(N,K;\ \widehat{\cR}(\theta),\cC(\beta)\)
& Number of logged episodes and population samples per episode-stage; empirical
risk and its \(\beta\)-near-minimizer confidence set. \\
\(\cL(\theta),\Theta_s,s_{N,K}\)
& Population prediction loss,
\(\Theta_s=\{\theta\in\Theta:\cL(\theta)\le s\}\), and the confidence radius
provided by the finite-sample theorem. \\
\(B_f,\cB_f,\tau,L_{r,h},L_{T,h}\)
& Feature bound, representation ball, positive transition-probability floor,
and reward/transition Lipschitz constants. \\
\midrule
\multicolumn{2}{@{}l}{\textit{Bellman transfer and interval quantities}} \\
\midrule
\(q_h^{\theta,\pi},v_h^{\theta,\pi};
  \mathfrak b_h^{\theta,\pi}\)
& Candidate suffix-indexed action-value and value functions, and their scalar
Bellman residual under the true one-step Bellman operator. \\
\(\mathsf D_{\mathrm{log}}^\theta,
  \mathsf D_{\mathrm{tar}}^{\theta,\pi;\eta}\)
& Lifted logged and target measures over stages, representation contexts, and
state-action occupancies. \\
\(\kappa(\pi;\eta;s),\mathscr C(\pi;\eta;s)\)
& Scalar Bellman-transfer coefficient and the stronger target-\(L^2\)
Bellman-coverage coefficient. \\
\(a_H,\operatorname{rad}_{\mathrm{Bell},s},R_s\)
& \(a_H=1+(H-1)^2\), the exact Bellman-transfer radius, and
\(R_s(\pi;\eta)=\sqrt{a_Hs\,\kappa(\pi;\eta;s)}\). \\
\(\overline J_{\cC},\underline J_{\cC},
  \widehat\gap_{\cC}\)
& Optimistic and pessimistic values and the induced interval gap for a generic
plausible set \(\cC\subseteq\Theta\). \\
\(r_{\cC}^{\pm},
  \operatorname{subopt}^{+}_{\cC,\eta},\eopt\)
& One-sided interval errors, optimistic response suboptimality, and optimization
tolerance; set \(\cC=\cC(\beta)\) for the main-text quantities. \\
\bottomrule
\end{tabular}
\end{center}
\endgroup

\section{Proofs and Additional Details}
\label{app:details-and-proofs}

\providecommand{\eps}{\varepsilon}

\subsection{Elementary structural facts}
\label{app:basic-structural-facts}

\subsubsection{One-step mean-field representations are not Bellman-sufficient}
\label{app:one-step-proof}

\begin{proof}[Proof of Proposition~\ref{prop:one-step-representation-not-sufficient}]
Take \(H=2\), \(\cX=\{x_0,u,v\}\), \(\cA=\{L,R\}\), and \(\mu_0=\delta_{x_0}\).
Let the stage-\(0\) reward and feature map be identically zero, so \(r_0^\star\equiv 0\) and \(f_0^\star\equiv 0\).
Let the stage-\(0\) transition be independent of the representation and satisfy \(T_0^\star(u\mid x,L,m)=1\) and \(T_0^\star(v\mid x,R,m)=1\) for every \(x\in\cX\) and \(m\in\RR\).

At stage \(1\), use the scalar feature map \(f_1^\star(u,a)=1\) and \(f_1^\star(v,a)=f_1^\star(x_0,a)=0\) for every \(a\in\cA\).
Let \(\sigma(m)\coloneq\min\{1,\max\{0,m\}\}\), and define the terminal reward by \(r_1^\star(u,a,m)=\sigma(m)\) and \(r_1^\star(v,a,m)=r_1^\star(x_0,a,m)=0\).
The stage-\(1\) transition kernel is irrelevant.

Let \(\eta\) choose \(L\) at \(x_0\) with probability one, and let
\(\tilde\eta\) choose \(R\) at \(x_0\) with probability one.
Since \(f_0^\star\equiv0\), we have \(m_0^{\star,\eta}=m_0^{\star,\tilde\eta}=0\).
At stage \(1\), however, the population is concentrated at \(u\) under
\(\eta\) and at \(v\) under \(\tilde\eta\). Hence \(m_1^{\star,\eta}=1\), whereas \(m_1^{\star,\tilde\eta}=0\).

Now consider action \(L\) for the representative agent at stage \(0\).
This action sends the representative agent to \(u\), irrespective of the
population policy. Therefore, for every representative policy \(\pi\),
\[
q_0^{\star,\pi}\left(x_0,L,m_{0:1}^{\star,\eta}\right)=1,
\qquad
q_0^{\star,\pi}\left(x_0,L,m_{0:1}^{\star,\tilde\eta}\right)=0.
\]
Thus the two action values differ although the current inputs
\((x_0,L,m_0)\) coincide.
\end{proof}

\subsubsection{Existence of Nash equilibria}
\label{app:existence-proof}

\begin{proof}[Proof of Proposition~\ref{prop:existence-nash}]
For each \(h\in\{0,\ldots,H-1\}\), define the compact convex set
\[
\mathcal K_h\coloneq
\operatorname{conv}\left\{f_h^\star(x,a):(x,a)\in\cX\times\cA\right\}
\subseteq \RR^d,
\qquad
\mathcal K\coloneq \prod_{h=0}^{H-1}\mathcal K_h .
\]
The set \(\mathcal K\) is nonempty, compact, and convex.
Its elements are candidate mean-field representation paths \(m=(m_0,\ldots,m_{H-1})\).

Fix \(m\in\mathcal K\).
Against this frozen representation path, the representative agent faces the finite-horizon MDP with reward \(r_h^\star(x,a,m_h)\) and transition kernel \(T_h^\star(\cdot\mid x,a,m_h)\).
Let \(\mathcal D\coloneq \prod_{h=0}^{H-1}\Delta(\cX\times\cA)\) be the compact convex set of state-action occupancy flows.
Define the feasible occupancy correspondence under the frozen path \(m\) by
\[
\begin{aligned}
\operatorname{Occ}(m)\coloneq\bigl\{\psi\in\mathcal D:\;&
\sum_{a\in\cA}\psi_0(x,a)=\mu_0(x),\quad \forall x\in\cX,\;
\sum_{a'\in\cA}\psi_{h+1}(x',a')=\\
&\sum_{x\in\cX}\sum_{a\in\cA}\psi_h(x,a)T_h^\star(x'\mid x,a,m_h),
\quad \forall h\le H-2,\ \forall x'\in\cX\bigr\}.
\end{aligned}
\]

For every \(m\in\mathcal K\), the set \(\operatorname{Occ}(m)\) is nonempty, compact, and convex.
Indeed, any Markov policy induces a feasible occupancy flow, compactness follows from closedness inside \(\mathcal D\), and convexity follows because the flow constraints are affine in \(\psi\) once \(m\) is fixed.

For \(\psi\in\operatorname{Occ}(m)\), define the frozen-path value
\[
J_m(\psi)\coloneq
\sum_{h=0}^{H-1}\sum_{x\in\cX}\sum_{a\in\cA}\psi_h(x,a)r_h^\star(x,a,m_h).
\]
The optimal occupancy correspondence is \(\operatorname{OptOcc}(m)\coloneq\argmax_{\psi\in\operatorname{Occ}(m)} J_m(\psi)\).
Because \(J_m\) is linear in \(\psi\) and \(\operatorname{Occ}(m)\) is nonempty, compact, and convex, the set \(\operatorname{OptOcc}(m)\) is nonempty, compact, and convex.

We next prove the closed-graph property of \(m\mapsto\operatorname{OptOcc}(m)\).
For \(m\in\mathcal K\), set \(v_H^m(x)\coloneq 0\) and define the Bellman functions recursively for \(h=H-1,\ldots,0\) by
\[
q_h^m(x,a)\coloneq r_h^\star(x,a,m_h)
+\sum_{x'\in\cX}T_h^\star(x'\mid x,a,m_h)v_{h+1}^m(x'),
\qquad
v_h^m(x)\coloneq\max_{a\in\cA}q_h^m(x,a).
\]
Since \(\cX\) and \(\cA\) are finite and the reward and transition maps are continuous in \(m\), backward induction shows that \(m\mapsto q_h^m(x,a)\) and \(m\mapsto v_h^m(x)\) are continuous for every \(h,x,a\).

For any \(\psi\in\operatorname{Occ}(m)\), the flow constraints give the telescoping identity
\[
J_m(\psi)=\sum_{x\in\cX}\mu_0(x)v_0^m(x)
-
\sum_{h=0}^{H-1}\sum_{x\in\cX}\sum_{a\in\cA}
\psi_h(x,a)\left(v_h^m(x)-q_h^m(x,a)\right).
\]
The residuals \(v_h^m(x)-q_h^m(x,a)\) are nonnegative.
Moreover, a policy that chooses only maximizers of \(q_h^m(x,\cdot)\) attains zero residual.
Hence \(\psi\in\operatorname{Occ}(m)\) is optimal if and only if
\[
\sum_{h=0}^{H-1}\sum_{x\in\cX}\sum_{a\in\cA}
\psi_h(x,a)\left(v_h^m(x)-q_h^m(x,a)\right)=0.
\]
Now suppose \(m^n\to m\), \(\psi^n\to\psi\), and \(\psi^n\in\operatorname{OptOcc}(m^n)\).
Passing to the limit in the flow constraints, using continuity of \(T_h^\star(\cdot\mid x,a,m_h)\), gives \(\psi\in\operatorname{Occ}(m)\).
Passing to the limit in the Bellman residual display, using continuity of \(q_h^m\) and \(v_h^m\), gives the zero-residual condition above.
Therefore \(\psi\in\operatorname{OptOcc}(m)\).
Thus, \(m\mapsto\operatorname{OptOcc}(m)\) has closed graph.

Define the linear representation map
\[
\Phi:\mathcal D\to \mathcal K,
\qquad
\Phi(\psi)\coloneq
\left(\sum_{x \in \cX}\sum_{a \in \cA}\psi_h(x,a)f_h^\star(x,a)\right)_{h=0}^{H-1}.
\]
For each \(m\in\mathcal K\), define the representation best-response correspondence \(\Gamma(m)\coloneq\{\Phi(\psi):\psi\in\operatorname{OptOcc}(m)\}\).
For every \(m\), the set \(\Gamma(m)\) is nonempty, compact, and convex, since it is the image of the nonempty compact convex set \(\operatorname{OptOcc}(m)\) under the linear map \(\Phi\).

The correspondence \(\Gamma\) has closed graph.
Indeed, let \(m^n\to m\), \(z^n\to z\), and \(z^n\in\Gamma(m^n)\).
Choose \(\psi^n\in\operatorname{OptOcc}(m^n)\) with \(z^n=\Phi(\psi^n)\).
Since \(\mathcal D\) is compact, some subsequence \(\psi^{n_k}\) converges to a \(\psi\in\mathcal D\).
By the closed-graph property of \(\operatorname{OptOcc}\), we have \(\psi\in\operatorname{OptOcc}(m)\).
Since \(\Phi\) is continuous, \(z=\lim_{k\to\infty}z^{n_k}=\lim_{k\to\infty}\Phi(\psi^{n_k})=\Phi(\psi)\).
Thus \(z\in\Gamma(m)\).
Since \(\Gamma:\mathcal K\rightrightarrows\mathcal K\), \(\mathcal K\) is compact and metric, and \(\Gamma\) has compact values, closed graph is equivalent to upper hemicontinuity.
Hence \(\Gamma\) is upper hemicontinuous with nonempty compact convex values.

By Kakutani's fixed-point theorem, there exists \(m^\star\in\mathcal K\) such that \(m^\star\in\Gamma(m^\star)\).
Therefore there exists \(\psi^\star\in\operatorname{OptOcc}(m^\star)\) satisfying
\[
m_h^\star=\sum_{x \in \cX}\sum_{a \in \cA}\psi_h^\star(x,a)f_h^\star(x,a),
\qquad h=0,\ldots,H-1.
\]

We now convert \(\psi^\star\) into a Markov policy.  Define
\[
\eta_h^\star(a\mid x)\coloneq
\begin{cases}
\displaystyle\frac{\psi_h^\star(x,a)}{\sum_{b\in\cA}\psi_h^\star(x,b)},
& \text{if } \sum_{b\in\cA}\psi_h^\star(x,b)>0, \\[12pt]
\displaystyle\frac{1}{|\cA|},
& \text{if } \sum_{b\in\cA}\psi_h^\star(x,b)=0 .
\end{cases}
\]
We claim that the population state-action flow induced by \(\eta^\star\) in the original mean-field model is exactly \(\psi^\star\).
Let \((\bar\mu_h,\bar\nu_h,m_h^{\star,\eta^\star})\) denote the population flow generated by \(\eta^\star\).
At \(h=0\), \(\bar\nu_0(x,a)=\mu_0(x)\eta_0^\star(a\mid x)=\psi_0^\star(x,a)\).
Suppose \(\bar\nu_h=\psi_h^\star\).  Then
\[
m_h^{\star,\eta^\star}
=\sum_{x\in\cX}\sum_{a\in\cA}\bar\nu_h(x,a)f_h^\star(x,a)
=\sum_{x\in\cX}\sum_{a\in\cA}\psi_h^\star(x,a)f_h^\star(x,a)
=m_h^\star.
\]
Using the flow constraint for \(\psi^\star\in\operatorname{Occ}(m^\star)\),
\[
\bar\mu_{h+1}(x')
=\sum_{x\in\cX}\sum_{a\in\cA}\bar\nu_h(x,a)
T_h^\star(x'\mid x,a,m_h^{\star,\eta^\star})
=
\sum_{a'\in\cA}\psi_{h+1}^\star(x',a').
\]
By the definition of \(\eta_{h+1}^\star\), this implies \(\bar\nu_{h+1}(x',a')=\bar\mu_{h+1}(x')\eta_{h+1}^\star(a'\mid x')=\psi_{h+1}^\star(x',a')\), with the convention above when \(\bar\mu_{h+1}(x')=0\).
Thus, by induction, \(\nu_h^{\star,\eta^\star}=\psi_h^\star\) and \(m_h^{\star,\eta^\star}=m_h^\star\) for every \(h\).

Finally, let \(\pi\in\Pi\) be arbitrary.
Against the population policy \(\eta^\star\), the representative agent faces precisely the frozen MDP with representation path \(m^\star\).
Let \(\psi^{\pi,m^\star}\in \operatorname{Occ}(m^\star)\) be the occupancy flow induced by \(\pi\) in this frozen MDP.
Since \(\psi^\star\in\operatorname{OptOcc}(m^\star)\),
\[
J(\pi;\eta^\star)=J_{m^\star}(\psi^{\pi,m^\star})
\le J_{m^\star}(\psi^\star)=J(\eta^\star;\eta^\star).
\]
Since this holds for every \(\pi\in\Pi\), we have \(\sup_{\pi\in\Pi}J(\pi;\eta^\star)-J(\eta^\star;\eta^\star)\le 0\).
The reverse inequality is trivial by taking \(\pi=\eta^\star\).
Hence \(\gap(\eta^\star)=0\), so \(\eta^\star\) is a Nash equilibrium.
\end{proof}

\subsubsection{Support and distance facts}

\begin{lemma}[Support stability]
\label{lem:app-support-stability}
Under Assumptions~\ref{ass:boundedness} and~\ref{ass:continuity}, fix \(\theta\in\Theta\), \(h\), \((x,a)\), and \(x'\in\cX\).
Then either
\[
T_h^\theta(x'\mid x,a,m)=0
\quad\text{for every }m\in\cB_f,
\]
or
\[
T_h^\theta(x'\mid x,a,m)\ge\tau
\quad\text{for every }m\in\cB_f.
\]
Consequently, \(\operatorname{supp}T_h^\theta(\cdot\mid x,a,m)\) does not depend on \(m\in\cB_f\).
\end{lemma}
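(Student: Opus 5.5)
The argument is a connectedness argument on the ball \(\cB_f\). Fix \(\theta,h,(x,a),x'\) and define \(g\colon\cB_f\to[0,1]\) by \(g(m)\coloneq T_h^\theta(x'\mid x,a,m)\). First, \(g\) is continuous, indeed Lipschitz. A single coordinate difference of two distributions is bounded by twice their total-variation distance, since \(|p(x')-q(x')|\le\sum_{y}|p(y)-q(y)|=2\tv(p,q)\). Assumption~\ref{ass:continuity} then gives \(|g(m)-g(m')|\le 2L_{T,h}\|m-m'\|_2\) for all \(m,m'\in\cB_f\).

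Second, Assumption~\ref{ass:boundedness} says \(g(\cB_f)\subseteq\{0\}\cup[\tau,1]\). Because \(\tau>0\), this is a disjoint union of two closed sets, so \(g^{-1}(\{0\})\) and \(g^{-1}([\tau,1])\) are disjoint, relatively closed subsets of \(\cB_f\) whose union is \(\cB_f\). The ball \(\cB_f\) is convex, hence path-connected and connected, so one of these two sets must be empty.

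This dichotomy is exactly the claim: either \(g\equiv0\) on \(\cB_f\), or \(g\ge\tau\) on \(\cB_f\). For a direct argument avoiding topology, suppose \(g(m)=0\) and \(g(m')\ge\tau\). Then \(t\mapsto g((1-t)m+tm')\) is continuous on \([0,1]\) and stays in \(\cB_f\) by convexity, so the intermediate value theorem gives some \(t\) with \(g\) taking the value \(\tau/2\notin\{0\}\cup[\tau,1]\), a contradiction.

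For the consequence, apply the dichotomy to each \(x'\in\cX\). The event \(x'\in\operatorname{supp}T_h^\theta(\cdot\mid x,a,m)\), i.e. \(g(m)>0\), then holds either for every \(m\in\cB_f\) or for none, so the support is independent of \(m\).

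There is no real obstacle. The only points needing care are converting the total-variation continuity into pointwise continuity, with the factor \(2\) from the \(\tfrac12\) normalization, and noting that \(\tau>0\) separates the two pieces. If \(B_f=0\), the ball is a single point and the statement is trivial.
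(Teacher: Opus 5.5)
Your proof is correct and follows the paper's argument: the total-variation Lipschitz bound makes \(m\mapsto T_h^\theta(x'\mid x,a,m)\) continuous, and connectedness of \(\cB_f\) then forces its image into one of the separated pieces \(\{0\}\) or \([\tau,1]\). The factor \(2\) is unnecessary, since \(p-q\) has zero total mass, so its positive and negative parts each have mass \(\tv(p,q)\) and hence \(|p(x')-q(x')|\le\tv(p,q)\), as the paper uses; either constant suffices for continuity.
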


\begin{proof}
For fixed \(x'\), the coordinate map \(g(m)=T_h^\theta(x'\mid x,a,m)\) is continuous because
\[
|g(m)-g(m')|
\le\tv\!\left(T_h^\theta(\cdot\mid x,a,m),T_h^\theta(\cdot\mid x,a,m')\right)
\le L_{T,h}\|m-m'\|_2.
\]
The ball \(\cB_f\) is connected, so \(g(\cB_f)\) is connected.
Since \(g(\cB_f)\subseteq\{0\}\cup[\tau,1]\), its image lies entirely in one of these two components.
Applying the argument to every \(x'\in\cX\) proves the support claim.
\end{proof}

\begin{lemma}[Elementary distance bounds]
\label{lem:app-distance-bounds}
For any probability distributions \(p,q\) on \(\cX\),
\[
\tv(p,q)^2\le 2\hel^2(p,q).
\]
If \(p\) and \(q\) have the same support and every positive probability in both distributions is at least \(\tau\), then
\[
\hel^2(p,q)\le\frac{\tv(p,q)^2}{2\tau},
\qquad
\kl(p\|q)\le\frac{4}{\tau}\tv(p,q)^2.
\]
\end{lemma}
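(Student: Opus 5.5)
Three inequalities need proof; each reduces to a pointwise comparison of \(\sqrt{p(x)}\) and \(\sqrt{q(x)}\), followed by Cauchy--Schwarz or a standard chi-square comparison. Write \(\tv(p,q)=\frac12\sum_x|p(x)-q(x)|\) and factor
\[
|p(x)-q(x)|=\bigl|\sqrt{p(x)}-\sqrt{q(x)}\bigr|\bigl(\sqrt{p(x)}+\sqrt{q(x)}\bigr).
\]

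\emph{First inequality.} Apply Cauchy--Schwarz to the factorization:
\[
\Bigl(\sum_x|p(x)-q(x)|\Bigr)^2\le\sum_x\bigl(\sqrt{p(x)}-\sqrt{q(x)}\bigr)^2\cdot\sum_x\bigl(\sqrt{p(x)}+\sqrt{q(x)}\bigr)^2 .
\]
The first factor equals \(2\hel^2(p,q)\). The second factor is at most \(2\sum_x(p(x)+q(x))=4\). Hence \(4\tv^2\le 8\hel^2\), which gives \(\tv^2\le 2\hel^2\). No support assumption is needed here.

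\emph{Hellinger upper bound.} On the common support \(S\), both \(p(x)\ge\tau\) and \(q(x)\ge\tau\), so \(\sqrt{p(x)}+\sqrt{q(x)}\ge 2\sqrt\tau\). Off \(S\), both terms vanish. Therefore
\[
\bigl(\sqrt{p(x)}-\sqrt{q(x)}\bigr)^2=\frac{(p(x)-q(x))^2}{(\sqrt{p(x)}+\sqrt{q(x)})^2}\le\frac{(p(x)-q(x))^2}{4\tau},
\]
and consequently
\[
\hel^2\le\frac{1}{8\tau}\sum_x(p(x)-q(x))^2 .
\]
Next use \(\sum_x(p(x)-q(x))^2\le\bigl(\sum_x|p(x)-q(x)|\bigr)^2=4\tv^2\), which follows from \(\ell_2\le\ell_1\). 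This yields \(\hel^2\le\tv^2/(2\tau)\), exactly the claimed constant.

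\emph{KL bound.} Start from the standard inequality \(\kl(p\|q)\le\chi^2(p\|q)=\sum_{x\in S}(p(x)-q(x))^2/q(x)\). It follows from \(\log t\le t-1\) applied with \(t=p/q\), giving \(\sum_x p\log(p/q)\le\sum_x p^2/q-1\). Equal supports ensure there are no infinite terms. Since \(q(x)\ge\tau\) on \(S\),
\[
\chi^2\le\frac1\tau\sum_x(p(x)-q(x))^2\le\frac{4}{\tau}\tv^2 ,
\]
again using the \(\ell_2\le\ell_1\) step.

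\emph{Main check.} The only delicate point is keeping the constants aligned with the paper's normalizations: \(\tv\) and \(\hel^2\) each carry a factor \(\tfrac12\). The \(\ell_2\le\ell_1\) step already suffices to reach the stated constants, so none of the bounds needs tightening.
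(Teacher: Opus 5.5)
Your proof is correct and matches the paper's argument step for step. Both use the same factorization with Cauchy--Schwarz for \(\tv^2\le 2\hel^2\) (you bound \(\sum_x(\sqrt{p(x)}+\sqrt{q(x)})^2\) by \(2\sum_x(p(x)+q(x))=4\), where the paper writes it as \(2+2\sum_x\sqrt{p(x)q(x)}\le 4\)), the bound \((\sqrt{p(x)}+\sqrt{q(x)})^2\ge 4\tau\) together with \(\ell_2\le\ell_1\) for the Hellinger estimate, and \(\kl\le\chi^2\) with \(q\ge\tau\) for the KL estimate; your derivation of \(\kl\le\chi^2\) from \(\log t\le t-1\) fills in a step the paper leaves implicit.
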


\begin{proof}
Factoring \(|p(x)-q(x)|\) and applying Cauchy--Schwarz gives
\[
\begin{aligned}
2\tv(p,q)
&=\sum_x|\sqrt{p(x)}-\sqrt{q(x)}|
       (\sqrt{p(x)}+\sqrt{q(x)})\\
&\le \left(2\hel^2(p,q)\right)^{1/2}
\left(2+2\sum_x\sqrt{p(x)q(x)}\right)^{1/2}
\le 2\sqrt{2\hel^2(p,q)},
\end{aligned}
\]
which proves the first inequality.

For the second claim, sum over the common support.
Since \((\sqrt{p(x)}+\sqrt{q(x)})^2\ge4\tau\),
\[
\hel^2(p,q)=\frac12\sum_x
\frac{(p(x)-q(x))^2}{(\sqrt{p(x)}+\sqrt{q(x)})^2}
\le\frac{\|p-q\|_1^2}{8\tau}
=\frac{\tv(p,q)^2}{2\tau}.
\]
Finally, \(\kl(p\|q)\le\chi^2(p\|q)\), and hence
\[
\kl(p\|q)\le\sum_x\frac{(p(x)-q(x))^2}{q(x)}
\le\frac{\|p-q\|_1^2}{\tau}
=\frac{4}{\tau}\tv(p,q)^2.
\]
\end{proof}

\subsection{Exact statistical confidence event}
\label{app:finite-confidence}

This subsection proves the precise finite-class version of Theorem~\ref{thm:finite-class-confidence}.
The empirical criterion uses negative log likelihood and squared reward loss.
The argument first controls the corresponding prediction error at the sampled representations and then incorporates the finite-population representation error.

\subsubsection{Episode quantities}

For episode \(n\) and stage \(h\), define \(P_{n,h}\coloneq T_h^\star(\cdot\mid Y_{n,h},m_h^{\star,\rho_n})\) and \(Q_{n,h}^\theta\coloneq T_h^\theta(\cdot\mid Y_{n,h},\widehat m_{n,h}^\theta)\), and let \(e_{n,h}^\theta\coloneq R_{n,h}-r_h^\theta(Y_{n,h},\widehat m_{n,h}^\theta)\).
The one-episode likelihood difference and prediction loss are
\[
A_n(\theta)\coloneq
\sum_{h=0}^{H-1}\left[
\log\frac{P_{n,h}(X_{n,h+1})}{Q_{n,h}^\theta(X_{n,h+1})}
+(e_{n,h}^\theta)^2\right],
\]
and
\[
W_n(\theta)\coloneq
\sum_{h=0}^{H-1}\left[
2\hel^2(P_{n,h},Q_{n,h}^\theta)+(e_{n,h}^\theta)^2\right].
\]
The log ratio is \(+\infty\) when its numerator is positive and its denominator is zero.
Its value when both probabilities are zero is immaterial, since \(X_{n,h+1}\) is drawn from \(P_{n,h}\).
Write \(\overline A_N(\theta)=N^{-1}\sum_{n=1}^N A_n(\theta)\) and \(\overline W_N(\theta)=N^{-1}\sum_{n=1}^N W_n(\theta)\), and define the appendix-only quantity \(\tilde\cL(\theta)\coloneq\EE W_1(\theta)\).

The empirical criterion decomposes as
\[
\widehat\cR(\theta)=\widehat H_N+\overline A_N(\theta),
\qquad
\widehat H_N\coloneq\frac1N\sum_{n=1}^N\sum_{h=0}^{H-1}
-\log P_{n,h}(X_{n,h+1}).
\]
Every observed next state has positive probability under \(P_{n,h}\), and that probability is at least \(\tau\).
Thus \(\widehat H_N<\infty\) almost surely, and the decomposition is valid in the extended reals.
The quantity \(\widehat H_N\) does not depend on \(\theta\).

\subsubsection{Likelihood and prediction loss}

\begin{lemma}[Likelihood-to-Hellinger bound]
\label{lem:app-likelihood-hellinger}
Let \((\mathcal F_i)\) be a filtration, and suppose that \(Z_i\), conditionally on \(\mathcal F_{i-1}\), has probability mass function \(p_i\).
For each \(\theta\in\Theta\), let \(q_i^\theta\) be an \(\mathcal F_{i-1}\)-measurable probability mass function.
Then, for every \(\delta\in(0,1)\), with probability at least \(1-\delta\), simultaneously for every \(\theta\in\Theta\),
\[
\sum_i\log\frac{p_i(Z_i)}{q_i^\theta(Z_i)}
\ge 2\sum_i\hel^2(p_i,q_i^\theta)-2\log\frac{|\Theta|}{\delta}.
\]
The log ratio is interpreted \(p_i\)-almost surely, with value \(+\infty\) when \(p_i(Z_i)>0\) and \(q_i^\theta(Z_i)=0\).
\end{lemma}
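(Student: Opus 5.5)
The plan is the standard exponential-supermartingale (Chernoff) argument with the Bhattacharyya coefficient, followed by a union bound over the finite class \(\Theta\). Fix \(\theta\in\Theta\) and set \(L_i^\theta\coloneq\frac12\log\frac{q_i^\theta(Z_i)}{p_i(Z_i)}\), so that \(\exp(L_i^\theta)=\sqrt{q_i^\theta(Z_i)/p_i(Z_i)}\). This quantity is finite \(p_i\)-almost surely, and it equals \(0\) exactly when \(q_i^\theta(Z_i)=0<p_i(Z_i)\), consistent with the convention that the log ratio is \(+\infty\). Conditionally on \(\mathcal F_{i-1}\),
\[
\EE\bigl[\exp(L_i^\theta)\mid\mathcal F_{i-1}\bigr]=\sum_{z:\,p_i(z)>0}\sqrt{p_i(z)q_i^\theta(z)}\le\sum_z\sqrt{p_i(z)q_i^\theta(z)}=1-\hel^2(p_i,q_i^\theta),
\]
using the normalization \(\hel^2=\frac12\sum(\sqrt p-\sqrt q)^2=1-\sum\sqrt{pq}\). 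Since \(1-u\le e^{-u}\), this conditional expectation is at most \(\exp(-\hel^2(p_i,q_i^\theta))\).

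Next, I will define the process
\[
M_n^\theta\coloneq\exp\Bigl(\sum_{i\le n}\bigl[L_i^\theta+\hel^2(p_i,q_i^\theta)\bigr]\Bigr).
\]
Because \(q_i^\theta\) and \(p_i\) are \(\mathcal F_{i-1}\)-measurable, the Hellinger term can be pulled out of the conditional expectation. The previous display then shows \(M_n^\theta\) is a nonnegative supermartingale with \(\EE M_n^\theta\le 1\). The sum is over a finite index set here (\(i=1,\ldots,n\) with \(n=NH\) in the application), so a fixed-time argument suffices. Markov's inequality then gives \(\PP(M_n^\theta\ge |\Theta|/\delta)\le\delta/|\Theta|\). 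On the complementary event,
\[
\sum_i\Bigl[\tfrac12\log\tfrac{q_i^\theta(Z_i)}{p_i(Z_i)}+\hel^2(p_i,q_i^\theta)\Bigr]<\log\tfrac{|\Theta|}{\delta}.
\]
Multiplying by \(2\) and rearranging yields the claimed inequality for this \(\theta\). If some \(q_i^\theta(Z_i)=0\), the left side of the claim is \(+\infty\) and the inequality holds trivially, which matches \(M_n^\theta=0\).

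Finally, a union bound over the finite class \(\Theta\) makes the bound hold simultaneously for all \(\theta\) with probability at least \(1-\delta\). The main point requiring care is the extended-real bookkeeping: one must handle zero probabilities under \(p_i\) versus \(q_i^\theta\) via the almost-sure interpretation, and verify that the Bhattacharyya sum restricted to the support of \(p_i\) is still bounded by \(1-\hel^2\). Both are handled by the inequality displayed in the first paragraph.
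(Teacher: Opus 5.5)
Your proposal is correct and takes essentially the same route as the paper. Both bound the conditional mean of \(\sqrt{q_i^\theta(Z_i)/p_i(Z_i)}\) by the Bhattacharyya coefficient \(1-\hel^2(p_i,q_i^\theta)\le e^{-\hel^2(p_i,q_i^\theta)}\), form the nonnegative exponential supermartingale, and then apply Markov's inequality and a union bound over \(\Theta\); your handling of \(q_i^\theta(Z_i)=0\) matches the paper's convention \(e^{-\infty}=0\).
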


\begin{proof}
Fix \(\theta\). With the convention \(e^{-\infty}=0\),
\[
\begin{aligned}
\EE\left[\exp\left(-\frac12\log\frac{p_i(Z_i)}{q_i^\theta(Z_i)}\right)
\middle|\mathcal F_{i-1}\right]
&=\sum_z\sqrt{p_i(z)q_i^\theta(z)}\\
&=1-\hel^2(p_i,q_i^\theta)\le e^{-\hel^2(p_i,q_i^\theta)}.
\end{aligned}
\]
Consequently,
\[
\exp\left(
\sum_{j\le i}\left[
\hel^2(p_j,q_j^\theta)-\frac12\log\frac{p_j(Z_j)}{q_j^\theta(Z_j)}\right]
\right)
\]
is a nonnegative supermartingale.
Markov's inequality followed by a union bound over \(\Theta\) proves the claim.
\end{proof}

Fix \(\delta\in(0,1)\) and set \(u\coloneq\log(3|\Theta|/\delta)\).
For each episode, reveal \(\rho_n\) and the entire population-snapshot collection before revealing the representative trajectory chronologically.
The conditional independence in the offline data model ensures that this does not change the representative trajectory law.
Immediately before \(X_{n,h+1}\) is revealed, \(Y_{n,h}\), \(P_{n,h}\), and \(Q_{n,h}^\theta\) are measurable, and the conditional law of \(X_{n,h+1}\) is \(P_{n,h}\).
Applying Lemma~\ref{lem:app-likelihood-hellinger} across all episode-stage pairs and adding the reward errors gives, with probability at least \(1-\delta/3\),
\begin{equation}
\label{eq:app-likelihood-event}
\overline A_N(\theta)\ge\overline W_N(\theta)-\frac{2u}{N}
\qquad\text{for every }\theta\in\Theta.
\end{equation}

\begin{lemma}[Empirical prediction loss]
\label{lem:app-empirical-prediction-loss}
With probability at least \(1-\delta/3\),
\[
\tilde\cL(\theta)\le2\overline W_N(\theta)+\frac{8Hu}{N}
\qquad\text{for every }\theta\in\Theta.
\]
\end{lemma}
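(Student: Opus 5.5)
The plan is a uniform lower-tail concentration argument for the nonnegative, bounded i.i.d. variables \(W_n(\theta)\), followed by a union bound over \(\Theta\).

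\textbf{Setting up the variables.} The episodes are independent, and within each episode \(\rho_n\) is drawn from \(\fp\). Hence, for fixed \(\theta\), the variables \(W_1(\theta),\ldots,W_N(\theta)\) are i.i.d. with mean \(\tilde\cL(\theta)\). Each stage contributes \(2\hel^2(P_{n,h},Q_{n,h}^\theta)\le 2\). Each stage also contributes \((e_{n,h}^\theta)^2\le 1\), since both \(R_{n,h}\) and \(r_h^\theta\) lie in \([0,1]\). Therefore \(0\le W_n(\theta)\le 3H\).

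\textbf{Variance and Bernstein bound.} Because \(W_n\) is nonnegative and bounded by \(b\coloneq 3H\), its variance satisfies \(\operatorname{Var}(W_n)\le \EE W_n^2\le b\,\tilde\cL(\theta)\); this is the self-bounding property. Bernstein's inequality for the lower tail then gives, with probability at least \(1-\delta'\),
\[
\tilde\cL(\theta)-\overline W_N(\theta)\le \sqrt{\frac{2b\tilde\cL(\theta)\log(1/\delta')}{N}}+\frac{b\log(1/\delta')}{3N}.
\]
For a sharper constant, one can instead use the one-sided exponential inequality for nonnegative variables, \(\EE e^{-\lambda W}\le e^{-\lambda\EE W+\lambda^2\EE W^2/2}\). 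That route gives \(\tilde\cL\le \overline W_N+\sqrt{2b\tilde\cL\log(1/\delta')/N}\) with no additive term.

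\textbf{Absorbing the square root.} Apply AM–GM in the form \(\sqrt{2b\tilde\cL\,\ell/N}\le \tilde\cL/2+b\ell/N\), where \(\ell=\log(1/\delta')\). Rearranging yields \(\tilde\cL(\theta)\le 2\overline W_N(\theta)+c\,H\ell/N\) for a small absolute constant \(c\).

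\textbf{Union bound and constants.} Take \(\delta'=\delta/(3|\Theta|)\) and union over \(\Theta\), so that \(\ell=u\) and the event holds with probability at least \(1-\delta/3\). It then remains to check that the constant is at most \(8\). With \(b=3H\) and the one-sided exponential route, the additive term is \(2bu/N=6Hu/N\le 8Hu/N\). If the Bernstein route is used instead, the extra \(bu/(3N)\) term is doubled after rearranging, giving \((6+2)Hu/N=8Hu/N\) as well. Either way the claimed constant holds.

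\textbf{Main obstacle.} The only delicate point is this constant bookkeeping. The simplest fallback is the multiplicative Chernoff-type bound for \([0,b]\)-valued variables, \(\PP(\overline W_N\le \EE W/2 - t)\), which delivers the factor \(2\) directly.
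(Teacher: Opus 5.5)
Your proposal is correct and follows the paper's proof essentially step for step: the self-bounding bound $W_n(\theta)^2\le 3H\,W_n(\theta)$, a lower-tail Bernstein inequality giving $\sqrt{6H\tilde\cL(\theta)u/N}+Hu/N$, AM--GM absorption into $\tilde\cL(\theta)/2+4Hu/N$, and a union bound with $e^{-u}=\delta/(3|\Theta|)$. Your optional one-sided route through $\EE e^{-\lambda W}\le e^{-\lambda\EE W+\lambda^2\EE W^2/2}$ for nonnegative $W$ is also valid; it removes the additive term and gives the slightly better constant $6$ in place of $8$.
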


\begin{proof}
For every \(\theta\), we have \(0\le W_n(\theta)\le3H\) and \(W_n(\theta)^2\le3H\,W_n(\theta)\).
Let \(\mu_\theta=\EE W_1(\theta)\).
A lower-tail Bernstein inequality gives, with probability at least \(1-e^{-u}\),
\[
\mu_\theta-\overline W_N(\theta)
\le\sqrt{\frac{6H\mu_\theta u}{N}}+\frac{Hu}{N}
\le\frac{\mu_\theta}{2}+\frac{4Hu}{N}.
\]
Rearranging and taking a union bound over \(\Theta\) proves the result.
\end{proof}

\subsubsection{The realizable model}

Set \(M\coloneq\log(1/\tau)\) and
\[
b_K\coloneq\frac{B_f^2}{K}\sum_{h=0}^{H-1}
\left(\frac{L_{T,h}^2}{\tau}+L_{r,h}^2\right).
\]
The exact representation \(m_h^{\star,\rho}\), every \(\widehat m_{n,h}^\theta\), and every \(\bar m_h^{\theta,\rho}\) lie in \(\cB_f\), since they are averages of vectors with norm at most \(B_f\).
For \(\theta^\star\), realizability and Lemma~\ref{lem:app-support-stability} therefore imply that \(P_{n,h}\) and \(Q_{n,h}^{\theta^\star}\) have the same support.
On this support, all positive probabilities are at least \(\tau\), so
\begin{equation}
\label{eq:app-true-log-bound}
\left|\log\frac{P_{n,h}(x')}{Q_{n,h}^{\theta^\star}(x')}\right|\le M.
\end{equation}
In particular, the true model has finite empirical risk almost surely.

\begin{lemma}[Realizable score]
\label{lem:app-realizable-score}
There is a numerical constant \(C_0\) such that, with probability at least \(1-\delta/3\),
\[
\overline A_N(\theta^\star)\le
C_0\left(b_K+\frac{H(1+M)u}{N}\right).
\]
\end{lemma}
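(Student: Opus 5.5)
The plan is to split \(A_n(\theta^\star)\) into a transition log-ratio part and a reward part. In both parts the only discrepancy between truth and model comes from plugging the empirical representation \(\widehat m_{n,h}^{\theta^\star}\) in place of \(m_h^{\star,\rho_n}\). Write \(\Delta_{n,h}\coloneq\widehat m_{n,h}^{\theta^\star}-m_h^{\star,\rho_n}\). Conditionally on \(\rho_n\), \(\widehat m_{n,h}^{\theta^\star}\) is an average of \(K\) i.i.d. vectors \(f_h^\star(Y_{n,h}^{(k)})\) with mean \(m_h^{\star,\rho_n}\) and norm at most \(B_f\). Hence \(\EE[\|\Delta_{n,h}\|_2^2\mid\rho_n]\le B_f^2/K\), and \(\|\Delta_{n,h}\|_2^2\) is sub-exponential at scale \(B_f^2/K\) by vector Hoeffding/Bernstein. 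By realizability, \(e_{n,h}^{\theta^\star}=r_h^\star(Y_{n,h},m_h^{\star,\rho_n})-r_h^\star(Y_{n,h},\widehat m_{n,h}^{\theta^\star})\). Since all representations lie in \(\cB_f\), Assumption~\ref{ass:continuity} gives \((e_{n,h}^{\theta^\star})^2\le L_{r,h}^2\|\Delta_{n,h}\|_2^2\). By Lemma~\ref{lem:app-support-stability}, \(P_{n,h}\) and \(Q_{n,h}^{\theta^\star}\) share support with masses at least \(\tau\). Lemma~\ref{lem:app-distance-bounds} together with TV-continuity then gives \(\kl(P_{n,h}\|Q_{n,h}^{\theta^\star})\le \frac{4}{\tau}L_{T,h}^2\|\Delta_{n,h}\|_2^2\).

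Next, I would use the same chronological filtration as before \eqref{eq:app-likelihood-event}: reveal \(\rho_n\) and the snapshots first, then the trajectory. Decompose \(\log\frac{P_{n,h}(X_{n,h+1})}{Q_{n,h}^{\theta^\star}(X_{n,h+1})}=\kl(P_{n,h}\|Q_{n,h}^{\theta^\star})+\xi_{n,h}\). Here \(\xi_{n,h}\) is a martingale difference, and by \eqref{eq:app-true-log-bound} it satisfies \(|\xi_{n,h}|\le 2M\). This yields
\[
\overline A_N(\theta^\star)\le \frac1N\sum_{n=1}^N S_n+\frac1N\sum_{n,h}\xi_{n,h},
\qquad
S_n\coloneq\sum_{h=0}^{H-1}\Bigl(\tfrac{4L_{T,h}^2}{\tau}+L_{r,h}^2\Bigr)\|\Delta_{n,h}\|_2^2 .
\]
The \(S_n\) are i.i.d. and nonnegative, with \(\EE S_n\le 4b_K\). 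Each \(S_n\) is a weighted sum of sub-exponential variables at scale \(B_f^2/K\), so Bernstein's inequality for sub-exponential averages gives, with probability \(1-\delta/6\), \(\frac1N\sum_n S_n\le C(b_K+b_K u/N)\). When \(u\le N\) this is \(O(b_K)\). When \(u>N\), I would instead use the crude bound \(\overline A_N(\theta^\star)\le H(M+1)\le H(1+M)u/N\) directly.

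The martingale term is the main obstacle, because Azuma would only give a \(\sqrt{HM^2u/N}\) rate, not the fast rate \((1+M)Hu/N\). I would apply Freedman's inequality, which needs a conditional variance bound proportional to the divergence. The first thing to try is a Yang--Barron type inequality for log ratios bounded by \(M\):
\[
\EE_{P}\Bigl[\log^2\tfrac{P}{Q}\Bigr]\le C(1+M)\,\hel^2(P,Q)\le C(1+M)\,\kl(P\|Q).
\]
Granting this, Freedman gives, with probability \(1-\delta/6\),
\[
\frac1N\sum_{n,h}\xi_{n,h}\le\sqrt{\frac{C(1+M)u}{N}\cdot\frac1N\sum_{n,h}\kl\bigl(P_{n,h}\|Q_{n,h}^{\theta^\star}\bigr)}+\frac{CMu}{N}.
\]
The sum of KL terms is at most \(\frac1N\sum_n S_n\). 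Applying AM--GM to the square root bounds it by \(\frac1N\sum_nS_n+C(1+M)u/N\).

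Combining the two events then yields \(\overline A_N(\theta^\star)\le C_0\bigl(b_K+H(1+M)u/N\bigr)\). The factor \(H\) is harmless here: it is absorbed into the \(u/N\) term, or picked up from the \(HN\) martingale increments in Freedman's deviation term.
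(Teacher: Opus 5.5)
Your route works and differs from the paper's, but the variance inequality at its core is justified through a false intermediate step.

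\textbf{The Hellinger step must be bypassed.} The bound $\EE_P[\log^2(P/Q)]\le C(1+M)\hel^2(P,Q)$ does not hold. Take $P=(\tfrac12,\tfrac12)$ and $Q=(\tau,1-\tau)$. They share support, are floored at $\tau$, and satisfy $|\log(P/Q)|\le M$. Yet $\EE_P[\log^2(P/Q)]\ge\tfrac12(M-\log 2)^2$, while $\hel^2(P,Q)\le 1$. Relative to Hellinger the correct factor is $(1+M)^2$, and routing through it would leave you with $(1+M)^2u/N$ rather than the stated rate.

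The inequality you actually use downstream, $\EE_P[\log^2(P/Q)]\le C(1+M)\kl(P\|Q)$, is true, but it has to be proved against KL directly. For $x\in(0,e^M]$ one has $x\log^2x\le 16(1+M)(x\log x-x+1)$. Summing against $Q$ with $x=P(y)/Q(y)$ over the common support gives $C=16$; this is exactly \eqref{eq:app-log-second-moment}.

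\textbf{A smaller Freedman point.} Your variance proxy is random, so the square-root form of Freedman needs a peeling argument. However, the linear form you reach after AM--GM follows directly from the supermartingale $\exp(\lambda\sum_{n,h}\xi_{n,h}-\lambda^2 V)$, where $V$ is the predictable quadratic variation. Take the fixed choice $\lambda=1/(16(1+M))$; this is legitimate because $\xi_{n,h}\le M$ and $\lambda M\le 1$. It yields $\frac1N\sum_{n,h}\xi_{n,h}\le\frac1N\sum_{n,h}\kl(P_{n,h}\|Q^{\theta^\star}_{n,h})+16(1+M)u/N$ directly.

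\textbf{Comparison with the paper.} With that repair, your argument is correct. The paper never separates a compensator from a martingale. It uses only three facts about the i.i.d.\ episode totals:
\begin{itemize}
\item $|A_n(\theta^\star)|\le H(1+M)$;
\item $\EE A_1(\theta^\star)\le 4b_K$;
\item $\EE A_1(\theta^\star)^2\le 128H(1+M)b_K$, by the same KL log-moment bound together with $(e^{\theta^\star}_{1,h})^2\le 1$ and $(\sum_h z_h)^2\le H\sum_h z_h^2$.
\end{itemize}
A single scalar Bernstein inequality plus Young's inequality then finishes.

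This needs no tail bound on $\|\widehat m^{\theta^\star}_{n,h}-m_h^{\star,\rho_n}\|_2^2$. The facts $\kl\le M$ and $(e^{\theta^\star}_{n,h})^2\le 1$ keep everything bounded, so second moments suffice. Your upper bound $S_n$ discards that boundedness, which is what forces the sub-exponential vector concentration. You could avoid it by applying scalar Bernstein to $\sum_h[\kl(P_{n,h}\|Q^{\theta^\star}_{n,h})+(e^{\theta^\star}_{n,h})^2]\in[0,H(1+M)]$, whose mean is at most $4b_K$.

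What your decomposition buys is modularity: the martingale part does not use independence across episodes. It also gives a slightly sharper bound. Freedman's range term is $O(Mu)$ regardless of the number of increments, so for $u\le N$ your bound is $O(b_K+(1+M)u/N)$, with no factor $H$. This contradicts your closing remark that $H$ enters through the $HN$ increments.
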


\begin{proof}
Conditioning on the logged state-action pair and the population snapshot gives
\[
\begin{aligned}
\EE A_1(\theta^\star)
&=\sum_{h=0}^{H-1}\EE\left[
\kl(P_{1,h}\|Q_{1,h}^{\theta^\star})+(e_{1,h}^{\theta^\star})^2\right].
\end{aligned}
\]
For a generic population snapshot, conditional on \(\rho\),
\[
\begin{aligned}
\EE\left[
\|\widehat m_h^{\theta,\rho}-\bar m_h^{\theta,\rho}\|_2^2
\middle|\rho\right]
&=\frac1K\operatorname{tr}\operatorname{Cov}_{\nu_h^{\star,\rho}}
(f_h^\theta(X,A))\le\frac{B_f^2}{K}.
\end{aligned}
\]
For the realizable model, \(\bar m_h^{\theta^\star,\rho}=m_h^{\star,\rho}\).
By Lemmas~\ref{lem:app-support-stability} and~\ref{lem:app-distance-bounds}, the total-variation Lipschitz condition, and reward Lipschitzness,
\[
\EE A_1(\theta^\star)
\le\frac{B_f^2}{K}\sum_{h=0}^{H-1}
\left(\frac{4L_{T,h}^2}{\tau}+L_{r,h}^2\right)
\le4b_K.
\]

We next record a log-moment bound.
If \(p,q\) have common support and \(|\log(p/q)|\le M\), then
\begin{equation}
\label{eq:app-log-second-moment}
\EE_{X\sim p}\left[\left(\log\frac{p(X)}{q(X)}\right)^2\right]
\le16(1+M)\kl(p\|q).
\end{equation}
Indeed, write \(x=p(y)/q(y)\). For \(0<x\le e^M\),
we have \(x(\log x)^2/(x\log x-x+1)\le16(1+M)\),
where the ratio is interpreted continuously at \(x=1\).
To see this, put \(x=e^t\).
The ratio is at most \(2\) when \(t\le0\), at most \(2e^2\) when \(0\le t\le2\), and at most \(2t\) when \(t\ge2\).
Multiplying by \(q(y)(x\log x-x+1)\), summing over \(y\), and using \(\kl(p\|q)=\sum_yq(y)\bigl(x\log x-x+1\bigr)\) proves \eqref{eq:app-log-second-moment}.

Using \eqref{eq:app-true-log-bound}, \eqref{eq:app-log-second-moment}, \(0\le(e_{1,h}^{\theta^\star})^2\le1\), and \((\sum_h z_h)^2\le H\sum_hz_h^2\), we obtain
\(\EE A_1(\theta^\star)^2\le128H(1+M)b_K\).
Also, \(|A_1(\theta^\star)|\le H(1+M)\).
Bernstein's inequality across the \(N\) independent episodes, followed by Young's inequality, now gives
\[
\overline A_N(\theta^\star)\le
C_0\left(b_K+\frac{H(1+M)u}{N}\right)
\]
with probability at least \(1-e^{-u}\), which is at least \(1-\delta/3\).
\end{proof}

\subsubsection{Finite-population error}

\begin{lemma}[Finite-population error]
\label{lem:app-plugin-to-exact-loss}
For every \(\theta\in\Theta\),
\[
\cL(\theta)\le2\tilde\cL(\theta)+2b_K.
\]
\end{lemma}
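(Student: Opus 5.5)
The plan is to compare, term by term and stage by stage, the summand of \(\cL(\theta)\), which is evaluated at the population-averaged representation \(\bar m_h^{\theta,\rho}\), with the summand of \(\tilde\cL(\theta)=\EE W_1(\theta)\), which is evaluated at the empirical representation \(\widehat m_{1,h}^\theta\). The idea is to insert \(\widehat m_{1,h}^\theta\) as an intermediate point, apply \((a+b)^2\le 2a^2+2b^2\) (or its Hellinger analogue), and pay for the swap \(\bar m\leftrightarrow\widehat m\) through Lipschitz continuity. The variance bound \(\EE[\|\widehat m_h^\theta-\bar m_h^{\theta,\rho}\|_2^2\mid\rho]\le B_f^2/K\) from the proof of Lemma~\ref{lem:app-realizable-score} then produces the \(b_K\) term.

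First I would rewrite \(\cL(\theta)\) as an expectation over a generic episode. Under the offline data model, conditionally on \(\rho\), the logged pair \(Y_{1,h}\) has law \(\nu_h^{\star,\rho}\), and the snapshot is independent of \(Y_{1,h}\). Hence
\[
\cL(\theta)=\sum_h\EE\Bigl[2\hel^2\bigl(P_{1,h},T_h^\theta(\cdot\mid Y_{1,h},\bar m_h^{\theta,\rho_1})\bigr)+\bigl(R_{1,h}-r_h^\theta(Y_{1,h},\bar m_h^{\theta,\rho_1})\bigr)^2\Bigr].
\]
The reward term is handled by \(|R-r^\theta(Y,\bar m)|\le|e_{1,h}^\theta|+L_{r,h}\|\widehat m-\bar m\|_2\). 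Squaring gives at most \(2(e_{1,h}^\theta)^2+2L_{r,h}^2\|\widehat m-\bar m\|_2^2\), and taking expectations gives a contribution of \(2L_{r,h}^2B_f^2/K\). Both \(\widehat m\) and \(\bar m\) lie in \(\cB_f\), so Assumption~\ref{ass:continuity} applies.

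For the transition term I would use that \(\hel\) is a metric, being the \(\ell_2\) distance between square-root vectors scaled by \(1/\sqrt2\). This gives
\[
\hel^2(P,Q(\bar m))\le 2\hel^2(P,Q(\widehat m))+2\hel^2(Q(\widehat m),Q(\bar m)).
\]
For the second term, Lemma~\ref{lem:app-support-stability} says that \(T_h^\theta(\cdot\mid Y,\widehat m)\) and \(T_h^\theta(\cdot\mid Y,\bar m)\) share a support on which all masses are at least \(\tau\). Lemma~\ref{lem:app-distance-bounds} then gives \(\hel^2\le \tv^2/(2\tau)\le L_{T,h}^2\|\widehat m-\bar m\|_2^2/(2\tau)\). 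Multiplying by \(2\cdot 2\) (the factor \(2\) in \(\cL\) and the triangle-inequality factor) yields \(4\hel^2(P,Q(\widehat m))\) plus \(2L_{T,h}^2\|\widehat m-\bar m\|^2/\tau\). Taking expectations, the second piece is at most \(2L_{T,h}^2B_f^2/(\tau K)\). Summing over \(h\), the total is at most \(2\tilde\cL(\theta)+2b_K\), since the \(\tilde\cL\) summand carries exactly \(2\hel^2\) and the bound produced \(4\hel^2=2\cdot 2\hel^2\).

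The only delicate point is the support issue: \(P\) and \(T^\theta\) need not share support, so I avoid any KL or chi-square step across models. Lemma~\ref{lem:app-distance-bounds} is applied only within model \(\theta\) at two representations, which is where support stability is used. The rest is bookkeeping of constants and the conditional-independence rewriting.
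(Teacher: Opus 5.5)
Your proposal is correct and follows the paper's proof essentially step for step. You insert \(\widehat m_h^\theta\) as an intermediate point and handle the transition term with the Hellinger triangle inequality plus support stability and \(\hel^2\le\tv^2/(2\tau)\), applied only within model \(\theta\); the reward term uses Lipschitzness and \((a+b)^2\le 2a^2+2b^2\), and the conditional variance bound \(B_f^2/K\) supplies the \(2b_K\) term, with constants (including \(4\hel^2=2\cdot 2\hel^2\)) matching the paper's exactly.
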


\begin{proof}
Fix \(h,\rho,x,a\), and let \(\widehat m_h^{\theta,\rho}\) denote a generic size-\(K\) snapshot average.
Write \(P=T_h^\star(\cdot\mid x,a,m_h^{\star,\rho})\), \(Q=T_h^\theta(\cdot\mid x,a,\widehat m_h^{\theta,\rho})\), and \(\bar Q=T_h^\theta(\cdot\mid x,a,\bar m_h^{\theta,\rho})\).
Since Hellinger distance is a metric, \(2\hel^2(P,\bar Q)\le4\hel^2(P,Q)+4\hel^2(Q,\bar Q)\).
The two candidate representations lie in \(\cB_f\), so Lemma~\ref{lem:app-support-stability} gives \(Q\) and \(\bar Q\) the same support.
Lemma~\ref{lem:app-distance-bounds} and Assumption~\ref{ass:continuity} then give
\[
\hel^2(Q,\bar Q)\le\frac{\tv(Q,\bar Q)^2}{2\tau}
\le\frac{L_{T,h}^2}{2\tau}
\|\widehat m_h^{\theta,\rho}-\bar m_h^{\theta,\rho}\|_2^2.
\]
For rewards,
\[
\begin{aligned}
&\left(r_h^\star(x,a,m_h^{\star,\rho})
-r_h^\theta(x,a,\bar m_h^{\theta,\rho})\right)^2\\
&\quad\le2\left(r_h^\star(x,a,m_h^{\star,\rho})
-r_h^\theta(x,a,\widehat m_h^{\theta,\rho})\right)^2+2L_{r,h}^2
\|\widehat m_h^{\theta,\rho}-\bar m_h^{\theta,\rho}\|_2^2.
\end{aligned}
\]
Averaging over the snapshot, the logged state-action pair, and \(\rho\), then summing over \(h\), proves the claim using \(\EE[\|\widehat m_h^{\theta,\rho}-\bar m_h^{\theta,\rho}\|_2^2\mid\rho]\le B_f^2/K\).
\end{proof}

\subsubsection{Confidence theorem}

\begin{theorem}[Finite-class confidence event]
\label{thm:app-finite-class-confidence}
Assume \(|\Theta|<\infty\) and Assumptions~\ref{ass:realizability}--\ref{ass:continuity}.
Fix \(\delta\in(0,1)\), and let \(u,M,b_K\) be as defined above.
For numerical constants \(C_\beta,C_s\), set
\[
\beta\coloneq C_\beta\left(b_K+\frac{H(1+M)u}{N}\right),
\qquad
s_{N,K}\coloneq C_s\left(b_K+\frac{H(1+M)u}{N}\right).
\]
For sufficiently large \(C_\beta,C_s\), with probability at least \(1-\delta\),
\[
\theta^\star\in\cC(\beta),
\qquad
\cC(\beta)\subseteq\Theta_{s_{N,K}}.
\]
In particular,
\[
\beta,\ s_{N,K}=O\!\left(
\frac{H\bigl(1+\log(1/\tau)\bigr)\log(|\Theta|/\delta)}{N}
+
\frac{B_f^2}{K}\sum_{h=0}^{H-1}
\left[\frac{L_{T,h}^2}{\tau}+L_{r,h}^2\right]\right).
\]
\end{theorem}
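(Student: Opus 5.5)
The plan is to intersect three high-probability events, each of probability at least \(1-\delta/3\): the likelihood event \eqref{eq:app-likelihood-event}, the event of Lemma~\ref{lem:app-empirical-prediction-loss}, and the event of Lemma~\ref{lem:app-realizable-score}. A union bound gives probability at least \(1-\delta\). On this intersection, everything else is deterministic algebra, using the decomposition \(\widehat\cR(\theta)=\widehat H_N+\overline A_N(\theta)\). The term \(\widehat H_N\) is finite almost surely and independent of \(\theta\), so it cancels in all comparisons. Hence \(\theta\in\cC(\beta)\) if and only if \(\overline A_N(\theta)\le\inf_{\theta'}\overline A_N(\theta')+\beta\).

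\emph{Step 1: \(\theta^\star\in\cC(\beta)\).} By \eqref{eq:app-likelihood-event}, every \(\theta'\) satisfies \(\overline A_N(\theta')\ge\overline W_N(\theta')-2u/N\ge-2u/N\), since \(W_n\ge0\). Hence \(\inf_{\theta'}\overline A_N(\theta')\ge-2u/N\). Lemma~\ref{lem:app-realizable-score} gives \(\overline A_N(\theta^\star)\le C_0(b_K+H(1+M)u/N)\). It therefore suffices to take \(C_\beta\ge C_0+2\), using \(H(1+M)\ge1\) so that \(2u/N\le 2H(1+M)u/N\).

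\emph{Step 2: \(\cC(\beta)\subseteq\Theta_{s_{N,K}}\).} Let \(\theta\in\cC(\beta)\). Using the infimum bounded by \(\theta^\star\)'s score,
\[
\overline A_N(\theta)\le\overline A_N(\theta^\star)+\beta\le(C_0+C_\beta)\Bigl(b_K+\frac{H(1+M)u}{N}\Bigr).
\]
Combining this with \eqref{eq:app-likelihood-event} gives \(\overline W_N(\theta)\le\overline A_N(\theta)+2u/N\). Lemma~\ref{lem:app-empirical-prediction-loss} then gives \(\tilde\cL(\theta)\le2\overline W_N(\theta)+8Hu/N\). Finally, Lemma~\ref{lem:app-plugin-to-exact-loss} gives \(\cL(\theta)\le2\tilde\cL(\theta)+2b_K\). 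Chaining these bounds yields \(\cL(\theta)\le C_s(b_K+H(1+M)u/N)\) with, say, \(C_s=4(C_0+C_\beta)+8+16+2\). This constant is generous but numerical.

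\emph{Step 3: the \(O(\cdot)\) form.} Substitute \(u=\log(3|\Theta|/\delta)\le\log 3+\log(|\Theta|/\delta)\), which is \(O(\log(|\Theta|/\delta))\) since \(|\Theta|/\delta>1\), and \(M=\log(1/\tau)\).

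The only delicate point is bookkeeping in the extended reals. Candidates with \(\widehat\cR(\theta)=+\infty\) are automatically excluded from \(\cC(\beta)\), because \(\theta^\star\) has a finite score. We must also make sure the infimum lower bound uses the likelihood event simultaneously for all \(\theta\), which Lemma~\ref{lem:app-likelihood-hellinger} provides. I expect no real obstacle beyond tracking these constants.
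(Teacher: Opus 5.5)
Your proposal is correct and matches the paper's proof essentially line by line. It uses the same intersection of \eqref{eq:app-likelihood-event} with Lemmas~\ref{lem:app-empirical-prediction-loss} and~\ref{lem:app-realizable-score}, the same bound $\inf_{\theta'}\overline A_N(\theta')\ge-2u/N$ giving $C_\beta\ge C_0+2$, and the same chain through Lemma~\ref{lem:app-plugin-to-exact-loss}, ending with exactly the paper's constant $C_s=4(C_0+C_\beta+6)+2$. One small point in Step~3: $|\Theta|/\delta>1$ alone does not let you absorb $\log 3$ into $O(\log(|\Theta|/\delta))$ uniformly (consider $|\Theta|=1$, $\delta\to1$), so you need $|\Theta|/\delta$ bounded away from $1$, e.g.\ $|\Theta|\ge2$ or $\delta\le1/2$; the paper also leaves this implicit.
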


\begin{proof}
Intersect the events in \eqref{eq:app-likelihood-event} and Lemmas~\ref{lem:app-empirical-prediction-loss} and~\ref{lem:app-realizable-score}.
Their intersection has probability at least \(1-\delta\).

First, \eqref{eq:app-likelihood-event} and \(\overline W_N(\theta)\ge0\) imply \(\inf_{\theta\in\Theta}\overline A_N(\theta)\ge-2u/N\).
Let \(r_{N,K}\coloneq b_K+H(1+M)u/N\).
The decomposition of \(\widehat\cR\) and Lemma~\ref{lem:app-realizable-score} give
\[
\begin{aligned}
\widehat\cR(\theta^\star)-\inf_{\theta\in\Theta}\widehat\cR(\theta)
&=\overline A_N(\theta^\star)-\inf_{\theta\in\Theta}\overline A_N(\theta)\\
&\le C_0r_{N,K}+\frac{2u}{N}
\le(C_0+2)r_{N,K}.
\end{aligned}
\]
Thus \(\theta^\star\in\cC(\beta)\) when \(C_\beta\ge C_0+2\).

Now fix \(\theta\in\cC(\beta)\).
Since \(\inf_{\theta'}\widehat\cR(\theta')\le \widehat\cR(\theta^\star)\),
we have \(\overline A_N(\theta)\le\overline A_N(\theta^\star)+\beta\).
Using \eqref{eq:app-likelihood-event} once more, \(\overline W_N(\theta)\le\overline A_N(\theta)+2u/N\le(C_0+C_\beta+2)r_{N,K}\).
Lemma~\ref{lem:app-empirical-prediction-loss} then yields \(\tilde\cL(\theta)\le2(C_0+C_\beta+2)r_{N,K}+8Hu/N\le2(C_0+C_\beta+6)r_{N,K}\).
Finally, Lemma~\ref{lem:app-plugin-to-exact-loss} gives \(\cL(\theta)\le[4(C_0+C_\beta+6)+2]r_{N,K}\).
Choosing \(C_s\) at least this large proves \(\theta\in\Theta_{s_{N,K}}\).
The order bound follows from the definitions of \(u,M,b_K\).
\end{proof}

\subsection{Interval gaps and deterministic oracle inequalities}
\label{app:interval-gap}

For an arbitrary set \(\cC\subseteq\Theta\), define
\[
\overline J_{\cC}(\pi;\eta)\coloneq\sup_{\theta\in\cC}J_\theta(\pi;\eta),
\qquad
\underline J_{\cC}(\pi;\eta)\coloneq\inf_{\theta\in\cC}J_\theta(\pi;\eta),
\]
and
\[
\widehat\gap_{\cC}(\eta)\coloneq
\sup_{\pi\in\Pi}\overline J_{\cC}(\pi;\eta)-\underline J_{\cC}(\eta;\eta).
\]
When \(\cC=\cC(\beta)\), these agree with the main-text quantities \(\overline J_\beta\), \(\underline J_\beta\), and \(\widehat\gap_\beta\).
For \(\theta^\star\in\cC\), define the one-sided interval errors
\[
r^+_{\cC}(\pi;\eta)\coloneq\overline J_{\cC}(\pi;\eta)-J(\pi;\eta),
\qquad
r^-_{\cC}(\pi;\eta)\coloneq J(\pi;\eta)-\underline J_{\cC}(\pi;\eta).
\]
They are nonnegative.  Define also
\[
\operatorname{subopt}^{+}_{\cC,\eta}(\tilde\pi)\coloneq
\sup_{\pi\in\Pi}\overline J_{\cC}(\pi;\eta)-\overline J_{\cC}(\tilde\pi;\eta).
\]
When \(\cC=\cC(\beta)\), this is \(\operatorname{subopt}^{+}_{\beta,\eta}\).

\begin{lemma}[Interval objective upper-bounds the true gap]
\label{lem:app-interval-upper-gap}
If \(\theta^\star\in\cC\), then for every \(\eta\in\Pi\),
\[
\gap(\eta)\le\widehat\gap_{\cC}(\eta).
\]
\end{lemma}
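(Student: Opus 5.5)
The argument is a direct sandwich: since $\theta^\star\in\cC$, the true value of every pair $(\pi;\eta)$ lies inside the interval $[\underline J_{\cC}(\pi;\eta),\overline J_{\cC}(\pi;\eta)]$. The only point to verify is that the candidate value $J_{\theta^\star}$ coincides with the true value $J$.

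\textbf{Step 1.} We check that $J_{\theta^\star}(\pi;\eta)=J(\pi;\eta)$ for all $\pi,\eta\in\Pi$. By Assumption~\ref{ass:realizability}, $\theta^\star=(f^\star,r^\star,T^\star)$. The candidate flow $(\mu_h^{\theta^\star,\eta},\nu_h^{\theta^\star,\eta},m_h^{\theta^\star,\eta})$ obeys the same recursion as the true flow with identical ingredients. Induction on $h$ therefore gives $m_h^{\theta^\star,\eta}=m_h^{\star,\eta}$. Hence the path laws satisfy $\PP_{\theta^\star}^{\pi;\eta}=\PP_\star^{\pi;\eta}$, and the rewards agree, so the two values are equal.

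\textbf{Step 2.} For every $\pi$,
\[
\overline J_{\cC}(\pi;\eta)\ge J_{\theta^\star}(\pi;\eta)=J(\pi;\eta)
\quad\text{and}\quad
\underline J_{\cC}(\eta;\eta)\le J_{\theta^\star}(\eta;\eta)=J(\eta;\eta),
\]
since a supremum (respectively infimum) over $\cC\ni\theta^\star$ dominates (respectively is dominated by) the value at $\theta^\star$. This is exactly the nonnegativity of $r^\pm_{\cC}$.

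\textbf{Step 3.} Taking the supremum over $\pi$ in the first inequality and subtracting the second gives
\[
\gap(\eta)=\sup_{\pi}J(\pi;\eta)-J(\eta;\eta)\le \sup_\pi\overline J_{\cC}(\pi;\eta)-\underline J_{\cC}(\eta;\eta)=\widehat\gap_{\cC}(\eta).
\]
All quantities are finite because values lie in $[0,H]$.

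The only mildly delicate point is Step 1: we must make sure the planning representation $m^{\theta^\star,\eta}$ is the model's self-consistent flow, and not $\bar m$ or $\widehat m$. That self-consistent flow equals the true one by induction. Everything else is monotonicity of $\sup$ and $\inf$.
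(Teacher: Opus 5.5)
Your proposal is correct and follows the paper's proof: it uses $J(\pi;\eta)=J_{\theta^\star}(\pi;\eta)\le\overline J_{\cC}(\pi;\eta)$ and $J(\eta;\eta)=J_{\theta^\star}(\eta;\eta)\ge\underline J_{\cC}(\eta;\eta)$, then takes the supremum over $\pi$. Your Step 1 adds one thing the paper states without comment: it checks by induction that the self-consistent flow $m^{\theta^\star,\eta}$ equals $m^{\star,\eta}$, which is what gives $J_{\theta^\star}=J$.
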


\begin{proof}
The claim follows from \(J(\pi;\eta)=J_{\theta^\star}(\pi;\eta)\le\overline J_{\cC}(\pi;\eta)\) and \(J(\eta;\eta)=J_{\theta^\star}(\eta;\eta)\ge\underline J_{\cC}(\eta;\eta)\).
Taking the supremum over \(\pi\) gives \(\gap(\eta)\le\widehat\gap_{\cC}(\eta)\).
\end{proof}

\begin{theorem}[Deterministic interval oracle inequality]
\label{thm:app-deterministic-interval}
Let \(\cC\subseteq\Theta\) satisfy \(\theta^\star\in\cC\).
Let \(\widehat\eta\in\Pi\) obey
\[
\widehat\gap_{\cC}(\widehat\eta)
\le\inf_{\eta\in\Pi}\widehat\gap_{\cC}(\eta)+\eopt.
\]
Then, for every comparator \(\eta\in\Pi\),
\[
\gap(\widehat\eta)\le\eopt+\gap(\eta)+r^-_{\cC}(\eta;\eta)
+\inf_{\tilde\pi\in\Pi}\left\{
r^+_{\cC}(\tilde\pi;\eta)+\operatorname{subopt}^{+}_{\cC,\eta}(\tilde\pi)
\right\}.
\]
\end{theorem}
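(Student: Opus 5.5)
The argument is a chain of deterministic inequalities: lower-bound the interval objective at \(\widehat\eta\) by its true gap, and upper-bound the interval objective at the comparator \(\eta\) by quantities that involve only \(\eta\) and one response \(\tilde\pi\). No probability enters; the only hypotheses used are \(\theta^\star\in\cC\) and the \(\eopt\)-optimality of \(\widehat\eta\).

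First, Lemma~\ref{lem:app-interval-upper-gap} (which needs only \(\theta^\star\in\cC\)) gives \(\gap(\widehat\eta)\le\widehat\gap_{\cC}(\widehat\eta)\). The optimization condition then gives \(\widehat\gap_{\cC}(\widehat\eta)\le\widehat\gap_{\cC}(\eta)+\eopt\) for the fixed comparator \(\eta\). It therefore suffices to show that, for every \(\tilde\pi\in\Pi\),
\[
\widehat\gap_{\cC}(\eta)\le\gap(\eta)+r^-_{\cC}(\eta;\eta)+r^+_{\cC}(\tilde\pi;\eta)+\operatorname{subopt}^{+}_{\cC,\eta}(\tilde\pi),
\]
and then take the infimum over \(\tilde\pi\).

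To prove this inequality, fix \(\tilde\pi\). By the definition of \(\operatorname{subopt}^{+}_{\cC,\eta}\),
\[
\sup_{\pi}\overline J_{\cC}(\pi;\eta)=\overline J_{\cC}(\tilde\pi;\eta)+\operatorname{subopt}^{+}_{\cC,\eta}(\tilde\pi).
\]
By the definition of \(r^+_{\cC}\),
\[
\overline J_{\cC}(\tilde\pi;\eta)=J(\tilde\pi;\eta)+r^+_{\cC}(\tilde\pi;\eta)\le\sup_{\pi}J(\pi;\eta)+r^+_{\cC}(\tilde\pi;\eta).
\]
By the definition of \(r^-_{\cC}\),
\[
-\underline J_{\cC}(\eta;\eta)=-J(\eta;\eta)+r^-_{\cC}(\eta;\eta).
\]
Summing these three relations yields
\[
\widehat\gap_{\cC}(\eta)\le\sup_\pi J(\pi;\eta)-J(\eta;\eta)+r^+_{\cC}(\tilde\pi;\eta)+r^-_{\cC}(\eta;\eta)+\operatorname{subopt}^{+}_{\cC,\eta}(\tilde\pi),
\]
and the first two terms on the right are exactly \(\gap(\eta)\).

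No step is a real obstacle; the only care needed is with extended-real arithmetic. All values lie in \([0,H]\) because rewards lie in \([0,1]\), and \(\cC\) is nonempty since it contains \(\theta^\star\). Hence every supremum and infimum is finite, and the step that subtracts \(\overline J_{\cC}(\tilde\pi;\eta)\) in the definition of \(\operatorname{subopt}^{+}_{\cC,\eta}\) is legitimate. The supremum over \(\pi\) need not be attained, which is why the argument routes through the arbitrary \(\tilde\pi\) and the suboptimality term instead of selecting a maximizer.
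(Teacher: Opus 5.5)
Your proposal is correct and is essentially the paper's proof. Lemma~\ref{lem:app-interval-upper-gap} and $\eopt$-optimality reduce the claim to bounding $\widehat\gap_{\cC}(\eta)$, which you split using the definition of $\operatorname{subopt}^{+}_{\cC,\eta}(\tilde\pi)$ and the one-sided errors $r^{\pm}_{\cC}$ before taking the infimum over $\tilde\pi$, exactly as the paper does. Your finiteness remarks are a harmless addition that the paper leaves implicit, though one is slightly off: with $\Theta$ finite the supremum over $\pi$ is in fact attained, so the role of $\tilde\pi$ is the adaptive coverage--suboptimality trade-off rather than non-attainment.
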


\begin{proof}
By Lemma~\ref{lem:app-interval-upper-gap} and approximate optimality,
\[
\gap(\widehat\eta)\le\widehat\gap_{\cC}(\widehat\eta)
\le\widehat\gap_{\cC}(\eta)+\eopt.
\]
Fix \(\tilde\pi\in\Pi\).
By definition of \(\operatorname{subopt}^{+}_{\cC,\eta}\),
\[
\widehat\gap_{\cC}(\eta)=\overline J_{\cC}(\tilde\pi;\eta)
-\underline J_{\cC}(\eta;\eta)+\operatorname{subopt}^{+}_{\cC,\eta}(\tilde\pi).
\]
Using the one-sided errors, \(\overline J_{\cC}(\tilde\pi;\eta)\le J(\tilde\pi;\eta)+r^+_{\cC}(\tilde\pi;\eta)\) and \(-\underline J_{\cC}(\eta;\eta)\le-J(\eta;\eta)+r^-_{\cC}(\eta;\eta)\).
Thus
\[
\widehat\gap_{\cC}(\eta)\le J(\tilde\pi;\eta)-J(\eta;\eta)
+r^+_{\cC}(\tilde\pi;\eta)+r^-_{\cC}(\eta;\eta)+
\operatorname{subopt}^{+}_{\cC,\eta}(\tilde\pi).
\]
Since \(J(\tilde\pi;\eta)-J(\eta;\eta)\le\gap(\eta)\), taking the infimum over \(\tilde\pi\) completes the proof.
\end{proof}


\subsection{Bellman residuals and value-transfer radii}
\label{app:bellman-value-transfer}

For \(\theta\in\Theta\) and \(\pi\in\Pi\), define \(v_H^{\theta,\pi}(x,\varnothing)=0\) and, recursively,
\[
q_h^{\theta,\pi}(x,a,z_{h:H-1})=r_h^\theta(x,a,z_h)
+
\sum_{x'}T_h^\theta(x'\mid x,a,z_h)v_{h+1}^{\theta,\pi}(x',z_{h+1:H-1}),
\]
and let \(v_h^{\theta,\pi}(x,z_{h:H-1})=\sum_a\pi_h(a\mid x)q_h^{\theta,\pi}(x,a,z_{h:H-1})\).
For \(m\in\cB_f\) and a candidate suffix \(z_{h:H-1}\), define
\[
\begin{aligned}
\mathfrak b_h^{\theta,\pi}(x,a;m,z_{h:H-1})
={}&q_h^{\theta,\pi}(x,a,z_{h:H-1})-r_h^\star(x,a,m)\\
&-\sum_{x'}T_h^\star(x'\mid x,a,m)v_{h+1}^{\theta,\pi}(x',z_{h+1:H-1}).
\end{aligned}
\]
For a finite positive measure \(\mathsf M\) on such tuples, write
\[
\|\mathfrak b^{\theta,\pi}\|_{2,\mathsf M}^2
=\int \mathfrak b_h^{\theta,\pi}(x,a;m,z_{h:H-1})^2
\,d\mathsf M(h,m,z_{h:H-1},x,a).
\]
Define
\[
\mathsf D_{\mathrm{log}}^\theta=\sum_{h=0}^{H-1}\EE_{\rho\sim\fp}\left[
\delta_{(h,m_h^{\star,\rho},\bar m_{h:H-1}^{\theta,\rho})}
\otimes \nu_h^{\star,\rho}\right],
\]
\[
\mathsf D_{\mathrm{tar}}^{\theta,\pi;\eta}=\sum_{h=0}^{H-1}
\delta_{(h,m_h^{\star,\eta},m_{h:H-1}^{\theta,\eta})}
\otimes \nu_h^{\star,\pi;\eta}.
\]
Put \(\gamma_h=H-h-1\) and \(a_H=1+(H-1)^2\).

\begin{lemma}[Bellman-residual value identity]
\label{lem:app-bellman-identity}
For every \(\theta\in\Theta\) and every \(\pi,\eta\in\Pi\),
\[
J_\theta(\pi;\eta)-J(\pi;\eta)
=\int \mathfrak b^{\theta,\pi}\,d\mathsf D_{\mathrm{tar}}^{\theta,\pi;\eta}.
\]
\end{lemma}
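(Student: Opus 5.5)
This is a simulation-lemma (telescoping) argument. We run the true representative-agent process under \(\PP_\star^{\pi;\eta}\) and, at every stage, evaluate the candidate suffix values along the fixed candidate path \(z=m_{0:H-1}^{\theta,\eta}\). The first step is to identify \(J_\theta(\pi;\eta)\) with \(\sum_x\mu_0(x)v_0^{\theta,\pi}(x,m_{0:H-1}^{\theta,\eta})\). Under model \(\theta\) with population \(\eta\), the representative agent faces the MDP with reward \(r_h^\theta(\cdot,\cdot,m_h^{\theta,\eta})\) and kernel \(T_h^\theta(\cdot\mid\cdot,\cdot,m_h^{\theta,\eta})\). Backward induction on the suffix recursion then shows that \(v_h^{\theta,\pi}(\cdot,m_{h:H-1}^{\theta,\eta})\) is the expected reward-to-go in that MDP. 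Likewise, \(J(\pi;\eta)=\EE_\star^{\pi;\eta}\bigl[\sum_h r_h^\star(X_h,A_h,m_h^{\star,\eta})\bigr]\) by definition.

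The second step is a telescoping sum. Abbreviate \(V_h\coloneq v_h^{\theta,\pi}(X_h,m_{h:H-1}^{\theta,\eta})\), with \(V_H=0\). Then
\[
J_\theta(\pi;\eta)-J(\pi;\eta)=\EE_\star^{\pi;\eta}\Bigl[\sum_{h=0}^{H-1}\bigl(V_h-V_{h+1}-r_h^\star(X_h,A_h,m_h^{\star,\eta})\bigr)\Bigr],
\]
which holds because \(X_0\sim\mu_0\) under both path laws. We condition each summand on \((X_h,A_h)\), using the facts that \(A_h\sim\pi_h(\cdot\mid X_h)\) and \(X_{h+1}\sim T_h^\star(\cdot\mid X_h,A_h,m_h^{\star,\eta})\) under \(\PP_\star^{\pi;\eta}\). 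Then \(\EE[V_h\mid X_h]=\sum_a\pi_h(a\mid X_h)q_h^{\theta,\pi}(X_h,a,m_{h:H-1}^{\theta,\eta})\). Substituting this and the transition law, the \(h\)-th summand becomes
\[
\EE_{(X_h,A_h)\sim\nu_h^{\star,\pi;\eta}}\bigl[\mathfrak b_h^{\theta,\pi}(X_h,A_h;m_h^{\star,\eta},m_{h:H-1}^{\theta,\eta})\bigr].
\]
Summing over \(h\) gives exactly the integral against \(\mathsf D_{\mathrm{tar}}^{\theta,\pi;\eta}\), since that measure places the Dirac mass \((h,m_h^{\star,\eta},m_{h:H-1}^{\theta,\eta})\) times \(\nu_h^{\star,\pi;\eta}\) at each stage. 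Here we use the expanded form of \(\mathfrak b\) with \(q_h^{\theta,\pi}\), which matches the main-text definition after expanding \(q\).

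The only real obstacle is the bookkeeping in the first step. The candidate value must be evaluated along the candidate self-consistent path \(m^{\theta,\eta}\), while the true process uses \(m^{\star,\eta}\), and the residual deliberately mixes the two. One must also check that the \(\bar m\), \(\widehat m\), and \(m\) notations do not intrude. Everything is a finite sum, so integrability is immediate.
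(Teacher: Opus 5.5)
Your proposal is correct and follows essentially the same route as the paper's proof. Both identify \(J_\theta(\pi;\eta)\) with \(\EE_{X_0\sim\mu_0}v_0^{\theta,\pi}(X_0,m_{0:H-1}^{\theta,\eta})\), telescope the candidate suffix values under \(\PP_\star^{\pi;\eta}\), and use \(A_h\sim\pi_h(\cdot\mid X_h)\) and \(X_{h+1}\sim T_h^\star(\cdot\mid X_h,A_h,m_h^{\star,\eta})\) to turn the \(h\)-th summand into the \(\nu_h^{\star,\pi;\eta}\)-average of \(\mathfrak b_h^{\theta,\pi}(\cdot;m_h^{\star,\eta},m_{h:H-1}^{\theta,\eta})\). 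One small point: \(V_h\) is already a function of \(X_h\), so the step you actually need is \(\EE[V_h]=\EE[q_h^{\theta,\pi}(X_h,A_h,m_{h:H-1}^{\theta,\eta})]\), rather than a conditional expectation given \(X_h\).
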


\begin{proof}
Fix \(\theta,\pi,\eta\), and write \(z_{h:H-1}=m_{h:H-1}^{\theta,\eta}\). The candidate value can be written as
\[
J_\theta(\pi;\eta)=\EE_{X_0\sim\mu_0}v_0^{\theta,\pi}(X_0,z_{0:H-1}).
\]
Under the true law \(\PP_\star^{\pi;\eta}\), the population path is \(m^{\star,\eta}\). Since \(v_H^{\theta,\pi}\equiv0\), telescoping gives
\[
\begin{aligned}
J_\theta(\pi;\eta)-J(\pi;\eta)
=\sum_{h=0}^{H-1}\EE_\star^{\pi;\eta}\Big[&
v_h^{\theta,\pi}(X_h,z_{h:H-1})-r_h^\star(X_h,A_h,m_h^{\star,\eta})\\
&-v_{h+1}^{\theta,\pi}(X_{h+1},z_{h+1:H-1})\Big].
\end{aligned}
\]
Because \(A_h\sim\pi_h(\cdot\mid X_h)\),
\[
\EE_{\star}^{\pi;\eta}\!\left[v_h^{\theta,\pi}(X_h,z_{h:H-1})\right]
=\EE_{\star}^{\pi;\eta}\!\left[q_h^{\theta,\pi}(X_h,A_h,z_{h:H-1})\right].
\]
Conditioning the last term on \((X_h,A_h)\) and using \(X_{h+1}\sim T_h^\star(\cdot\mid X_h,A_h,m_h^{\star,\eta})\) gives exactly
\[
\sum_{h=0}^{H-1}\EE_{(X,A)\sim\nu_h^{\star,\pi;\eta}}
\left[\mathfrak b_h^{\theta,\pi}
(X,A;m_h^{\star,\eta},m_{h:H-1}^{\theta,\eta})\right],
\]
which is the displayed integral.
\end{proof}

\begin{lemma}[Logged Bellman residual is controlled by prediction loss]
\label{lem:app-logged-bellman-loss}
For every \(\theta\in\Theta\) and every \(\pi\in\Pi\),
\[
\|\mathfrak b^{\theta,\pi}\|_{2,\mathsf D_{\mathrm{log}}^\theta}^2
\le a_H\cL(\theta).
\]
\end{lemma}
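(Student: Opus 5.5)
Fix \(\theta\), \(\pi\), a stage \(h\), a logged policy \(\rho\), and a pair \((x,a)\). Evaluate the residual at the logged tuple \(m=m_h^{\star,\rho}\), \(z_{h:H-1}=\bar m_{h:H-1}^{\theta,\rho}\). Split it into a reward part and a transition part:
\[
\mathfrak b_h^{\theta,\pi}
=\bigl(r_h^\theta(x,a,z_h)-r_h^\star(x,a,m)\bigr)
+\sum_{x'}\bigl(T_h^\theta(x'\mid x,a,z_h)-T_h^\star(x'\mid x,a,m)\bigr)v_{h+1}^{\theta,\pi}(x',z_{h+1:H-1}).
\]

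Since rewards lie in \([0,1]\), the continuation value \(v_{h+1}^{\theta,\pi}\) takes values in \([0,\gamma_h]\) with \(\gamma_h=H-h-1\). Centering \(v_{h+1}\) at \(\gamma_h/2\) shows the transition part is at most \(\gamma_h\,\tv(T_h^\theta,T_h^\star)\) in absolute value. (Without centering one gets \(2\gamma_h\tv\), which would lose a constant.) Writing \(e_r\) for the reward error and \(t\) for the TV distance, this gives \(|\mathfrak b|\le |e_r|+\gamma_h t\).

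Apply Cauchy--Schwarz with weights \((1,\gamma_h)\):
\[
\mathfrak b^2\le (1+\gamma_h^2)\,(e_r^2+t^2).
\]
Then use \(t^2\le 2\hel^2\) from Lemma~\ref{lem:app-distance-bounds}, and bound \(1+\gamma_h^2\le 1+(H-1)^2=a_H\). This yields the pointwise bound
\[
\mathfrak b_h^{\theta,\pi}(x,a;m_h^{\star,\rho},\bar m_{h:H-1}^{\theta,\rho})^2\le a_H\Bigl[2\hel^2\bigl(T_h^\star(\cdot\mid x,a,m_h^{\star,\rho}),T_h^\theta(\cdot\mid x,a,\bar m_h^{\theta,\rho})\bigr)+\bigl(r_h^\star(x,a,m_h^{\star,\rho})-r_h^\theta(x,a,\bar m_h^{\theta,\rho})\bigr)^2\Bigr].
\]
This holds for every \(\pi\), because only the range of \(v_{h+1}\) was used.

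Now integrate against \(\mathsf D_{\mathrm{log}}^\theta\), that is, take \(\EE_{\rho\sim\fp}\) and \((x,a)\sim\nu_h^{\star,\rho}\), and sum over \(h\). The right-hand side becomes exactly \(a_H\cL(\theta)\) by the definition \eqref{eq:population-prediction-loss}. The key check is that both sides evaluate the candidate model at \(\bar m_h^{\theta,\rho}\) and the true model at \(m_h^{\star,\rho}\); they do.

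The only delicate point is the transition term. One needs the centering trick to get \(\gamma_h\) rather than \(2\gamma_h\), and must note that Cauchy--Schwarz is used with the weights \((1,\gamma_h)\) so the constant comes out as \(1+\gamma_h^2\le a_H\). Everything else is direct substitution.
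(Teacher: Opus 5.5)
Your proof is correct and follows the paper's argument essentially step for step. It uses $0\le v_{h+1}^{\theta,\pi}\le\gamma_h$ to bound the transition term by $\gamma_h\tv(P,Q)$, then Cauchy--Schwarz in $\RR^2$ to get the factor $1+\gamma_h^2\le a_H$, then $\tv^2\le 2\hel^2$, and finally integration against $\mathsf D_{\mathrm{log}}^\theta$; your explicit centering of $v_{h+1}$ at $\gamma_h/2$ simply spells out a step the paper leaves implicit.
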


\begin{proof}
Fix \(h,\rho,x,a\), and write \(z_{h:H-1}=\bar m_{h:H-1}^{\theta,\rho}\), \(m=m_h^{\star,\rho}\). Since rewards lie in \([0,1]\), backward induction gives
\(0\le v_{h+1}^{\theta,\pi}(x',z_{h+1:H-1})\le\gamma_h\) for all \(x'\).
Thus, with \(P=T_h^\star(\cdot\mid x,a,m)\) and \(Q=T_h^\theta(\cdot\mid x,a,z_h)\),
\[
\left|\sum_{x'}(Q(x')-P(x'))v_{h+1}^{\theta,\pi}(x',z_{h+1:H-1})\right|
\le\gamma_h\tv(P,Q).
\]
Therefore,
\[
\left|\mathfrak b_h^{\theta,\pi}(x,a;m,z_{h:H-1})\right|
\le\left|r_h^\theta(x,a,z_h)-r_h^\star(x,a,m)\right|+\gamma_h\tv(P,Q).
\]
By Cauchy--Schwarz in \(\RR^2\) and Lemma~\ref{lem:app-distance-bounds},
\[
\begin{aligned}
\mathfrak b_h^{\theta,\pi}(x,a;m,z_{h:H-1})^2
&\le(1+\gamma_h^2)\left[
\left(r_h^\theta(x,a,z_h)-r_h^\star(x,a,m)\right)^2
+\tv(P,Q)^2\right]\\
&\le(1+\gamma_h^2)\left[
\left(r_h^\theta(x,a,z_h)-r_h^\star(x,a,m)\right)^2
+2\hel^2(P,Q)\right].
\end{aligned}
\]
Since \(1+\gamma_h^2\le a_H\), integrating this display over \(\rho\sim\fp\), \((x,a)\sim\nu_h^{\star,\rho}\), and summing over \(h\) gives the result by the definition of \(\cL(\theta)\).
\end{proof}

For \(s\ge0\), define the exact Bellman-transfer radius
\[
\operatorname{rad}_{\mathrm{Bell},s}(\pi;\eta)\coloneq
\sup_{\theta\in\Theta_s}\left|
\int \mathfrak b^{\theta,\pi}\,d\mathsf D_{\mathrm{tar}}^{\theta,\pi;\eta}
\right|,
\]
with the convention that the supremum of an empty set is zero. The coefficient used in the main text is
\[
\kappa(\pi;\eta;s)=\sup_{\theta\in\Theta_s}\frac{
\left(\displaystyle\int \mathfrak b^{\theta,\pi}\,d\mathsf D_{\mathrm{tar}}^{\theta,\pi;\eta}\right)^2
}{\|\mathfrak b^{\theta,\pi}\|_{2,\mathsf D_{\mathrm{log}}^\theta}^2},
\]
with \(0/0=0\) and positive-over-zero equal to \(+\infty\). For \(s>0\), set \(R_s(\pi;\eta)=\sqrt{a_Hs\,\kappa(\pi;\eta;s)}\). Then \(\operatorname{rad}_{\mathrm{Bell},s}(\pi;\eta)\le R_s(\pi;\eta)\).
If \(\kappa(\pi;\eta;s)=+\infty\), the claim is immediate.
Otherwise, the zero-denominator conventions imply that a zero logged residual must also have zero target integral.
Hence, for every \(\theta\in\Theta_s\), the definition of \(\kappa\) and Lemma~\ref{lem:app-logged-bellman-loss} give
\[
\left(\int \mathfrak b^{\theta,\pi}\,d\mathsf D_{\mathrm{tar}}^{\theta,\pi;\eta}\right)^2
\le\kappa(\pi;\eta;s)\|\mathfrak b^{\theta,\pi}\|_{2,\mathsf D_{\mathrm{log}}^\theta}^2
\le a_Hs\,\kappa(\pi;\eta;s).
\]

\begin{lemma}[One-sided interval control by Bellman-transfer radii]
\label{lem:app-one-sided-radii}
Suppose \(\theta^\star\in\cC\subseteq\Theta_s\). Then, for every \(\pi,\eta\in\Pi\),
\[
r^+_{\cC}(\pi;\eta)\le\operatorname{rad}_{\mathrm{Bell},s}(\pi;\eta),
\qquad
r^-_{\cC}(\pi;\eta)\le\operatorname{rad}_{\mathrm{Bell},s}(\pi;\eta).
\]
Consequently, for \(s>0\),
\[
r^+_{\cC}(\pi;\eta)\le R_s(\pi;\eta),
\qquad
r^-_{\cC}(\pi;\eta)\le R_s(\pi;\eta).
\]
\end{lemma}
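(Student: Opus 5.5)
The plan is to reduce both one-sided errors to the value-difference identity of Lemma~\ref{lem:app-bellman-identity}, and then use the radius bound established just before the statement. Fix \(\pi,\eta\in\Pi\). Since \(\cC\) is a subset of the finite set \(\Theta\), the supremum and infimum defining \(\overline J_{\cC}\) and \(\underline J_{\cC}\) are attained. We also have \(\theta^\star\in\cC\), so both \(r^\pm_{\cC}\) are nonnegative. Hence it suffices to bound \(|J_\theta(\pi;\eta)-J(\pi;\eta)|\) uniformly over \(\theta\in\cC\).

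For the upper error, write
\[
r^+_{\cC}(\pi;\eta)=\max_{\theta\in\cC}\bigl(J_\theta(\pi;\eta)-J(\pi;\eta)\bigr)\le\max_{\theta\in\cC}\bigl|J_\theta(\pi;\eta)-J(\pi;\eta)\bigr|.
\]
Similarly, \(r^-_{\cC}(\pi;\eta)=\max_{\theta\in\cC}\bigl(J(\pi;\eta)-J_\theta(\pi;\eta)\bigr)\) is bounded by the same quantity. By Lemma~\ref{lem:app-bellman-identity}, each difference equals \(\int\mathfrak b^{\theta,\pi}\,d\mathsf D_{\mathrm{tar}}^{\theta,\pi;\eta}\). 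The inclusion \(\cC\subseteq\Theta_s\) then shows the maximum over \(\cC\) is at most the supremum over \(\Theta_s\), which is \(\operatorname{rad}_{\mathrm{Bell},s}(\pi;\eta)\). The empty-set convention plays no role here, since \(\theta^\star\in\cC\) makes \(\Theta_s\) nonempty.

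For the consequence with \(s>0\), I would invoke the inequality \(\operatorname{rad}_{\mathrm{Bell},s}(\pi;\eta)\le R_s(\pi;\eta)\) proved just before the lemma. That argument combines the definition of \(\kappa\) with Lemma~\ref{lem:app-logged-bellman-loss} and \(\cL(\theta)\le s\). I would recheck the degenerate conventions: when \(\kappa=+\infty\) the bound is trivial, and when \(\kappa<\infty\) a zero logged residual forces a zero target integral.

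There is no real obstacle. The only point needing care is that Lemma~\ref{lem:app-bellman-identity} is used for arbitrary \(\theta\in\Theta\), including models whose transition supports differ from the truth. The identity is a finite telescoping sum of bounded quantities, so it holds without any support assumption.
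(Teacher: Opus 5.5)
Your proposal is correct and matches the paper's proof. You bound each one-sided error by $\sup_{\theta\in\cC}|J_\theta(\pi;\eta)-J(\pi;\eta)|$, rewrite that difference via Lemma~\ref{lem:app-bellman-identity}, enlarge the supremum from $\cC$ to $\Theta_s$, and then invoke $\operatorname{rad}_{\mathrm{Bell},s}\le R_s$ for $s>0$, which is exactly the paper's argument.
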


\begin{proof}
For the upper side,
\[
\begin{aligned}
r^+_{\cC}(\pi;\eta)
&=\sup_{\theta\in\cC}\{J_\theta(\pi;\eta)-J(\pi;\eta)\} \\
&\le\sup_{\theta\in\cC}|J_\theta(\pi;\eta)-J(\pi;\eta)| \\
&=\sup_{\theta\in\cC}\left|
\int \mathfrak b^{\theta,\pi}\,d\mathsf D_{\mathrm{tar}}^{\theta,\pi;\eta}
\right|\le\operatorname{rad}_{\mathrm{Bell},s}(\pi;\eta),
\end{aligned}
\]
where the equality is Lemma~\ref{lem:app-bellman-identity}. The lower side is identical. The coefficient form follows from \(\operatorname{rad}_{\mathrm{Bell},s}\le R_s\).
\end{proof}

\subsection{Proof of the adaptive Nash-gap theorem and rate corollary}
\label{app:nash-gap-proof}

\begin{proof}[Proof of Theorem~\ref{thm:adaptive-nash-gap}]
Apply Theorem~\ref{thm:app-deterministic-interval} with \(\cC=\cC(\beta)\).
Since \(\theta^\star\in\cC(\beta)\subseteq\Theta_s\), Lemma~\ref{lem:app-one-sided-radii} gives \(r^-_{\cC(\beta)}(\eta;\eta)\le R_s(\eta;\eta)\).
For every \(\tilde\pi\in\Pi\), it also gives \(r^+_{\cC(\beta)}(\tilde\pi;\eta)\le R_s(\tilde\pi;\eta)\).
Substituting these two inequalities into the deterministic oracle inequality proves
\[
\gap(\widehat\eta)\le\eopt+\gap(\eta)+R_s(\eta;\eta)
+\inf_{\tilde\pi\in\Pi}
\left\{R_s(\tilde\pi;\eta)+
\operatorname{subopt}^{+}_{\beta,\eta}(\tilde\pi)\right\}.
\]
\end{proof}

\begin{proof}[Proof of Corollary~\ref{cor:equilibrium-rate}]
On the event of Theorem~\ref{thm:app-finite-class-confidence}, \(\theta^\star\in\cC(\beta)\subseteq\Theta_{s_{N,K}}\).
Theorem~\ref{thm:adaptive-nash-gap} therefore gives, for every \(\eta\in\Pi\),
\[
\gap(\widehat\eta)\le\eopt+\gap(\eta)+R_{s_{N,K}}(\eta;\eta)
+\inf_{\tilde\pi\in\Pi}\left\{R_{s_{N,K}}(\tilde\pi;\eta)+
\operatorname{subopt}^{+}_{\beta,\eta}(\tilde\pi)\right\}.
\]
For fixed \(\eta\), choose a sequence \((\tilde\pi_j)_{j\ge1}\) such that \(\operatorname{subopt}^{+}_{\beta,\eta}(\tilde\pi_j)\downarrow0\); such a sequence exists by the definition of the supremum in \(\operatorname{subopt}^{+}_{\beta,\eta}\).
Since \(R_{s_{N,K}}(\tilde\pi_j;\eta)\le\sqrt{a_Hs_{N,K}}\sup_{\pi\in\Pi}\sqrt{\kappa(\pi;\eta;s_{N,K})}\), letting \(j\to\infty\) gives
\[
\gap(\widehat\eta)\le\eopt+\gap(\eta)+
\sqrt{a_Hs_{N,K}}\sqrt{\kappa(\eta;\eta;s_{N,K})}
+\sqrt{a_Hs_{N,K}}\sup_{\pi\in\Pi}\sqrt{\kappa(\pi;\eta;s_{N,K})}.
\]
Taking the infimum over \(\eta\in\Pi\) proves the first display of the corollary. If \(\eta_\star\in\Pi_\star\), then \(\gap(\eta_\star)=0\), and substituting \(\eta=\eta_\star\) yields
\[
\gap(\widehat\eta)\le\eopt+2\sqrt{a_Hs_{N,K}}\max\left\{
\sqrt{\kappa(\eta_\star;\eta_\star;s_{N,K})},
\sup_{\pi\in\Pi}\sqrt{\kappa(\pi;\eta_\star;s_{N,K})}
\right\}.
\]
\end{proof}

\subsection{Scalar and \texorpdfstring{\(L^2\)}{L2} Bellman coverage}
\label{app:kappa-coverage}

Define
\[
\mathscr{C}(\pi;\eta;s)=\sup_{\theta\in\Theta_s}\frac{
\|\mathfrak b^{\theta,\pi}\|_{2,\mathsf D_{\mathrm{tar}}^{\theta,\pi;\eta}}^2
}{\|\mathfrak b^{\theta,\pi}\|_{2,\mathsf D_{\mathrm{log}}^\theta}^2},
\]
with \(0/0=0\) and positive-over-zero equal to \(+\infty\).

\begin{proposition}[\(L^2\) Bellman coverage controls scalar Bellman transfer]
\label{prop:app-kappa-vs-l2-coverage}
For every \(\pi,\eta\in\Pi\) and every \(s\ge0\),
\[
\kappa(\pi;\eta;s)\le H\mathscr{C}(\pi;\eta;s).
\]
\end{proposition}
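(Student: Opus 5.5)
The proof compares the two ratios term by term, for each fixed \(\theta\in\Theta_s\), and then takes suprema. Fix \(\pi,\eta,s\) and \(\theta\in\Theta_s\). Abbreviate the target numerator as \(I_\theta\coloneq\int \mathfrak b^{\theta,\pi}\,d\mathsf D_{\mathrm{tar}}^{\theta,\pi;\eta}\), the target \(L^2\) norm as \(T_\theta\coloneq\|\mathfrak b^{\theta,\pi}\|_{2,\mathsf D_{\mathrm{tar}}^{\theta,\pi;\eta}}^2\), and the logged denominator as \(L_\theta\coloneq\|\mathfrak b^{\theta,\pi}\|_{2,\mathsf D_{\mathrm{log}}^\theta}^2\).

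The key step is Cauchy--Schwarz with respect to the finite measure \(\mathsf D_{\mathrm{tar}}^{\theta,\pi;\eta}\). This measure is a sum over \(h\in\{0,\ldots,H-1\}\) of a Dirac mass tensored with the probability law \(\nu_h^{\star,\pi;\eta}\), so its total mass is exactly \(H\). Hence
\[
I_\theta^2=\left(\int \mathfrak b^{\theta,\pi}\cdot 1\,d\mathsf D_{\mathrm{tar}}^{\theta,\pi;\eta}\right)^2\le \mathsf D_{\mathrm{tar}}^{\theta,\pi;\eta}(\text{total})\cdot T_\theta = H\,T_\theta .
\]
Integrability is automatic, since everything is a finite sum over \(h\) and finite \(\cX\times\cA\), and \(\mathfrak b\) is bounded.

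Next, I transfer this inequality to the ratios, handling the conventions case by case.
\begin{enumerate}
\item If \(L_\theta>0\), dividing by \(L_\theta\) gives \(I_\theta^2/L_\theta\le H\,T_\theta/L_\theta\).
\item If \(L_\theta=0\) and \(T_\theta>0\), the \(\mathscr C\)-ratio is \(+\infty\), so the inequality holds trivially.
\item If \(L_\theta=0\) and \(T_\theta=0\), the displayed bound forces \(I_\theta=0\). The \(\kappa\)-ratio is then \(0/0=0\), which is at most \(H\cdot 0\).
\end{enumerate}
In every case the \(\kappa\)-ratio for \(\theta\) is at most \(H\) times the \(\mathscr C\)-ratio for \(\theta\), in the extended reals.

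Finally, take the supremum over \(\theta\in\Theta_s\) on both sides to obtain \(\kappa(\pi;\eta;s)\le H\,\mathscr C(\pi;\eta;s)\). If \(\Theta_s\) is empty, both suprema are equal under the same convention, so the inequality still holds.

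There is no real obstacle here; the only care needed is the bookkeeping of the \(0/0\) and positive-over-zero conventions in the case analysis above.
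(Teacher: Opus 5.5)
Your proof is correct and follows the paper's argument: Cauchy--Schwarz against the target measure of total mass $H$, then division by the logged norm and a supremum over $\Theta_s$. Your explicit case analysis of the $0/0$ and positive-over-zero conventions spells out the bookkeeping that the paper handles in a single closing clause.
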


\begin{proof}
Fix \(\theta\in\Theta_s\). Since \(\mathsf D_{\mathrm{tar}}^{\theta,\pi;\eta}\) is a sum of \(H\) probability measures, its total mass is \(H\). Cauchy-Schwarz inequality gives
\[
\left(\int \mathfrak b^{\theta,\pi}\,d\mathsf D_{\mathrm{tar}}^{\theta,\pi;\eta}\right)^2
\le\left(\int 1^2\,d\mathsf D_{\mathrm{tar}}^{\theta,\pi;\eta}\right)
\left(\int (\mathfrak b^{\theta,\pi})^2\,d\mathsf D_{\mathrm{tar}}^{\theta,\pi;\eta}\right)
=H\|\mathfrak b^{\theta,\pi}\|_{2,\mathsf D_{\mathrm{tar}}^{\theta,\pi;\eta}}^2.
\]
Dividing by \(\|\mathfrak b^{\theta,\pi}\|_{2,\mathsf D_{\mathrm{log}}^\theta}^2\) and taking the supremum over \(\theta\in\Theta_s\) proves the claim, with the stated zero-denominator conventions.
\end{proof}

\begin{remark}[Scalar versus \(L^2\) Bellman coverage]
The coefficient \(\mathscr{C}\) controls the full target \(L^2\) Bellman-residual energy. The coefficient \(\kappa\) controls only the signed Bellman-residual integral that equals the value error. Therefore \(\mathscr{C}\) is the more literal Bellman-error coverage ratio, while \(\kappa\) is the sharper quantity for the interval proof.
\end{remark}

\section{Routing Experiment Details}
\label{app:experiments}

\subsection{Scope and relation to the theory}
\label{app:exp-scope}

The theoretical model allows finite horizons, transition learning, finite candidate classes, and planning over a confidence set.
The empirical study isolates the population-representation component.
It uses \(H=1\), deterministic reward labels, continuous neural reward models, and planning in one fitted reward game.
It therefore tests whether a task-relevant mean-field representation can be learned from finite offline population snapshots; it is not our aim to implement the confidence-set algorithm or validate the finite-sample bound of Section~\ref{sec:theory}.

Each row has an independently generated population context rather than the one fixed initial law appearing in the theoretical statement.
Conditional on that context, the focal state--action pair and the population snapshot are independent and have the observation structure of Section~\ref{sec:offline}.
The study thus measures generalization across population laws while preserving separate finite-\(N\) and finite-\(K\) effects.

\subsection{Environment and offline context distribution}
\label{app:exp-routing-environment}
\label{app:exp-routing-data}

The edge set is ordered deterministically: first the 20 origin-to-\(U\) edges, then the 16 \(U\)-to-\(V\) edges, and finally the 20 \(V\)-to-destination edges.
For \(x=(i,j)\) and \(a=(u,v)\), the vector \(f^\star(x,a)\) has ones at the coordinates of \(O_i\to U_u\), \(U_u\to V_v\), and \(V_v\to D_j\), and zeros elsewhere.
Consequently, the edge loads within each of the three layers sum to one for every population law.

The latency and route reward are given in \eqref{eq:exp-routing-latency}--\eqref{eq:exp-routing-reward}.
Since each edge load lies in \([0,1]\), the normalization
\begin{equation}
\label{eq:exp-routing-cmax}
C_{\max}=\max_{x,a}\left\{\delta_{x,a}+
\sum_{e=1}^{56} f_e^\star(x,a)
\left(\tau_e+\frac{\alpha_e}{c_e}+\frac{\beta_e}{c_e^2}\right)\right\}
\end{equation}
guarantees \(r^\star\in[0,1]\).
We draw the asymmetric edge parameters and route offsets once and hold the resulting network fixed across all training seeds and all \((N,K)\)-coordinates.

\paragraph{Demand law.}
A context first generates product-form origin--destination demand.
Let \(g^O,g^D\in\RR^5\) have independent standard-normal coordinates, and let \(T_O,T_D>0\) be independent temperatures.
With a fixed uniform floor \(\varepsilon_\mu\), set
\begin{align}
q_i^O &=(1-\varepsilon_\mu)
\frac{\exp(g_i^O/T_O)}{\sum_{i'}\exp(g_{i'}^O/T_O)}
+\frac{\varepsilon_\mu}{5},\\
q_j^D &=(1-\varepsilon_\mu)
\frac{\exp(g_j^D/T_D)}{\sum_{j'}\exp(g_{j'}^D/T_D)}
+\frac{\varepsilon_\mu}{5},
\end{align}
and \(\mu_c(i,j)=q_i^Oq_j^D\).
The product form is deliberate: the raw population mean identifies the complete demand law through its origin and destination marginals.

\paragraph{Behavior policy.}
The behavior population uses structured route preferences.
Draw independent standard-normal arrays \(\xi^{OU}_{i,u}\), \(\xi^{UV}_{u,v}\), and \(\xi^{VD}_{v,j}\), and a route-choice temperature \(T_\rho>0\).
Define
\begin{equation}
\label{eq:exp-routing-behavior-logit}
s_{i,j,u,v}=-\kappa_{\mathrm{ff}}\bigl(\tau_{i,u}+\tau_{u,v}+\tau_{v,j}\bigr)
+\sigma_{OU}\xi^{OU}_{i,u}
+\sigma_{UV}\xi^{UV}_{u,v}
+\sigma_{VD}\xi^{VD}_{v,j},
\end{equation}
and, with exploration floor \(\varepsilon_\rho\),
\begin{equation}
\label{eq:exp-routing-behavior-policy}
\rho_c(u,v\mid i,j)=(1-\varepsilon_\rho)
\frac{\exp(s_{i,j,u,v}/T_\rho)}
{\sum_{u',v'}\exp(s_{i,j,u',v'}/T_\rho)}
+\frac{\varepsilon_\rho}{16}.
\end{equation}
The temperatures are drawn from fixed log-uniform ranges.
These ranges, the exploration floors, and all scalar coefficients are fixed across models and resource coordinates.

\paragraph{Rows and nested resource grids.}
For each context, we compute \(\mu_c\), \(\rho_c\), \(\nu_c=\mu_c\otimes\rho_c\), and \(m_c^\star\) analytically.
We then draw one focal pair and an independent ordered stream of population samples.
We experiment with \(N\in\{1000,5000,25000,100000,200000\}\) and \(K\in\{1,2,4,8,16,32,64,128,256,512,1024\}\).
Smaller \(N\)-datasets are prefixes of a common row stream, and smaller \(K\)-snapshots are prefixes of the same ordered population stream within each row.
Training, validation, and final evaluation use disjoint random-number streams.

\subsection{Reward-model architectures and fitting}
\label{app:exp-routing-models}

The common focal encoder uses the 26-dimensional encoding \(z(x,a)\) and has shape \(26\to64\to64\to32\), with SiLU nonlinearities.
The learned model's encoder for each population sample and the monolithic population network have shape \(26\to64\to64\to56\).
The oracle adapter has shape \(56\to64\to64\to56\).
After concatenating the focal and population embeddings, the common reward head has shape \(88\to128\to64\to1\), followed by a sigmoid.
The single-agent network is widened to keep its total parameter count comparable.

The full-population-law branch receives the 400-entry empirical histogram and has shape \(400\to20\to20\to56\).
Under the count-only parameter-matching rule, the complete full-law model has 37297 parameters, compared with 37209 for the learned mean-field model, a difference of 0.237\%.
It is trained independently at every one of the 66 \((N,K)\)-coordinates.

Every model minimizes scalar reward mean-squared error with AdamW, learning rate \(10^{-3}\), weight decay \(10^{-5}\), effective row batch size 256, gradient-norm clipping at one, validation every 250 updates, and a maximum of 40000 optimizer updates.
The selected checkpoint is the strict minimum of validation MSE under the model's actual training input: no population input for the single-agent model, finite snapshots for the monolithic, learned, finite-\(K\)-oracle, and full-law models, and the analytic representation for the infinite-population oracle.
Final population RMSE and true Nash gap are not available to the trainer or checkpoint selector.

For every fitted checkpoint not reused from an existing grid coordinate, validation MSE attained its minimum at the final scheduled evaluation, so these fits share a common fixed update budget.

\subsection{Exact reward evaluation}
\label{app:exp-routing-evaluation}

For an evaluation context \(c\), all 400 focal categories are enumerated.
The exact population inputs are
\begin{align}
\overline m_c^{\mathrm{learned}} &=\sum_{x,a}\nu_c(x,a)f^\theta(x,a),\\
\overline m_c^{\mathrm{mono}} &=g^\theta\!\left(\sum_{x,a}\nu_c(x,a)z(x,a)\right),\\
\overline m_c^{\mathrm{oracle}\text{-}K} &=\overline m_c^{\mathrm{oracle}\text{-}\infty}=m_c^\star,\\
\overline m_c^{\mathrm{full}} &=h^\theta(\nu_c),
\end{align}
where \(g^\theta\) is the monolithic population network and \(h^\theta\) is the full-law population network.
The single-agent model has no population branch.
Thus the learned model is evaluated using the expected encoding of a population sample under \(\nu_c\), while the full-law model is evaluated at the exact 400-entry law; neither is handed the edge-load oracle unless that input is part of the model's definition.

The inner sum in~\eqref{eq:exp-routing-rmse} is exact and accumulated in float64.
The outer expectation over contexts is approximated by eight independently Owen-scrambled Sobol blocks shared across models and seeds.

\subsection{Fitted-game planning and exact controls}
\label{app:exp-routing-planning}
For a target demand law \(\mu\) and a fitted reward model, a candidate policy \(\eta\) is evaluated using that model's exact population input under \(\mu\otimes\eta\).
Let \(\widehat r_\theta^\eta(x,a)\) denote the resulting reward table.
The fitted-model residual is
\begin{equation}
\label{eq:exp-routing-model-gap}
\widehat\gap_{\mu,\theta}(\eta)=\sum_x\mu(x)\left[
\max_a\widehat r_\theta^\eta(x,a)-
\sum_a\eta(a\mid x)\widehat r_\theta^\eta(x,a)
\right].
\end{equation}
This quantity can be computed without consulting the true environment.

We apply the same 12-candidate optimization portfolio to every fitted reward model.
Entropic mirror descent and entropic Mirror--Prox each start from the uniform policy and use step sizes in \(\{5,20,30\}\), producing six candidates.
The other six candidates minimize a smoothed Nash-gap objective with Adam, using learning rates in \(\{0.03,0.1\}\), temperatures \(0.1,0.03,0.01\) in sequence, and three deterministic initial policies.
Each candidate optimization uses at most 5000 iterations, checks the fitted-model residual every 10 iterations, and stops early when that residual is at most \(10^{-5}\).
These settings and selection rules are applied uniformly across all fitted models.
The entropic mirror-descent update is
\begin{equation}
\label{eq:exp-routing-md-update}
\tilde\eta_{t+1}(a\mid x)=
\frac{\eta_t(a\mid x)\exp(\gamma_t\widehat r_\theta^{\eta_t}(x,a))}
{\sum_b\eta_t(b\mid x)\exp(\gamma_t\widehat r_\theta^{\eta_t}(x,b))}.
\end{equation}
For each checkpoint and target demand, the returned candidate is the one with the smallest fitted-model residual~\eqref{eq:exp-routing-model-gap}.
The true reward is never used for candidate selection.
The selected policy is then evaluated under the true reward via~\eqref{eq:exp-routing-gap}.

The \emph{uniform policy} is \(\eta_{\mathrm{unif}}(a\mid x)=1/16\).
The \emph{exact-reward control} applies the same 12-candidate portfolio after replacing the learned reward by the analytic reward in~\eqref{eq:exp-routing-reward}.
It isolates the optimization error of the common planner from reward-model error.

\subsection{Statistical protocol and reproducibility}
\label{app:exp-routing-statistics}

We compute uncertainty across the five training seeds.
Within a seed, models share contexts, focal samples, ordered population streams, nested resource prefixes, and the fixed quasi--Monte Carlo context samples used for evaluation.
For each seed, we first average the metric over all of these evaluation context samples.
If \(L_s\) is the resulting metric for seed \(s\in\{0,\ldots,4\}\), we report
\[
\overline L\pm t_{0.975,4}\frac{\operatorname{sd}(L_0,\ldots,L_4)}{\sqrt5}.
\]
Thus every ``\(\pm\)'' quantity in the primary landscape is a 95\% Student-\(t\) half-width over five paired training seeds, not an integration error bar.
QMC integration uncertainty is assessed separately from variation across training seeds.

\end{document}